\theoremstyle{plain}
\newtheorem{theorem}{Theorem}
\newtheorem{lemma}{Lemma}
\newtheorem{proposition}{Proposition}
\newtheorem{assumption}{Assumption}
\theoremstyle{definition}
\newtheorem{definition}{Definition}
\newtheorem{remark}{Remark}
\newtheorem*{remark*}{Remark}
\newtheorem{example}{Example}
\theoremstyle{remark}
\newcommand{\expof}{w}
\newcommand{\exposet}{\mathcal W}
\newcommand{\expov}{\mathrm{w}} 
\newcommand{\nei}{\mathcal{N}}
\newcommand{\obs}{\mathrm{obs}}
\newcommand{\Zobs}{Z^{\obs}}
\newcommand{\Yobs}{Y^{\obs}}
\newcommand{\Tobs}{T^{\obs}}
\newcommand{\pval}{\mathrm{pval}}
\newcommand{\mS}{\mathcal S}
\newcommand{\foc}{\mathrm{foc}}
\newcommand{\rand}{\mathrm{rand}}
\newcommand{\ef}{\mathsf{E}_{\foc}}
\newcommand{\er}{\mathsf{E}_{\rand}}
\newcommand{\af}{\mathsf{A}_{\foc}}
\newcommand{\ar}{\mathsf{A}_{\rand}}
\newcommand{\mU}{\mathcal U}
\newcommand{\mZ}{\mathcal Z}
\newcommand{\mC}{\mathcal C}
\title{Finite-Sample Valid Randomization Tests for \\ 
Monotone Spillover Effects}
\author{Shunzhuang Huang\thanks{University of Chicago, Booth School of Business, Chicago, IL, USA. \url{shunzhuang.huang@chicagobooth.edu}}
\and Xinran Li\thanks{University of Chicago, Department of Statistics, Chicago, IL, USA. \url{xinranli@uchicago.edu}}
\and Panos Toulis\thanks{University of Chicago, Booth School of Business, Chicago, IL, USA. \url{panos.toulis@chicagobooth.edu}\\
PT acknowledges support from NSF SES-2419009. All authors wish to thank Azeem Shaikh and Chris Hansen for valuable feedback and comments.}
}
\date{February 23, 2025}
\begin{document}
\maketitle

{\onehalfspacing
\begin{abstract}
Randomization tests have gained popularity for causal inference under network interference because they are finite-sample valid with minimal assumptions. However, existing procedures are limited as they primarily focus on the existence of spillovers through sharp null hypotheses on potential outcomes.
In this paper, we expand the scope of randomization procedures in network settings by developing new tests for the monotonicity of spillover effects.
These tests offer insights into whether spillover effects increase, decrease, or exhibit ``diminishing returns" along certain network dimensions of interest. Our approach partitions the network into multiple (possibly overlapping) parts and tests a monotone contrast hypothesis in each sub-network. The test decisions can then be aggregated in various ways depending on how each test is constructed. 
We demonstrate our method by re-analyzing a large-scale policing experiment in Colombia, which reveals evidence of monotonicity related to the ``crime displacement hypothesis". Our analysis suggests that crime spillovers on a control street increase with the number of nearby streets receiving more intense policing but diminish at higher exposure levels.
\end{abstract}}

{\bf Keywords}: causal inference, interference, monotone spillover, network, randomization test

\newpage
\section{Introduction}
In many real-world experimental settings, the treatment effect on some units may depend on 
the treatments administered to other units through a network. The presence of such spillover effects violates the classical ``no interference" assumption~\citep{cox1958planning, rubin1980randomization} and presents unique methodological challenges in the estimation of treatment effects.

To address this problem, a recent line of research in causal inference under interference has leveraged the classical Fisherian randomization test~\citep{fisher1935}. 
The key idea is to condition on a subset of units and treatment assignments for which potential outcomes are imputable under the null hypothesis of no spillover effect~\citep{athey2018exact,basse2019randomization,puelz2022graph}, and perform a {\em conditional} 
randomization test.
The resulting procedures are valid in finite samples without requiring a correct specification of the outcome model and are generally straightforward to implement in practice.

However, the scope of these procedures is currently limited to testing the sharp equality of potential outcomes under certain levels of treatment exposure; e.g., testing whether a unit's outcomes are unaffected by having zero or at least one treated neighbor, holding the unit's individual treatment fixed.
While such sharp tests can be useful for understanding the existence and magnitude of spillover effects, 
they remain silent on how the spillover effect varies as a function of treatment exposure.

In this paper, we expand the scope of existing randomization procedures to null hypotheses related to the monotonicity of spillover effects. 
We allow treatment exposure to take values in a totally ordered set ---e.g., 0, 1, 2, or more treated neighbors--- and test whether higher treatment exposure leads to better or worse outcomes.  
The key idea underlying our test is to split the network into multiple, possibly overlapping, sub-networks, and use each sub-network to test a single monotone hypothesis on treatment exposure. We then combine the separate test results while controlling for possible dependence between these tests. 
The final procedure is finite-sample valid and is straightforward to implement, especially under Bernoulli randomized designs.

We apply our procedure to re-examine a large-scale policing experiment in Medellín, Colombia conducted by~\cite{collazos2021hot}. The goal of our analysis is to test whether crime spillovers on a control street are monotone with respect to the number of neighboring treated streets.
Our results provide insights related to the ``crime displacement hypothesis" and may be relevant for crime prevention policies. Specifically, we find that a control street is negatively affected by being exposed to nearby streets that are treated with more intense policing, in a way that is monotone in the total number of treated neighboring streets. In addition, we find evidence of ``diminishing returns", where the magnitude of the crime spillover effect diminishes at higher levels of exposure to treated neighboring streets. 

Our proposed methodology builds upon existing methodologies in the randomization inference literature.  First, we leverage existing methodologies from 
conditional randomization tests under network interference~\citep{athey2018exact,basse2019randomization,puelz2022graph,basse2024randomization}.
While these procedures are finite-sample valid for testing certain types of causal effects under network interference, they are not applicable for monotone null hypotheses.
To address this limitation, we leverage recent results in testing
bounded null hypotheses under no interference~\citep{caughey2023bounded}, and extend the related methodology to settings with interference.
Furthermore, our network-splitting strategy outlined above utilizes the idea of approximate evidence factors developed by~\cite{rosenbaum2011some}. 
Notably,~\cite{zhang2021multiple} also use this idea to construct conditional randomization tests in a recursive manner. This recursive construction, however, is largely context-specific and not directly applicable to our monotone hypothesis.

The rest of the paper is organized as follows. Section~\ref{sec:setup} introduces the general setup, and provides the formal definition of our monotone hypothesis. 
We then outline the proposed method in Section~\ref{sec:overview}. In Section~\ref{sec:nonunifBRE} we present a detailed description of our method tailored to non-uniform Bernoulli designs. More general experimental designs are considered in Section~\ref{sec:method_clique}. 
In Section~\ref{sec:simu}, we show the validity and power of our tests through simulated studies. In Section~\ref{sec:realMed}, we apply our methodology to the Medellín data. Section~\ref{sec:conclusion} concludes the paper.

\section{Problem Setup}\label{sec:setup}
Consider a finite population of $N$ units indexed by $i \in [N] := \{1,2,\ldots,N\}$. The treatment for unit $i$ is denoted as $Z_i \in \{0,1\}$. The population treatment is the column vector $Z := (Z_1,Z_2,\ldots,Z_N) \in \{0,1\}^N$, and follows a known treatment assignment design $Z \sim P(Z)$. 
Vector $Z_S$ will denote the sub-vector of $Z$ that corresponds to the units in $S \subseteq [N]$.
The potential outcome for unit $i$ under population treatment $z \in \{0,1\}^N$ is denoted as $Y_i(z) \in \mathbb R$, and $Y(z) := (Y_1(z), Y_2(z), \ldots, Y_N(z))$ is the potential outcome (column) vector for the whole population. 
These potential outcomes are fixed under the randomization framework. 
Let $\Zobs \sim P(\Zobs)$ be the observed treatments and $\Yobs = Y(\Zobs)$ be the observed outcomes. Each unit $i$ may also have covariates $X_i \in \mathbb{R}^p$, which could include pre-treatment outcomes.

Under the classical ``stable unit treatment value assumption"~\citep{rubin1980randomization}, a unit's potential outcome depends solely on its treatment $Z_i$, so that $Y_i(Z) = Y_i(Z')$ for any unit $i$ and any assignments $Z,Z'$ for which $Z_i=Z_i'$. Under interference, this assumption is no longer plausible. However, without any restrictions on interference, each unit may have $2^N$ different potential outcomes, which is computationally prohibitive.
To make progress, we follow a common approach and assume a low-dimensional summary of interference~\citep{hong2006evaluating}, also known as ``treatment exposure"~\citep{aronow2017estimating} or ``effective treatment"~\citep{manski2013identification}.

\begin{assumption}[Treatment Exposure]
\label{a:exposure}
There exists a finite set $\exposet$ endowed with an equality and ordering relation, exposure mapping functions $\expof_i: \{0,1\}^N \to \exposet$, and potential outcome functions $y_i: \{0,1\} \times \exposet \to \mathbb{R}$ for $i\in[N]$, such that
\begin{equation}
\label{eq:exposure}    
Y_i(z) = y_i(z_i, \expof_i(z))~\text{for all}~i,~z.
\end{equation}
In many settings where the exposure mapping function for some unit $i$ may depend only on treatments of a subset of units $S \subseteq [N]$, i.e., $\expof_i(z) = \expof_i(z')$ for any $z,z'$ such that $z_{S} = z_{S}'$, we will overload the notation and write $\expof_i(z_S)$ in place of $\expof_i(z)$. Set $S$ may vary across $i$.
\end{assumption}
Here, $\expof_i(z)$ is the treatment exposure of unit $i$ under population treatment $z\in\{0,1\}^N$ due to interference. 
We note that the above definition of exposure mappings is not unique. For example, any monotone transformation of $\expof_i$, or any set $\exposet'$ defined more ``finely" than $ \exposet$ would also satisfy Assumption~\ref{a:exposure}. 
In this paper, we will focus on a special type of treatment exposure arising from interference due to a network between units, which we introduce below.

Let $\mathcal G = (V,E)$ be a network with vertex set $V = [N]$, edge set $E$, and adjacency matrix $A \in \{0,1\}^{N\times N}$, representing connections between units (e.g., friendship or geographical proximity). $A_{ij}=1$ if units $i$ and $j$ share a connection, and we assume this matrix to be symmetric with no self-loops.
For each $i \in V$ define the set of neighboring nodes as $\nei_i:= \{j \in V: (i,j) \in E\}$, and $\nei(S) := \bigcup_{i \in S} \nei_i$ for any $S \subseteq V$. 

Although our methodology is not tied to a particular exposure definition, we will mainly work with the following exposure function throughout the paper unless stated otherwise.
\begin{equation}\label{eq:neightreated_cont}
\expof_i(z) = \sum_{j\in[N]} A_{ij} z_j = \sum_{j \in \nei_i} z_j = \expof_i(z_{\nei_i}).
\end{equation}
The definition in Equation~\eqref{eq:neightreated_cont} implies that the potential outcomes of a unit are affected by the number of direct neighbors of $i$ that are treated under $z$. In this case, $\exposet = \{0, 1, \ldots, d_{\max}\}$ is the set of all possible exposures, where $d_{\max}$ is the maximum degree of $\mathcal G$. 
A natural ordering for this set is the usual ``less than or equal to" relation.

An important departure of our work from prior literature relates to the ordering of treatment exposures in $\exposet$. In previous work on exact randomization tests under network interference, treatment exposure on a unit $i$ is defined as whether any immediate neighbors of $i$ are treated~\citep{basse2019randomization, puelz2022graph}, or as the treatment sub-vector corresponding to all units within a certain distance from $i$~\citep{athey2018exact}.
In such cases, different values of treatment exposure just correspond to different levels of interference without a particular ordering. In our work, treatment exposures have a natural ordering, and the goal is to test whether such ordering in exposures induces an ordering on the potential outcomes as well. We turn to this question next.

\subsection{Null hypothesis of monotone spillover effects}
Suppose that we have an ordered exposure set (e.g., ``number of treated neighbors") 
$\exposet = \{ \expov_1, \expov_2, \ldots, \expov_K \}$ such that $\expov_j \leq \expov_{k} \Leftrightarrow j \leq k$, where equality is attained if and only if $j = k$.
The central goal of our paper is to test the following {\em monotone decreasing null hypothesis}:
\begin{equation}\label{eq:null_mono}
    H_0: \quad y_i(0, \expov_1) ~\geq~ y_i(0, \expov_2) ~\geq~ \cdots ~\geq~ y_i(0, \expov_K), \quad \forall i \in [N].
\end{equation}
Equivalently, we are testing $K-1$ hypotheses of the form
\begin{equation}\label{eq:null_mono_k}
    H_{0k}: \quad y_i(0, \expov_k) ~\geq~ y_i(0, \expov_{k+1}), \quad \forall i \in [N],
\end{equation}
for $k \in [K-1]$, such that $H_0 = \bigcap_{k \in [K-1]} H_{0k}$ can be defined as an intersection hypothesis. Note that the monotone increasing null hypothesis, $y_i(0, \expov_1) \leq y_i(0, \expov_2) \leq \cdots \leq y_i(0, \expov_K)$, can also be tested by flipping either the exposure labels or the sign of the outcomes.

Under the network interference with the exposure model in~\eqref{eq:neightreated_cont}, $H_0$ means that as more of $i$'s neighbors are treated, the outcome of $i$ changes in a monotone decreasing manner, even when $i$'s own treatment stays fixed.
There are many real-world settings where such a question is of scientific interest. We describe some illustrative examples below.

\begin{example}[Crime spillovers]\label{eg:crime_spill}
The ``crime displacement hypothesis" posits that crime prevention efforts in one area may shift criminal activities to other areas in situations where offenders can be mobile. Recent studies, however, also suggest an opposite effect where the benefits of crime prevention extend to nearby areas that were not directly targeted by the intervention~\citep{guerette2017assessing}. 
In this context, understanding whether spillover effects are affected monotonically by the strength of crime prevention efforts in nearby areas, and whether these effects saturate at certain levels of exposure, is of significant practical interest. While numerous studies examine the existence of crime spillover effects\footnote{See, for example, ~\cite{blattman2021place,collazos2021hot} and \cite{guerette2017assessing} for a review.}, to our best knowledge none have studied the monotonicity of these effects.
\end{example}

\begin{example}[Social networks]
During Covid-19, 
\cite{breza2021effects} conducted a cluster saturation randomized design to study the effect of stay-at-home messages through social media on people's mobility. 
The researchers observed a greater reduction in Covid-19 cases in control areas within clusters treated with high-intensity messaging compared to control areas within clusters treated with low-intensity messaging.
This observation can be framed into a monotone spillover hypothesis of Equation~\eqref{eq:null_mono} in our setup, where $Y$ measures Covid-19 cases and the exposure is the treatment intensity of the unit's cluster. 
\end{example}

Despite its significance, $H_0$ is challenging to test in finite samples via randomization tests. One key problem is that $H_0$ is a ``non-sharp hypothesis", which does not allow the imputation of all missing potential outcomes.
While recent randomization-based methods have dealt with certain classes of non-sharp hypotheses~\citep{athey2018exact, basse2019randomization, puelz2022graph}, these methods rely 
on being able to reduce a non-sharp hypothesis into a sharp hypothesis by appropriately conditioning the test on a subset of the data. Such a reduction is not possible in our setting because the ordering relation in Equation~\eqref{eq:null_mono} generally cannot ``pin down" the missing potential outcomes, except perhaps in certain limited cases, such as when outcomes are binary.

As we will see in the next section, our proposed method addresses this challenge by combining two distinct approaches in constructing randomization tests for non-sharp null hypotheses: one approach tests a ``sharper" version of $H_{0k}$ with a conditional Fisherian randomization test, and another approach extends the conditional test on the sharp null towards testing the corresponding non-sharp null  of Equation~\eqref{eq:null_mono_k}.
Below, we discuss our method on a high level, leaving details for Sections~\ref{sec:nonunifBRE} and~\ref{sec:method_clique}.

\section{Overview of Main Method}\label{sec:overview}

In this section, we provide an overview of our proposed method. 
Conceptually, our method can be described in four key steps: 

\begin{center}
    \textsc{General Procedure for Testing the Monotone Null}
\end{center}
\vspace{-15px}
\begin{enumerate}[topsep=0pt,itemsep=-0.5ex,partopsep=0.5ex,parsep=0.5ex]
    \item[{\bf Step 1.}] Split the network into $K-1$ parts in order to test $H_{0k}$ separately within each part.
    
    \item[{\bf Step 2.}]  Based on the original hypothesis $H_{0k}$ in Equation~\eqref{eq:null_mono_k}, define a sharper null hypothesis, $\widetilde H_{0k}$, that can be tested with existing conditional randomization tests~\citep{athey2018exact, puelz2022graph}.
    
    \item[{\bf Step 3.}] Adapt the test for $\widetilde H_{0k}$ towards testing $H_{0k}$ by applying the bounded null techniques of~\citet{caughey2023bounded}. This requires the use of test statistics 
    with a suitable property of {\em exposure monotonicity}, which we make concrete below.
    
    \item[{\bf Step 4.}] Combine the $K-1$ individual $p$-values obtained from each network part towards testing the main hypothesis, $H_0$. To this end, we leverage the concept of {\em stochastically larger than uniform} $p$-values of~\citet{rosenbaum2011some}.
    
\end{enumerate}
Below, we review some important aspects of this procedure, leaving technical details for the sections that follow.

\paragraph{Step 1: Splitting the network.}~The first step is to split the network, $\mathcal G = (V, E)$, into several ---possibly overlapping in nodes--- sub-networks denoted by $(\mathcal G_k)_{k \in [K]}$ where $\mathcal G_k = (V_k, E_k)$.
The idea is to test $H_{0k}$ within each sub-network $\mathcal G_k$ through a conditional randomization test, each producing a finite-sample valid $p$-value. The test for the monotone null defined in Equation~\eqref{eq:null_mono} then comes from the combination of these $p$-values, and the way $\mathcal G$ is split is designed to facilitate this combination. 
We discuss more technical details about such splitting later in Sections~\ref{sec:nonunifBRE} and~\ref{sec:method_clique}.

\paragraph{Step 2: Single contrast hypothesis.}~The second step is to focus on a sharper version of $H_{0k}$, defined as follows:
\begin{equation}\label{eq:null_eq_k}
    \widetilde H_{0k}: \quad y_i(0, \expov_k) ~=~ y_i(0, \expov_{k+1}), \quad \forall i \in [N].
\end{equation}
The null hypothesis in Equation~\eqref{eq:null_eq_k} remains a non-sharp null hypothesis as it only compares two out of the $2K$ potential outcomes.
However, $\widetilde H_{0k}$ can be tested using existing methods from the conditional randomization testing literature~\citep{athey2018exact, basse2019randomization, puelz2022graph}. 
As we review later, the key idea in these methods is to subset the data in a particular way such that 
a conditional randomization test is possible within only those control units exposed to either $\expov_k$ or $\expov_{k+1}$.

\paragraph{Step 3: Exposure-monotone test statistics.}~The next step is to adapt the tests derived for $\widetilde H_{0k}$ towards 
testing $H_{0k}$. This is possible as long as the test statistics in each individual randomization test satisfy the following {\em exposure-monotone} property. 

\begin{definition}[Exposure monotonicity]
Let $\mU \subseteq [N]$ be a subset of units and $\mZ \subseteq \{0,1\}^N$ be a subset of treatments, and write $\mC = (\mU, \mZ)$.
A test statistic, $t(z, y; \mC) : \{0,1\}^N \times \mathbb R^N \to \mathbb R$, is exposure-monotone with respect to $\mC$ in the order $(\expov, \expov')$ with $\expov \le \expov'$ if (a) its value depends only on the 
sub-vector of $y$ restricted on units in $\mU$ and the sub-vector of $z$ restricted on units in $\mU ~\cup~ \nei(\mU)$; 
and (b) for all $z \in \mZ$ and $y, \eta, \xi \in \mathbb R^N$ with $\eta_i \geq 0 \geq \xi_i ~\forall i \in \mathcal U$, 
\begin{equation}\label{eq:expo_monotone}
    t(z, y_{\eta\xi}; \mC) \geq t(z, y; \mC),
\end{equation}
where $y_{\eta\xi, i} = y_i + \mathbbm 1\{\expof_i(z) = \expov'\} \eta_i + \mathbbm 1\{\expof_i(z) = \expov\} \xi_i$. We will sometimes write the test statistic as $t(z, y; \mathcal U)$ when $\mathcal Z$ is clear from the context.

\end{definition}
\vspace{-5px}
Intuitively, an exposure-monotone test statistic operates only on a subset of units and treatment assignments, which, as explained later, are selected such that under these treatment assignments, missing potential outcomes can be imputed or bounded by the observed outcomes for the subset of units, usually termed ``focal units"~\citep{athey2018exact}.
Moreover, an exposure-monotone statistic is ``aligned" with the ordering of outcomes in the null hypothesis: its value increases towards the direction of higher exposure levels and decreases towards the direction of lower exposure levels. This definition extends the concept of ``effect increasing test statistics" developed by~\cite{caughey2023bounded} to settings with interference.

One example of an exposure-monotone test statistic is the simple difference-in-means test statistic in two exposure groups. 
Let $I_i(z, \expov) = \mathbbm{1}\{ w_i(z) = \expov\}$ indicate whether unit $i$ is exposed 
to level $\expov$ under population treatment $z$. For any $\expov \leq \expov'$, define 
\begin{equation}\label{eq:diff-in-mean}
    t_{\expov, \expov'}^{\mathrm{DiM}}(z, y; \mU) 
    = \frac{1}{\sum_{i\in\mU} I_i(z, \expov')} 
    \sum_{i\in\mU} I_i(z, \expov') \psi_1(y_i) - 
    \frac{1}{\sum_{i\in\mU} I_i(z, \expov)} 
    \sum_{i\in\mU} I_i(z, \expov) \psi_0(y_i),
\end{equation}
where $\psi_1$ and $\psi_0$ are non-decreasing functions from $\mathbb R$ to $\mathbb R$. This test statistic 
is the difference in (transformed) sample mean outcomes between units in $\mU$ exposed to $\expov'$ and $\expov$.
It also allows weighting outcomes by probabilities, such as the conditional probability of unit $i$ being exposed to $\expov'$ or $\expov$, which can be easily calculated in certain designs.

Another example is the rank-based statistic of the form
\begin{equation}\label{eq:rank-stat}
    t_{\expov, \expov'}^{\mathrm{rank}}(z, y; \mC) = 
    \sum_{i \in \mathcal U} I_i(z, \expov') \varphi(r_i(y_{\mU})),
\end{equation}
where $\varphi$ is a non-decreasing function and $r_i(y_{\mU})$ is the rank of $y_i$ within $y_{\mathcal U} = (y_i: i\in\mU)$, the outcome sub-vector for units in $\mU$. In case of tied ranks, we define $\varphi(\cdot)$ as the average value of $\varphi(\cdot)$ evaluated at those ranks with ties broken by unit ordering.\footnote{That is, $r_i > r_j$ if and only if $(y_i > y_j) \vee (y_i = y_j,~i > j)$.} We verify the exposure monotonicity of these two statistics in Appendix~\ref{appdx:exp-mono-stat}.

\paragraph{Step 4: Combination of $p$-values.}~In the last step, we combine the individual $p$-values obtained by testing $H_{0k}$ on each sub-network. As a result of network splitting, however, these $p$-values may be mutually dependent, and so to combine them properly we use the concept of {\em stochastically larger than uniform} $p$-values defined as follows.

\begin{definition}[Stochastically larger than uniform~\citep{brannath2002recursive,rosenbaum2011some}]
A $K$-dimensional random vector $(P_1,\ldots, P_{K})$ with support in the $K$-dimensional unit cube is stochastically larger than uniform if for all $(p_1,\ldots, p_{K}) \in [0,1]^{K}$,
\[
\mathbbm P(P_1 \leq p_1, \ldots, P_{K} \leq p_{K}) \leq p_1\times\cdots\times p_{K}.
\]
\end{definition}
\vspace{-5px}
Independent $p$-values are trivially stochastically larger than uniform, but the above definition allows for possibly dependent $p$-values. 
A key technical challenge in our method is to split the network in a way such that 
the resulting $p$-values will be stochastically larger than uniform. On a high level, 
our strategy will be to, first, split the node set $V$ into possibly overlapping sets $V_1, \ldots, V_K$. Then, for all $k=1, \ldots, K-1$, we sequentially compute the $p$-value for $H_{0k}$ using information only from units in $V_1 \cup V_2 \cup \cdots \cup V_k$, as explained in Sections~\ref{sec:nonunifBRE} and~\ref{sec:method_clique}. Such sequential construction leads to stochastically larger than uniform $p$-values by the following lemma.
\begin{lemma}[\cite{rosenbaum2011some}, Lemma 3]\label{lem:recur_pval}
If, for all $k$, $P_k$ is a function of $(Q_1,\ldots, Q_k)$ and $\mathbbm P(P_k \leq p_k | Q_1,\ldots Q_{k-1}) \leq p_k$ for all $p_k \in [0,1]$ and for all $(Q_1,\ldots Q_{k-1})$, then $(P_1,\ldots, P_{K})$ is stochastically larger than uniform.
\end{lemma}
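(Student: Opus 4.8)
The plan is to prove the lemma by induction on $K$, using the tower property of conditional expectation together with the observation that, because $P_j$ depends only on $(Q_1,\ldots,Q_j)$, the partial event $\{P_1 \le p_1,\ldots,P_{K-1}\le p_{K-1}\}$ is determined by $(Q_1,\ldots,Q_{K-1})$. The base case $K=1$ is immediate: the claim $\mathbbm P(P_1 \le p_1) \le p_1$ is exactly the stated hypothesis with empty conditioning.

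For the inductive step, assume the conclusion holds for $K-1$. Fix $(p_1,\ldots,p_K)\in[0,1]^K$ and set $B = \{P_1 \le p_1,\ldots,P_{K-1}\le p_{K-1}\}$. Since each $P_j$ with $j\le K-1$ is a function of $(Q_1,\ldots,Q_j)$, and hence of $(Q_1,\ldots,Q_{K-1})$, the indicator $\mathbbm 1\{B\}$ is itself a function of $(Q_1,\ldots,Q_{K-1})$. Conditioning on $(Q_1,\ldots,Q_{K-1})$ and applying iterated expectations,
\begin{align}
\mathbbm P(P_1 \le p_1,\ldots,P_K \le p_K)
&= \mathbbm E\big[\mathbbm 1\{B\}\,\mathbbm P(P_K \le p_K \mid Q_1,\ldots,Q_{K-1})\big] \notag\\
&\le \mathbbm E\big[\mathbbm 1\{B\}\, p_K\big] = p_K\,\mathbbm P(B), \notag
\end{align}
where the inequality uses the hypothesis $\mathbbm P(P_K \le p_K \mid Q_1,\ldots,Q_{K-1}) \le p_K$, which holds for every realization of $(Q_1,\ldots,Q_{K-1})$. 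By the induction hypothesis applied to $(P_1,\ldots,P_{K-1})$ we have $\mathbbm P(B) \le p_1\cdots p_{K-1}$, and combining the two bounds yields $\mathbbm P(P_1 \le p_1,\ldots,P_K \le p_K) \le p_1\cdots p_K$, completing the induction.

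The only step requiring care is the measurability of $B$ with respect to the conditioning $\sigma$-field: it is precisely the nested, sequential structure of the construction---each $P_k$ being computable from $Q_1,\ldots,Q_k$ alone---that lets $\mathbbm 1\{B\}$ be pulled outside the conditional expectation. Once this is granted, the remainder is a routine application of the tower property and needs no distributional assumptions on the $Q_k$ beyond the stated conditional bound. I would also remark that independence of the $P_k$ is not required anywhere, which is what makes the lemma usable for the dependent $p$-values produced by overlapping network splits.
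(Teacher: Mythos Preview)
Your proof is correct and is the standard induction-plus-tower-property argument. The paper itself does not give a proof of this lemma---it is simply quoted from \cite{rosenbaum2011some}---so there is no alternative approach in the paper to compare against; your argument is essentially the one Rosenbaum gives.
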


We can combine stochastically larger than uniform $p$-values into a single valid $p$-value in certain ways as if they were independent. For example,  Fisher's combination rule,
\begin{equation}\label{eq:pval_FCT}
    p_{\mathrm{FCT}} := 1 - F_{\chi^2_{2(K-1)}}\Big(-2\sum_{k=1}^{K-1} \log p_k\Big),
\end{equation}
leads to a valid combined $p$-value $p_{\mathrm{FCT}}$,
where $F_{\chi^2_{2(K-1)}}(\cdot)$ is the cumulative distribution function of a $\chi^2_{2(K-1)}$ distribution. That is, if $(p_k)_k$ are stochastically larger than uniform, then $\mathbbm P(p_{\mathrm{FCT}} \leq \alpha) \leq \alpha$ for all $\alpha \in [0,1]$. There are other ways apart from Fisher's rule to aggregate these $p$-values, such as Stouffer's and Cauchy's combination rules as well as Bonferroni's method. The Bonferroni's method is valid even without splitting the network to have stochastically larger than uniform $p$-values. We discuss other combination methods in Appendix~\ref{appdx:pvalcomb} and advocate for the use of Fisher's rule in our problem.

\section{Methodology under Non-Uniform Bernoulli Design}\label{sec:nonunifBRE}
In this section, we present our main method in settings where the design follows a non-uniform Bernoulli distribution, defined as follows.

\begin{definition}[Non-uniform Bernoulli Design]\label{def:nunifBRE}
The assignment mechanism $P(Z)$ satisfies $P(Z=z) = \prod_{i \in [N]} p_i^{z_i} (1 - p_i)^{1- z_i}$, for $z \in \{0,1\}^N$, where $p_i \in [0,1]$ are known treatment probabilities.
\end{definition}
\vspace{-5px}
The unit treatment probabilities in a non-uniform Bernoulli design can be arbitrary as long as they are known and fixed. For example, $p_i$ could depend on unit covariates, past outcomes, network, or other pre-treatment features of the unit and other units. The primary restriction is that units are treated independently. Obviously, the non-uniform Bernoulli design includes the usual Bernoulli design as a special case, under which the treatment probabilities are identical for all units.

Under the non-uniform Bernoulli design, we first introduce an algorithm to test a single contrast hypothesis $H_{0k}$ in Equation~\eqref{eq:null_mono_k}, as explained in Steps 2-3 of the general procedure of the previous section.
Before presenting our test for $H_{0k}$, we define certain graph-theoretic concepts that will clarify how we utilize the structure of non-uniform Bernoulli designs to build our randomization test, and how we can extend it towards more flexible designs.

\subsection{Preliminary concepts: Module and Module set}
We begin with the concepts of {\em module} and {\em module set} that are crucial in the construction of our tests.
\begin{definition}\label{def:modules}
A {\em module} is a subset of units $\mS \subseteq [N]$ that can be partitioned into two disjoint subsets, 
namely $\mS = 
\ef(\mS) \cup \er(\mS)$, such that any pair of units $i, j\in \ef(\mS)$ is disconnected ($A_{ij}=0$), and their neighborhoods are contained in $\er(\mS)$, i.e., $ \nei(\ef(\mS))\subseteq \er(\mS)$. 
We will refer to $\ef(\mS)$ as the set of {\em eligible focal units} of module $\mS$, and $\er(\mS)$ as the set of {\em eligible randomization units} of the module. 
A {\em module set} $\mathbb{S}$ is a collection of disjoint modules $\{\mS_1, \ldots, \mS_L \}$, i.e., $\mS_\ell \cap \mS_{\ell'} = \emptyset$ for all $\ell \neq\ell'$.
\end{definition}
\vspace{-5px}
These definitions are illustrated in Figure~\ref{fig:module}. In the figure, all eligible focal units in a module share the same neighbors and thus receive the same exposure under any treatment assignment. We call such a module a {\em uniform} module. Formally, a module $\mS$ is uniform if $\nei_i = \er(\mS)$ for all $i \in \ef(\mS)$. The requirement of disjoint modules in a module set simplifies the exposition and could be relaxed as we discuss later.

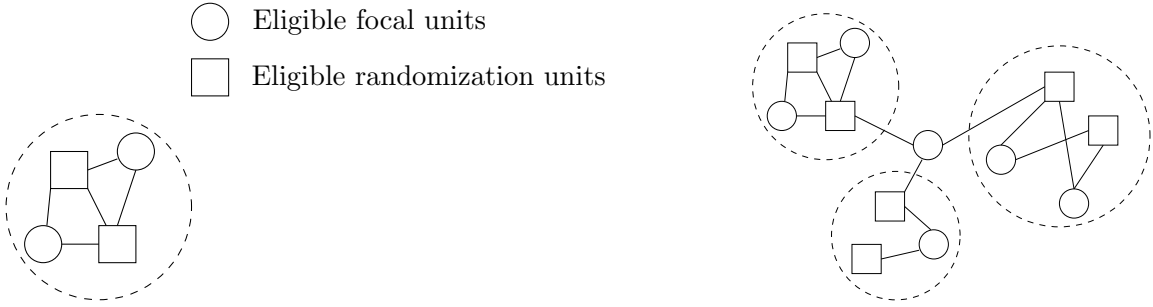
\begin{figure}[!t]
\centering
\resizebox{0.5\textwidth}{!}{%
\begin{circuitikz}
\tikzstyle{every node}=[font=\normalsize]
\draw  (7,8.5) circle (0.25cm);
\draw  (7.75,8.75) rectangle (8.25,8.25);
\draw  (7.1,9.75) rectangle (7.6,9.25);
\draw  (8.25,9.75) circle (0.25cm);

\draw [short] (7.6,9.25) -- (7.85, 8.75);
\draw [short] (7.1, 9.25) -- (7.05, 8.75);
\draw [short] (7.6, 9.5) -- (8, 9.65);

\draw [short] (8,8.75) -- (8.25,9.5);
\draw [short] (7.25,8.5) -- (7.75,8.5);
\draw [, dashed] (7.75,9) ellipse (1.25cm and 1.25cm);

\draw (9.25,11.5) circle (0.25cm);
\node [font=\small] at (11.4,11.5) {Eligible focal units};

\draw (9,11) rectangle (9.5,10.5);
\node [font=\small] at (12.2,10.75) {Eligible randomization units};
\end{circuitikz}
}
\resizebox{0.02\textwidth}{!}{%
\hspace{5px}
\begin{circuitikz}
\end{circuitikz}
}%
\resizebox{0.4\textwidth}{!}{%
\hspace{40px}
\begin{circuitikz}
\tikzstyle{every node}=[font=\normalsize]

\draw  (7,8.5) circle (0.25cm);
\draw  (7.75,8.75) rectangle (8.25,8.25);
\draw  (7.1,9.75) rectangle (7.6,9.25);
\draw  (12,7) circle (0.25cm);
\draw  (8.25,9.75) circle (0.25cm);
\draw  (10.75,7.75) circle (0.25cm);
\draw  (9.5,8) circle (0.25cm);
\draw  (12.25,8.5) rectangle (12.75,8);
\draw  (11.5,9.25) rectangle (12,8.75);

\draw  (9.6, 6.3) circle (0.25cm);
\draw  (8.2, 6.3) rectangle (8.7, 5.8);
\draw  (8.6, 7.2) rectangle (9.1, 6.7);
\draw [short] (9.1, 6.95) -- (9.55, 6.55);
\draw [short] (8.7, 6.05) -- (9.35, 6.2);
\draw [short] (9.1, 7.2) -- (9.4, 7.75);
\draw [, dashed] (8.95, 6.45) ellipse (1.1cm and 1.1cm);

\draw [short] (7.6,9.25) -- (7.85, 8.75);
\draw [short] (7.1, 9.25) -- (7.05, 8.75);
\draw [short] (7.6, 9.5) -- (8, 9.65);
\draw [short] (11,7.75) -- (12.25,8.25);
\draw [short] (11.75,8.75) -- (12,7.25);
\draw [short] (12,7.25) -- (12.5,8);
\draw [short] (10.75,8) -- (11.5,8.75);
\draw [, dashed] (11.75,8.15) ellipse (1.55cm and 1.55cm);

\draw [short] (8.25,8.5) -- (9.25,8);
\draw [short] (9.75,8) -- (11.5,9);
\draw [short] (8,8.75) -- (8.25,9.5);
\draw [short] (7.25,8.5) -- (7.75,8.5);
\draw [, dashed] (7.75,9) ellipse (1.25cm and 1.25cm);

\end{circuitikz}
}%
\caption{ {\em Left panel:} A uniform module $\mS$. Circles represent eligible focal units and squares represent eligible randomization units. {\em Right panel:} A module set $\mathbb S$ consisting of three uniform modules outlined by dashed circles.}
\label{fig:module}
\end{figure}

Constructing a module set is computationally straightforward. 
One approach is to start by randomly sampling a unit $j_1$ in $V$, define $\mS_1 = \{j_1\} \cup \nei_{j_1}$, and then 
calculate the remainder $V^{(1)} \leftarrow V \setminus  \{ \mS_1 \cup \nei(\mS_1)\}$. 
Then, sample another unit $j_2$ in $V^{(1)}$, define $\mS_2 = \{j_2\} \cup \nei_{j_2}$, calculate the 
remainder $V^{(2)} \leftarrow V^{(1)} \setminus  \{ \mS_2  \cup  \nei(\mS_2)\}$, and so on. 
The process terminates when $V^{(L)}$ is empty.
As a last step, we may augment each singleton $\ef(\mS_\ell) = \{j_\ell\}$ with units not in any modules and have exactly the same neighbors as $j_\ell$, for each $\ell=1,\ldots,L$. The resulting modules are uniform modules as well. 
In some applications where far more units always remain in control compared to those that could be treated, $\er(\mS)$ can be defined to include only the units that could be treated. See Appendix~\ref{appdx:implement_medellin} for more details.

Given a single contrast hypothesis $H_{0k}$ and population treatment assignment $z\in\{0,1\}^N$, the {\em active focal units} of a module $\mS$ are the eligible focal units that are exposed to the levels defined in the null 
hypothesis $H_{0k}$, namely
\begin{equation}
    \label{eq:active_focal}
    \af(z; \mS) := \Big\{i \in \ef(\mS): z_i = 0 ~\text{and}~ \expof_i(z) \in \{\expov_k, \expov_{k+1}\}\Big\}.
\end{equation}
A module $\mS$ is {\em active} under treatment $z$ if $\af(z; \mS) \neq \emptyset$. 
Similarly, we define the {\em active randomization units} of $\mS$  as $\ar(z; \mS) := \nei(\af(z; \mS)) \subseteq \er(\mS)$.

These definitions are illustrated in Figure~\ref{fig:activemodule} as a continuation of Figure~\ref{fig:module} under an observed treatment assignment $\Zobs$ with $\expov_k=1$ and $\expov_{k+1}=2$, indicating 1 and 2 treated neighbors respectively. Shaded circles and squares are treated under $\Zobs$. 
The module in the upper left panel is not active since under $\Zobs$ the eligible focal units of it have 
no treated neighbor. 
In the upper right module, only one eligible focal unit becomes active since the other focal unit is itself treated. The module in the bottom is active as well since the eligible focal unit is in control and has 2 treated neighbors.

We are now ready to define the testing procedure for the single contrast hypothesis $H_{0k}$. 

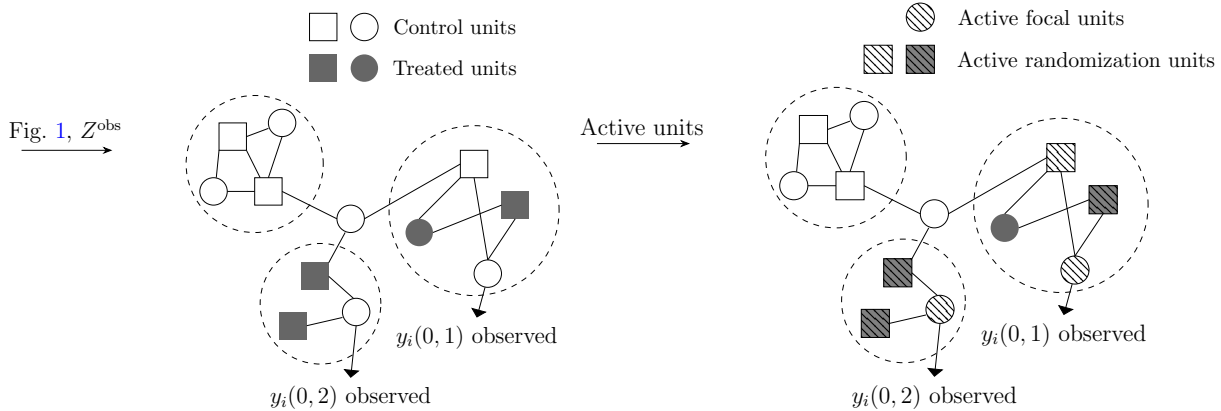
\begin{figure}[!t]
\centering
\resizebox{0.46\textwidth}{!}{%
\begin{circuitikz}
\tikzstyle{every node}=[font=\normalsize]

\node[above] at (4.3,9.25) {Fig.~\ref{fig:module}, $\Zobs$};
\draw [->, >=Stealth] (3.5,9.25) -- (5.15,9.25);

\draw  (7,8.5) circle (0.25cm);
\draw  (7.75,8.75) rectangle (8.25,8.25);
\draw  (7.1,9.75) rectangle (7.6,9.25);

\draw (12,7) circle (0.25cm);
\draw[-triangle 90] (11.95, 6.75) -- (11.8, 6.2) node[below] {$y_i(0,1)$ observed};
\draw  (8.25,9.75) circle (0.25cm);
\fill [gray!120] (10.75,7.75) circle (0.25cm);
\draw  (9.5,8) circle (0.25cm);
\fill [gray!120]  (12.25,8.5) rectangle (12.75,8);
\draw (11.5,9.25) rectangle (12,8.75);

\draw [short] (7.6,9.25) -- (7.85, 8.75);
\draw [short] (7.1, 9.25) -- (7.05, 8.75);
\draw [short] (7.6, 9.5) -- (8, 9.65);
\draw [short] (11,7.75) -- (12.25,8.25);
\draw [short] (11.75,8.75) -- (12,7.25);
\draw [short] (12,7.25) -- (12.5,8);
\draw [short] (10.75,8) -- (11.5,8.75);
\draw [, dashed] (11.75,8.15) ellipse (1.55cm and 1.55cm);

\draw  (9.6, 6.3) circle (0.25cm);
\fill [gray!120]  (8.2, 6.3) rectangle (8.7, 5.8);
\fill [gray!120] (8.6, 7.2) rectangle (9.1, 6.7);
\draw [short] (9.1, 6.95) -- (9.55, 6.55);
\draw [short] (8.7, 6.05) -- (9.35, 6.2);
\draw [short] (9.1, 7.2) -- (9.4, 7.75);
\draw [, dashed] (8.95, 6.45) ellipse (1.1cm and 1.1cm);
\draw[-triangle 90] (9.6, 6.05) -- (9.5,5.1) node[below] {$y_i(0,2)$ observed};

\draw [short] (8.25,8.5) -- (9.25,8);
\draw [short] (9.75,8) -- (11.5,9);
\draw [short] (8,8.75) -- (8.25,9.5);
\draw [short] (7.25,8.5) -- (7.75,8.5);
\draw [, dashed] (7.75,9) ellipse (1.25cm and 1.25cm);

\draw (9.75, 11.5) circle (0.25cm);
\draw (8.75, 11.75) rectangle (9.25, 11.25);
\node [font=\small] at (11.4,11.5) {Control units};

\fill [gray!120] (9.75, 10.75) circle (0.25cm);
\fill [gray!120] (8.75, 11) rectangle (9.25,10.5);
\node [font=\small] at (11.4,10.75) {Treated units};

\end{circuitikz}
}%
\resizebox{0.53\textwidth}{!}{%
\begin{circuitikz}
\tikzstyle{every node}=[font=\normalsize]

\node[above] at (4.3,9.25) {Active units};
\draw [->, >=Stealth] (3.5,9.25) -- (5.15,9.25);

\draw  (7,8.5) circle (0.25cm);
\draw  (7.75,8.75) rectangle (8.25,8.25);
\draw  (7.1,9.75) rectangle (7.6,9.25);

\filldraw[pattern=north west lines] (12,7) circle (0.25cm);
\draw[-triangle 90] (11.95, 6.75) -- (11.8, 6.2) node[below] {$y_i(0,1)$ observed};
\draw  (8.25,9.75) circle (0.25cm);
\fill [gray!120] (10.75,7.75) circle (0.25cm);
\draw  (9.5,8) circle (0.25cm);
\filldraw[pattern=north west lines][preaction={fill=gray!100}]  (12.25,8.5) rectangle (12.75,8);
\filldraw[pattern=north west lines] (11.5,9.25) rectangle (12,8.75);

\draw [short] (7.6,9.25) -- (7.85, 8.75);
\draw [short] (7.1, 9.25) -- (7.05, 8.75);
\draw [short] (7.6, 9.5) -- (8, 9.65);
\draw [short] (11,7.75) -- (12.25,8.25);
\draw [short] (11.75,8.75) -- (12,7.25);
\draw [short] (12,7.25) -- (12.5,8);
\draw [short] (10.75,8) -- (11.5,8.75);
\draw [, dashed] (11.75,8.15) ellipse (1.55cm and 1.55cm);

\filldraw[pattern=north west lines]  (9.6, 6.3) circle (0.25cm);
\filldraw[pattern=north west lines][preaction={fill=gray!100}]  (8.2, 6.3) rectangle (8.7, 5.8);
\filldraw[pattern=north west lines][preaction={fill=gray!100}] (8.6, 7.2) rectangle (9.1, 6.7);
\draw [short] (9.1, 6.95) -- (9.55, 6.55);
\draw [short] (8.7, 6.05) -- (9.35, 6.2);
\draw [short] (9.1, 7.2) -- (9.4, 7.75);
\draw [, dashed] (8.95, 6.45) ellipse (1.1cm and 1.1cm);
\draw[-triangle 90] (9.6, 6.05) -- (9.5,5.1) node[below] {$y_i(0,2)$ observed};

\draw [short] (8.25,8.5) -- (9.25,8);
\draw [short] (9.75,8) -- (11.5,9);
\draw [short] (8,8.75) -- (8.25,9.5);
\draw [short] (7.25,8.5) -- (7.75,8.5);
\draw [, dashed] (7.75,9) ellipse (1.25cm and 1.25cm);

\filldraw[pattern=north west lines]  (9.25,11.5) circle (0.25cm);
\node [font=\small] at (11.4,11.5) {Active focal units};

\filldraw[pattern=north west lines][preaction={fill=gray!100}] (9,11) rectangle (9.5,10.5);
\filldraw[pattern=north west lines] (8.25,11) rectangle (8.75,10.5);
\node [font=\small] at (12.2,10.75) {Active randomization units };

\end{circuitikz}
}%
\caption{Continuing from Figure~\ref{fig:module}. 
{\em Left panel:} $\Zobs$ is realized. Shaded color indicates treated units. {\em Right panel:} Nodes with hatched pattern represent active focal/randomization units for the null hypothesis $H_{0k}: y_i(0,1) \ge y_i(0,2)$. 
} 
\label{fig:activemodule}
\end{figure}

\subsection{Testing a single contrast hypothesis, $H_{0k}$}\label{sec:single}
To build intuition, we first present Algorithm~\ref{algo:focalrand_single} that describes in detail the steps for testing $H_{0k}$ assuming that each module in our analysis is uniform. As we will see, the use of uniform modules simplifies the  randomization distribution to a simple (clustered) Bernoulli randomization, as if we were randomizing the exposures $\expov_k, \expov_{k+1}$ on the active focal units of each module. 

The first step of Algorithm~\ref{algo:focalrand_single} (Lines 1-2) is to find the active focal and active randomization units under $\Zobs$, which define the index set of active modules denoted by $\mathcal L^\obs$. 
Note that $\mathcal L^\obs$ depends on the observed $\Zobs$, and is thus a random variable.
Lines 3-6 calculate the distributions of exposures on active focal units conditional on $\mathcal L^\obs$ being the index set of active modules. Specifically, $p_\ell$ in Line 5 is the probability that units in $\af(\Zobs; \mS_\ell)$ are exposed to $\expov_{k+1}$ conditional on $\mS_\ell$ being active with active focal units $\af(\Zobs; \mS_\ell)$.
Finally, Lines 7-14 define the main randomization test that ``shuffles" the exposures $\{\expov_k, \expov_{k+1}\}$ on the active focal units. Akin to cluster randomization, this can be simply implemented by, first, performing a Bernoulli randomization on the module level with probability $p_\ell$ (Line 10), 
and then jointly setting the exposures of all focal units in each module according to that Bernoulli randomization (Line 11). We defer the proof of validity to Theorem~\ref{thm:focaltest_mono} below.

\begin{algorithm}[!t]
\caption{Test for single contrast hypothesis, $H_{0k}:y_i(0, \expov_k) ~\geq~ y_i(0, \expov_{k+1})$, assuming uniform modules}
\label{algo:focalrand_single}
\singlespacing
\begin{algorithmic}[1]
\vspace{-0.6cm}
\REQUIRE Module set $\mathbb S = \{\mS_1, \ldots, \mS_L\}$ such that each $\mS \in \mathbb S$ is a uniform module; observed treatment $\Zobs$; observed outcome $\Yobs$ (Input).

\hspace{-24px} {\bf Output:} Finite-sample valid $p$-value for $H_{0k}$, $\pval_k$.
\item[] {\em // Preprocessing}
\STATE Given $\Zobs$, calculate $\af(\Zobs; \mS_\ell)$ and $\ar(\Zobs; \mS_\ell)$ for each $\ell \in [L]$ 
from Equation~\eqref{eq:active_focal}. Define the index set of active modules $\mathcal L^\obs = \{ \ell \in [L]: \af(\Zobs; \mS_\ell) \neq \emptyset \}$ and the set of active focal units $\mU^\obs = \bigcup_{\ell \in [L]} \af(\Zobs; \mS_\ell)$.
\STATE Define an exposure-monotone test statistic in the order $(\expov_k, \expov_{k+1})$; e.g., difference-in-means as in Equation~\eqref{eq:diff-in-mean} as $t_k(z, y) := t_{\expov_k, \expov_{k+1}}^{\mathrm{DiM}}(z, y; \mU^\obs)$. 
Calculate the observed value of the test statistic $\Tobs = t_k(\Zobs, \Yobs)$. 
\item[]
\item[] {\em // Calculate the conditional randomization distribution on active modules}
\FOR{$\ell \in \mathcal{L}^\obs$}
    \STATE Define $\mathcal T_\ell(\expov) = \big\{ z_{\mathsf{A}} \in \{0,1\}^{|\mathsf{A}|}: \expof_i(z_{\mathsf{A}}) = \expov,~ \forall i \in \af(\Zobs; \mS_\ell) \big\}$, for $\expov \in \{\expov_k, \expov_{k+1}\}$,
    where $\mathsf{A} = \ar(\Zobs; \mS_\ell)$.
    \STATE Calculate 
    \[
    p_{0\ell} = \sum_{z\in\mathcal{T}_\ell(\expov_{k})} \prod_{j \in \ar(\Zobs; \mS_\ell)} p_j^{z_j}(1-p_j)^{1-z_j}, ~p_{1\ell} = \sum_{z\in\mathcal{T}_\ell(\expov_{k+1})} \prod_{j \in \ar(\Zobs; \mS_\ell)} p_j^{z_j}(1-p_j)^{1-z_j},
    \]
    and $p_\ell = p_{1\ell}/ (p_{0\ell}+p_{1\ell})$.
\ENDFOR
\item[]
\item[]{\em // Main randomization test}
\FOR{$r=1, \ldots R$}
\STATE $Z^{(r)} \leftarrow \Zobs$.
\FOR{$\ell \in \mathcal L^\obs$}
    \STATE Sample $\iota_\ell \sim \mathrm{Bern}(p_\ell)$.
    \STATE 
    If $\iota_\ell=1$, then sample $\widetilde Z_\ell \sim \mathrm{Unif}\{\mathcal{T}_\ell(\expov_{k+1})\}$; otherwise sample $\widetilde Z_\ell \sim \mathrm{Unif}\{\mathcal{T}_\ell(\expov_{k})\}$. Update $Z^{(r)}_{i} \leftarrow \widetilde Z_i$ for all $i\in\ar(\Zobs; \mS_\ell)$.
\ENDFOR
\STATE Calculate $T^{(r)} = t_k(Z^{(r)}, \Yobs)$.
\ENDFOR
\STATE Output $p$-value: 
\begin{equation}\label{eq:pval_1}
\pval_k = \frac{1}{1+R}\Big(1 + \sum_{r=1}^R \mathbbm{1}\{ T^{(r)} \geq T^\obs \}\Big). 
\end{equation}
\end{algorithmic}
\end{algorithm}

An illustration of Algorithm~\ref{algo:focalrand_single} is shown in Figure~\ref{fig:algo1}, which continues the setup introduced earlier in Figure~\ref{fig:activemodule}. 
The right panel shows a possible randomized treatment $Z^{(r)}$ (Lines 8-12). The $\iota_\ell$ from the Bernoulli randomization in Line 10 is realized to be $1$ for the upper right module, while realized to be $0$ for the bottom module. Hence, the active focal unit ``b" is exposed to 2 treated neighbors while the active focal unit ``a" is exposed to 1 treated neighbor.
The resulting randomized treatment $Z^{(r)}$ permutes the exposures of node ``a" and ``b" under $\Zobs$. If, for example, the difference-in-means test statistic~\eqref{eq:diff-in-mean} is used with both $\psi_1$ and $\psi_0$ being the identity function, then $\Tobs = \Yobs_{\text{a}} - \Yobs_{\text{b}}$ and $T^{(r)} = \Yobs_{\text{b}} - \Yobs_{\text{a}}$.

\begin{figure}[t!]
\centering
\resizebox{0.405\textwidth}{!}{
\begin{circuitikz}
\tikzstyle{every node}=[font=\normalsize]

\draw (7,8.5) circle (0.25cm);
\draw (7.75,8.75) rectangle (8.25,8.25);
\draw (7.1,9.75) rectangle (7.6,9.25);
\filldraw[pattern=north west lines] (12,7) circle (0.25cm);
\node [font=\small] at (12.4,7.2) {b};
\draw  (8.25,9.75) circle (0.25cm);
\fill [gray!120] (10.75,7.75) circle (0.25cm);
\draw  (9.5,8) circle (0.25cm);
\filldraw[pattern=north west lines][preaction={fill=gray!100}]  (12.25,8.5) rectangle (12.75,8);
\filldraw[pattern=north west lines] (11.5,9.25) rectangle (12,8.75);

\draw [short] (11,7.75) -- (12.25,8.25);
\draw [short] (11.75,8.75) -- (12,7.25);
\draw [short] (12,7.25) -- (12.5,8);
\draw [short] (10.75,8) -- (11.5,8.75);
\draw [, dashed] (11.75,8.15) ellipse (1.55cm and 1.55cm);

\filldraw[pattern=north west lines]  (9.6, 6.3) circle (0.25cm);
\node [font=\small] at (10, 6.5) {a};
\filldraw[pattern=north west lines][preaction={fill=gray!100}]  (8.2, 6.3) rectangle (8.7, 5.8);
\filldraw[pattern=north west lines][preaction={fill=gray!100}] (8.6, 7.2) rectangle (9.1, 6.7);
\draw [short] (9.1, 6.95) -- (9.55, 6.55);
\draw [short] (8.7, 6.05) -- (9.35, 6.2);
\draw [short] (9.1, 7.2) -- (9.4, 7.75);
\draw [, dashed] (8.95, 6.35) ellipse (1.2cm and 1.2cm);

\draw [short] (7.6,9.25) -- (7.85, 8.75);
\draw [short] (7.1, 9.25) -- (7.05, 8.75);
\draw [short] (7.6, 9.5) -- (8, 9.65);
\draw [short] (8.25,8.5) -- (9.25,8);
\draw [short] (9.75,8) -- (11.5,9);
\draw [short] (8,8.75) -- (8.25,9.5);
\draw [short] (7.25,8.5) -- (7.75,8.5);
\draw [, dashed] (7.75,9) ellipse (1.25cm and 1.25cm);

\node [font=\Large] at (10,13) {Under $\Zobs$};
\node [font=\small] at (12.4,11.5) {Active focal units};
\node [font=\small] at (13.2,10.75) {Active randomization units};
\filldraw[pattern=north west lines]  (10.25,11.5) circle (0.25cm);
\filldraw[pattern=north west lines][preaction={fill=gray!100}] (10,11) rectangle (10.5,10.5);
\filldraw[pattern=north west lines] (9.25,11) rectangle (9.75,10.5);
\end{circuitikz}
}
\resizebox{0.57\textwidth}{!}{
\begin{circuitikz}
\tikzstyle{every node}=[font=\normalsize]

\node[above] at (4.3,9.25) {randomization};
\draw [->, >=Stealth] (3.5,9.25) -- (5.15,9.25);

\draw (7,8.5) circle (0.25cm);
\draw (7.75,8.75) rectangle (8.25,8.25);
\draw (7.1,9.75) rectangle (7.6,9.25);
\filldraw[pattern=north west lines] (12,7) circle (0.25cm);
\node [font=\small] at (12.4,7.2) {b};
\draw  (8.25,9.75) circle (0.25cm);
\fill [gray!120] (10.75,7.75) circle (0.25cm);
\draw  (9.5,8) circle (0.25cm);
\filldraw[pattern=north west lines][preaction={fill=gray!100}] (12.25,8.5) rectangle (12.75,8);
\filldraw[pattern=north west lines][preaction={fill=gray!100}] (11.5,9.25) rectangle (12,8.75);

\draw [short] (11,7.75) -- (12.25,8.25);
\draw [short] (11.75,8.75) -- (12,7.25);
\draw [short] (12,7.25) -- (12.5,8);
\draw [short] (10.75,8) -- (11.5,8.75);
\draw [, dashed] (11.75,8.15) ellipse (1.55cm and 1.55cm);

\filldraw[pattern=north west lines]  (9.6, 6.3) circle (0.25cm);
\node [font=\small] at (10, 6.5) {a};
\filldraw[pattern=north west lines]  (8.2, 6.3) rectangle (8.7, 5.8);
\filldraw[pattern=north west lines][preaction={fill=gray!100}] (8.6, 7.2) rectangle (9.1, 6.7);
\draw [short] (9.1, 6.95) -- (9.55, 6.55);
\draw [short] (8.7, 6.05) -- (9.35, 6.2);
\draw [short] (9.1, 7.2) -- (9.4, 7.75);
\draw [, dashed] (8.95, 6.35) ellipse (1.2cm and 1.2cm);

\draw [short] (7.6,9.25) -- (7.85, 8.75);
\draw [short] (7.1, 9.25) -- (7.05, 8.75);
\draw [short] (7.6, 9.5) -- (8, 9.65);
\draw [short] (8.25,8.5) -- (9.25,8);
\draw [short] (9.75,8) -- (11.5,9);
\draw [short] (8,8.75) -- (8.25,9.5);
\draw [short] (7.25,8.5) -- (7.75,8.5);
\draw [, dashed] (7.75,9) ellipse (1.25cm and 1.25cm);

\node [font=\Large] at (10,12.8) {Under $Z^{(r)}$};
\node [font=\small] at (12.4,11.5) {Active focal units};
\node [font=\small] at (13.2,10.75) {Active randomization units};
\filldraw[pattern=north west lines]  (10.25,11.5) circle (0.25cm);
\filldraw[pattern=north west lines][preaction={fill=gray!100}] (10,11) rectangle (10.5,10.5);
\filldraw[pattern=north west lines] (9.25,11) rectangle (9.75,10.5);
\end{circuitikz}
}
\caption{Illustration of Algorithm \ref{algo:focalrand_single} in testing $H_{0k}: y_i(0,1) \ge y_i(0,2)$.
{\em Left panel:} 
The observed outcome for node ``a" is $\Yobs_{\text{a}} = y_{\text{a}}(0,2)$ and the observed outcome for node ``b" is $\Yobs_{\text{b}} = y_{\text{b}}(0,1)$. {\em Right panel:} The randomized treatment $Z^{(r)}$ happens to permute the exposures of the two active focal units ``a" and ``b".} 
\label{fig:algo1}
\end{figure}
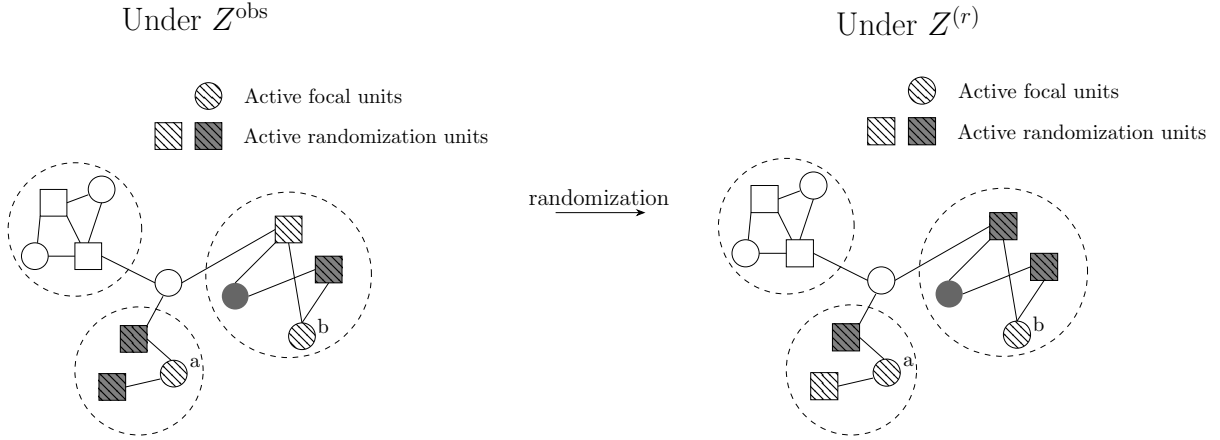

\subsubsection{Extensions of Algorithm~\ref{algo:focalrand_single}}\label{sec:extalgo1}
Before moving to test the full monotone hypothesis, here we discuss some extensions of Algorithm~\ref{algo:focalrand_single}. Further extensions to more general exposure functions and designs are presented in Appendix~\ref{appdx:gen_algo_s}. Notably, we present a way to allow certain overlaps between modules in a module set there, which could be useful in a denser network.

\paragraph{Allowing non-uniform modules.}~It is straightforward to lift the requirement of uniform modules in Algorithm~\ref{algo:focalrand_single}, and thus allowing eligible focal units to have different neighbors. 
The main challenge with such an extension is that the exposures of active focal units in the same module may not take the same value,
as opposed to Lines 4-5 and 10-11 of Algorithm~\ref{algo:focalrand_single}. Instead, the randomization distribution for $\widetilde Z_\ell$ is the conditional distribution of the non-uniform Bernoulli 
design $P$ on $\mathsf{A} = \ar(\Zobs; \mS_\ell)$ conditional on being in the set
\begin{equation}\label{eq:condZset}
    \big\{ z_{\mathsf{A}} \in \{0,1\}^{|\mathsf{A}|}: \expof_i(z_\mathsf{A}) \in \{\expov_k, \expov_{k+1}\},~\forall i \in \af(\Zobs; \mS_\ell) \big\}.
\end{equation}
This set can be enumerated whenever the space of active randomization units, namely $|\ar(\Zobs; \mS_\ell)|$, is not too large.
Alternatively, we could use rejection sampling or similar schemes to approximate the randomization distribution to arbitrary precision. Note also that~\eqref{eq:condZset} allows for other exposure functions apart from~\eqref{eq:neightreated_cont} as long as the exposure depends only on the treatments of a unit's neighbors.

\paragraph{Choice of module set.}~Algorithm~\ref{algo:focalrand_single} gives a finite-sample valid test for any (uniform) module set, but it remains unclear which module set to use. 
Based on Theorem 3 in~\cite{puelz2022graph}, which shows that the power of a conditional randomization test generally increases with both the number of focal units and the support of the conditional randomization distribution, it is better to choose a module set that gives more active focal units. However, we cannot simply maximize the observed number of active focal units under $\Zobs$, as this could raise selective inference problems. Instead, one valid approach would be to  maximize the {\em expected} number of active focal units with respect to the design from which $\Zobs$ is sampled, either through 
exact calculations or Monte-Carlo.\label{para:temp}

\subsection{Testing the full monotone hypothesis, $H_0$}
Here we present our randomization test for the full monotone hypothesis $H_0$ in~\eqref{eq:null_mono} under non-uniform Bernoulli designs. To build intuition, we begin with a simple procedure that splits the network ``far enough" to eliminate dependence between $p$-values.
We will then follow up with a more sophisticated procedure that takes the network structure into account.

\paragraph{A simple testing procedure.}~The $p$-value from Algorithm~\ref{algo:focalrand_single} is valid for testing the single contrast hypothesis $H_{0k}$. 
Thus, a straightforward way to test the full monotone null hypothesis $H_0$ is to, first, construct $K-1$ non-overlapping module sets $\mathbb S_k = \{\mS_{k, \ell}: \ell \in [L_k]\}$ for $k=1, \ldots, K-1$, such that $\mS_{k, \ell} \cap \mS_{k', \ell'} = \emptyset$ for all $k \neq k'$, $\ell \in [L_k]$ and $\ell' \in [L_{k'}]$. 
Then, we test each contrast $H_{0k}$ by Algorithm~\ref{algo:focalrand_single} applied on $\mathbb S_k$ and get $\mathrm{pval}_k$.
Since $\mathrm{pval}_k$ is only a function of data from the module set $\mathbb S_k$, these $p$-values are mutually independent under a non-uniform Bernoulli design, and can thus be combined easily into a valid $p$-value for $H_0$.
While this approach is conceptually simple, the obvious downside is that we may  discard too much network information to construct non-overlapping sub-networks.

\paragraph{A flexible testing procedure.}~A more flexible approach is to split the network in a way that allows possibly overlapping sub-networks inspired by Lemma~\ref{lem:recur_pval}. The idea is to split the network sequentially from $k=1$ to $K-1$, where 
at each step $k$ we allow $\mathbb S_k$ to have eligible randomization units that overlap with previous sub-networks. Reflecting this change, the randomization test on $\mathbb S_k$ would need to be adapted by conditioning on treatments of all previously constructed sub-networks at their observed values under $\Zobs$.

This adaptation of the randomization test requires a small extension of Algorithm~\ref{algo:focalrand_single} to allow conditioning on the treatments of certain units. This is presented in Algorithm~\ref{algo:focalrand_single_general} that takes a set of units $C$ as input on which the randomization test is conditioned (Line 3), while allowing for non-uniform modules. 
To test $H_0$, we can now apply Algorithm~\ref{algo:focalrand_single_general} sequentially to construct multiple $p$-values to be combined through Fisher's combination. 

\renewcommand{\thealgorithm}{1$'$}
\begin{algorithm}[t!]
\caption{General test for single contrast hypothesis, $H_{0k}:y_i(0, \expov_k) ~\geq~ y_i(0, \expov_{k+1})$}
\label{algo:focalrand_single_general}
\begin{algorithmic}[1]
\REQUIRE Module set $\mathbb S = \{\mS_1, \ldots, \mS_L\}$;  $\Zobs$; $\Yobs$; conditional set $C$ (Input).

\hspace{-24px} {\bf Output:} Finite-sample valid $p$-value for $H_{0k}$, $\pval_k$.
\item[] {\em // Preprocessing}
\STATE Same as Algorithm~\ref{algo:focalrand_single}.
\item[] {\em // Calculate conditional randomization distribution on active modules}
\FOR{$\ell \in \mathcal{L}^\obs$}
    \STATE Calculate the randomization distribution for active randomization units supported on $\{0,1\}^{|\mathsf{A}|}$ where $\mathsf{A} = \ar(\Zobs;\mS_\ell)$, conditional on treatment status of units in $C$:
    \begin{equation}\label{eq:rdist_single}
    r_{\ell, k}(z_{\mathsf{A}}) ~\propto~ \mathbbm 1\{\expof_i(z_{\mathsf{A}}) \in \{\expov_k, \expov_{k+1}\},~\forall i \in \af(\Zobs; \mS_\ell) \} P_\ell(z_{\mathsf{A}}|C),
    \end{equation}
    where under the non-uniform Bernoulli design
    \[
    P_\ell(z_{\mathsf{A}}| C) \propto \mathbbm 1\{z_{\mathsf{A} \cap C} = \Zobs_{\mathsf{A} \cap C}\} \prod_{j \in \mathsf{A} \setminus C} p_j^{z_j} (1-p_j)^{1-z_j}.
    \]
    \vspace{-0.8cm}
\ENDFOR
\item[] {\em // Main randomization test}
\STATE Same as Algorithm~\ref{algo:focalrand_single}, except that Lines 10-11 are changed to sample $\widetilde Z_\ell$ from $r_{\ell, k}(\cdot)$.
\end{algorithmic}
\end{algorithm}
\renewcommand{\thealgorithm}{\arabic{algorithm}}
\addtocounter{algorithm}{-1}

The procedure to test $H_0$ is shown in Algorithm~\ref{algo:focalrand_multiple}, and can be described as follows. For each $k$, in Line 2, we first construct module set $\mathbb S_k$ such that the eligible focal units for each module in $\mathbb S_k$ do not overlap with $\mathbb S_{<k}$, the units in all module sets constructed prior to step $k$.
Importantly, to gain more power, we allow overlap between the eligible randomization units across module sets. That is, for $k' < k$, both eligible focal units and eligible randomization units built at step $k'$ can become eligible randomization units for the modules built at step $k$. 
The potential benefit in power is illustrated in the left panel of Figure~\ref{fig:Focalalgo_multiple_graph}, where the grey square (labeled as ``c") is a treated eligible randomization unit from a module in $\mathbb{S}_1$ that is re-used in the subsequent module set $\mathbb{S}_2$ as another eligible randomization unit. In contrast, under the simple procedure described above, this unit 
could not be included in $\mathbb S_2$, so that the left module cannot be formed.
Next, in Line 3 we apply the single contrast hypothesis test (Algorithm~\ref{algo:focalrand_single_general}) on module set $\mathbb S_k$ conditioning on the treatments in set $C=\mathbb S_{<k}$. As a result, in the right panel of Figure~\ref{fig:Focalalgo_multiple_graph}, the randomization test conditions on unit ``c" being treated.
Finally, Line 5 combines the $p$-values $(\pval_k)_{k=1}^{K-1}$ by Fisher's rule. 
Theorem~\ref{thm:focaltest_mono} below shows that these $p$-values are stochastically larger than uniform, 
and thus the combined $p$-value from Algorithm~\ref{algo:focalrand_multiple} leads to a finite-sample valid $p$-value for $H_0$.
The proof is in Appendix~\ref{a:proof}.

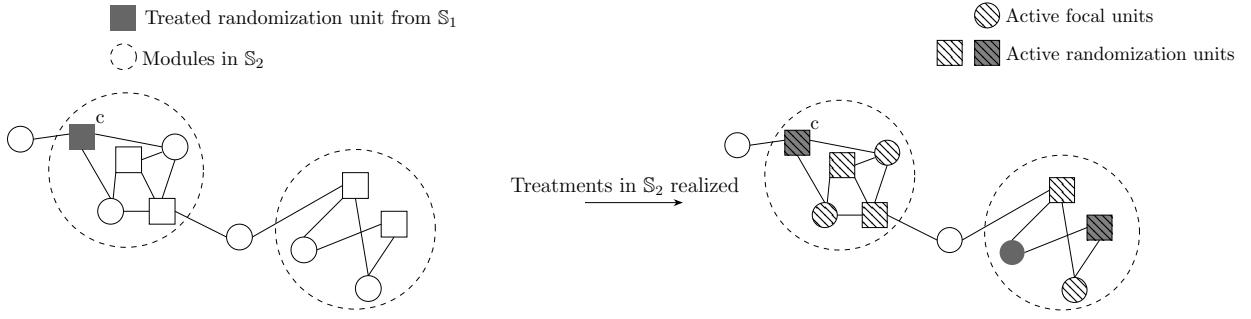
\begin{figure}[t!]
\begin{minipage}{0.4\textwidth}
\resizebox{0.95\textwidth}{!}{
\begin{circuitikz}
\tikzstyle{every node}=[font=\normalsize]
\draw  (7,8.5) circle (0.25cm);
\draw  (7.75,8.75) rectangle (8.25,8.25);
\draw  (7.1,9.75) rectangle (7.6,9.25);
\draw  (12,7) circle (0.25cm); 
\draw  (8.25,9.75) circle (0.25cm);
\draw  (10.75,7.75) circle (0.25cm);
\draw  (9.5,8) circle (0.25cm);
\draw  (12.25,8.5) rectangle (12.75,8);
\draw  (11.5,9.25) rectangle (12,8.75);
\fill [gray!120] (6.2,10.2) rectangle (6.7,9.7); 
\node [font=\small] at (6.8, 10.3) {c};
\draw  (5.25,9.9) circle (0.25cm); 

\draw [short] (7.6,9.25) -- (7.85, 8.75);
\draw [short] (7.1, 9.25) -- (7.05, 8.75);
\draw [short] (7.6, 9.5) -- (8, 9.65);
\draw [short] (11,7.75) -- (12.25,8.25);
\draw [short] (11.75,8.75) -- (12,7.25);
\draw [short] (12,7.25) -- (12.5,8);
\draw [short] (10.75,8) -- (11.5,8.75);
\draw [, dashed] (11.75,8.15) ellipse (1.55cm and 1.55cm); 

\draw [short] (8.25,8.5) -- (9.25,8);
\draw [short] (9.75,8) -- (11.5,9);
\draw [short] (8,8.75) -- (8.25,9.5);
\draw [short] (7.25,8.5) -- (7.75,8.5);
\draw [short] (6.45,9.7) -- (7,8.75); 
\draw [short] (6.7,10) -- (8,9.75); 
\draw [short] (6.2,10) -- (5.5,9.9); 
\draw [, dashed] (7.3, 9.3) ellipse (1.5cm and 1.5cm); 

\node [font=\small] at (10.7, 12.25) {Treated randomization unit from $\mathbb S_1$ }; 
\node [font=\small] at (8.8,11.45) {Modules in $\mathbb S_{2}$};
\fill [gray!120] (7,12.5) rectangle (7.5,12);
\draw [, dashed] (7.25,11.5) circle (0.25cm);
\end{circuitikz}
}%
\end{minipage}
\begin{minipage}{0.6\textwidth}
\resizebox{1\textwidth}{!}{%
\begin{circuitikz}
\tikzstyle{every node}=[font=\normalsize]

\node[above] at (3,8.75) {Treatments in $\mathbb S_2$ realized}; 
\draw [->, >=Stealth] (2.2,8.75) -- (4.15,8.75);

\filldraw[pattern=north west lines] (7,8.5) circle (0.25cm);
\filldraw[pattern=north west lines] (7.75,8.75) rectangle (8.25,8.25);
\filldraw[pattern=north west lines] (7.1,9.75) rectangle (7.6,9.25);
\filldraw[pattern=north west lines] (12,7) circle (0.25cm);
\filldraw[pattern=north west lines] (8.25,9.75) circle (0.25cm);
\fill[gray!120] (10.75,7.75) circle (0.25cm);
\draw (9.5,8) circle (0.25cm);
\filldraw[pattern=north west lines][preaction={fill=gray!100}] (12.25,8.5) rectangle (12.75,8);
\filldraw[pattern=north west lines] (11.5,9.25) rectangle (12,8.75);
\filldraw[pattern=north west lines][preaction={fill=gray!100}] (6.2,10.2) rectangle (6.7,9.7); 
\draw  (5.25,9.9) circle (0.25cm); 
\node [font=\small] at (6.8, 10.3) {c};

\draw [short] (11,7.75) -- (12.25,8.25);
\draw [short] (11.75,8.75) -- (12,7.25);
\draw [short] (12,7.25) -- (12.5,8);
\draw [short] (10.75,8) -- (11.5,8.75);
\draw [, dashed] (11.75,8.15) ellipse (1.55cm and 1.55cm);

\draw [short] (7.6,9.25) -- (7.85, 8.75);
\draw [short] (7.1, 9.25) -- (7.05, 8.75);
\draw [short] (7.6, 9.5) -- (8, 9.65);
\draw [short] (8.25,8.5) -- (9.25,8);
\draw [short] (9.75,8) -- (11.5,9);
\draw [short] (8,8.75) -- (8.25,9.5);
\draw [short] (7.25,8.5) -- (7.75,8.5);
\draw [short] (6.45,9.7) -- (7,8.75); 
\draw [short] (6.7,10) -- (8,9.75); 
\draw [short] (6.2,10) -- (5.5,9.9); 
\draw [, dashed] (7.3,9.3) ellipse (1.5cm and 1.5cm); 

\node [font=\small] at (12.1, 12.5) {Active focal units};
\node [font=\small] at (12.92, 11.75) {Active randomization units};

\filldraw[pattern=north west lines]  (10.25,12.5) circle (0.25cm);
\filldraw[pattern=north west lines][preaction={fill=gray!100}] (10,12) rectangle (10.5,11.5);
\filldraw[pattern=north west lines] (9.25,12) rectangle (9.75,11.5);
\end{circuitikz}
}%
\end{minipage}
\caption{Illustration of Algorithm \ref{algo:focalrand_multiple} showing overlap between module sets.
{\em Left:} Each dashed circle represents a module in $\mathbb S_2$, while the shaded square, labeled as ``c", is an eligible randomization unit that overlaps with those in $\mathbb S_1$, and is realized to be treated.
{\em Right:} Treatments in $\mathbb S_2$ are realized. The randomization test is then conditional on unit ``c" being treated.}
\label{fig:Focalalgo_multiple_graph}
\end{figure}

\begin{algorithm}[t!]
\caption{Test for monotone hypothesis, $H_0: y_i(0, \expov_1) \geq y_i(0, \expov_2)\geq\cdots \geq y_i(0, \expov_K)$}
\label{algo:focalrand_multiple}
\begin{algorithmic}[1]
\REQUIRE Exposure set $\exposet = \{\expov_1, \expov_2,\ldots, \expov_K\}$; observed treatment $\Zobs$; observed outcome $\Yobs$ (Input).

\hspace{-24px} {\bf Output:} Finite-sample valid $p$-value for $H_{0}$, $p_{\mathrm{FCT}}$.
\FOR{$k = 1,2,\ldots, K-1$}
    \STATE Construct module set $\mathbb S_k = \{\mS_{k,1}, \ldots, \mS_{k, L_k}\}$ such that $\ef(\mS_{k, \ell}) \cap \mathbb S_{<k} = \emptyset$ for all $\ell \in [L_k]$, where $\mathbb S_{<k} := \bigcup_{k'<k} \bigcup_{\ell' \in [L_{k'}]} \mS_{k', \ell'}$ with the convention $\mathbb S_{<1} := \emptyset$.
    \STATE Apply Algorithm~\ref{algo:focalrand_single_general} with module set $\mathbb S_k$, observed treatment $\Zobs$ and outcome $\Yobs$, and conditional set $\mathbb S_{<k}$. Get $p$-value $\pval_k$.
\ENDFOR
\STATE Output the sequence of $p$-values $(\pval_k)_{k=1}^{K-1}$ and the combined $p$-value $p_{\mathrm{FCT}}$ using~\eqref{eq:pval_FCT}.
\end{algorithmic}
\end{algorithm}

\begin{theorem}\label{thm:focaltest_mono}
Suppose that the treatment assignment follows a non-uniform Bernoulli design 
in Definition \ref{def:nunifBRE}. 
Then, each $p$-value from Algorithm~\ref{algo:focalrand_multiple}, $\pval_k$, is valid for  $H_{0k}: y_i(0, \expov_k) \ge y_i(0, \expov_{k+1})~\forall i \in [N]$.
Moreover, these $p$-values are stochastically larger than uniform, so the combined $p$-value from Line 5 of Algorithm~\ref{algo:focalrand_multiple} is valid under the monotone spillover hypothesis $H_0$. Specifically, 
\[
\mathbbm P\big(p_{\mathrm{FCT}} \leq \alpha ~|~ H_{0} \big) \leq \alpha,~\text{for all}~\alpha \in [0,1],
\]
where the probability is with respect to treatment randomization from the Bernoulli design.
\end{theorem}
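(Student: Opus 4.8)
The plan is to establish the theorem in three stages: (i) validity of each individual $p$-value $\pval_k$ for its contrast null $H_{0k}$ conditional on all treatments constructed in earlier steps; (ii) the stochastically-larger-than-uniform property via Lemma~\ref{lem:recur_pval}; and (iii) combining the two with the stated validity of Fisher's rule.

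First I would fix $k$ and analyze Algorithm~\ref{algo:focalrand_single_general} run on module set $\mathbb S_k$ with conditioning set $C = \mathbb S_{<k}$. The key observation is that, because the eligible focal units of $\mathbb S_k$ are disjoint from $\mathbb S_{<k}$ and the modules within $\mathbb S_k$ are disjoint, the active randomization units $\ar(\Zobs;\mS_{k,\ell})$ for $\ell\in\mathcal L^\obs$ are disjoint across $\ell$, and conditioning on the treatments of $C=\mathbb S_{<k}$ together with the observed treatments of all units outside $\bigcup_\ell \mS_{k,\ell}$ leaves exactly a product (over active modules) of the conditional Bernoulli distributions $r_{\ell,k}(\cdot)$ in~\eqref{eq:rdist_single}. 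This is the conditioning event used in the conditional randomization test: given this event, the observed outcomes on the active focal units $\mU^\obs$ are all of the form $y_i(0,\expov_k)$ or $y_i(0,\expov_{k+1})$, which under $\widetilde H_{0k}$ are equal and hence imputable. So the randomization distribution of $T^{(r)}$ in Algorithm~\ref{algo:focalrand_single_general} exactly reproduces the null distribution of $\Tobs$ under $\widetilde H_{0k}$, giving $\mathbbm P(\pval_k \le p_k \mid \widetilde H_{0k}, \mathcal F_{<k}) \le p_k$ where $\mathcal F_{<k}$ denotes the sigma-field generated by the treatments on $\mathbb S_{<k}$ and all treatments outside $\bigcup_\ell \mS_{k,\ell}$. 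The next step is to upgrade from $\widetilde H_{0k}$ to the one-sided $H_{0k}$: here I invoke the exposure-monotonicity of the test statistic $t_k$ together with the bounded-null argument of~\citet{caughey2023bounded}. Concretely, under $H_{0k}$ the observed outcomes are pointwise dominated (in the exposure-$\expov_{k+1}$ direction) by a hypothetical outcome vector satisfying $\widetilde H_{0k}$ obtained by adding nonnegative shifts $\eta_i\ge 0$ to the $\expov_{k+1}$-exposed units and $\xi_i\le 0$ to the $\expov_k$-exposed units; by~\eqref{eq:expo_monotone} the test statistic evaluated on the true data is no larger than on the shifted data for every draw in the randomization distribution, so the $p$-value computed under the true outcomes stochastically dominates the (valid) $p$-value computed under the shifted outcomes, and hence $\mathbbm P(\pval_k \le p_k \mid H_{0k}, \mathcal F_{<k}) \le p_k$. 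This monotone/coupling comparison between the true and the sharpened data is the step I expect to be the main obstacle, since it requires checking that the conditioning event (which treatments are fixed, which are randomized) is unaffected by the outcome shift --- this is exactly part (a) of the exposure-monotonicity definition, that $t_k$ depends on $y$ only through $\mU^\obs$ and on $z$ only through $\mU^\obs \cup \nei(\mU^\obs)$ --- and that the shift is simultaneously valid across all randomization draws, not just the observed one.

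Second, to get the joint stochastically-larger-than-uniform conclusion, I would apply Lemma~\ref{lem:recur_pval} with $Q_1, \ldots, Q_{K-1}$ taken to be the data (treatments and outcomes) on the module sets $\mathbb S_1, \ldots, \mathbb S_{K-1}$ respectively (or more precisely the treatments on $\mathbb S_{<k}$, which is what Algorithm~\ref{algo:focalrand_multiple} conditions on at step $k$). By construction $\pval_k$ is computed from $\Zobs, \Yobs$ restricted to $\bigcup_{k'\le k}\mathbb S_{k'}$ together with their neighborhoods, and is therefore a function of $(Q_1, \ldots, Q_k)$; and step (i) gives the required conditional bound $\mathbbm P(\pval_k \le p_k \mid Q_1,\ldots,Q_{k-1}) \le p_k$ under $H_0 \subseteq H_{0k}$. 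Here one must be slightly careful that the conditioning in step (i) is on a possibly larger sigma-field than $\sigma(Q_1,\ldots,Q_{k-1})$ (it also fixes treatments outside the modules), but since the bound holds conditionally on that larger field it holds after averaging down to $\sigma(Q_1,\ldots,Q_{k-1})$ by the tower property. Hence $(\pval_1,\ldots,\pval_{K-1})$ is stochastically larger than uniform under $H_0$.

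Finally, I would conclude by the stated property of Fisher's combination rule~\eqref{eq:pval_FCT}: if $(\pval_k)_k$ is stochastically larger than uniform, then $p_{\mathrm{FCT}}$ satisfies $\mathbbm P(p_{\mathrm{FCT}} \le \alpha) \le \alpha$ for all $\alpha\in[0,1]$. Combining this with the previous paragraph gives $\mathbbm P(p_{\mathrm{FCT}} \le \alpha \mid H_0) \le \alpha$, which is the claim. I would also remark that the same argument works verbatim for Stouffer's, Cauchy's, and Bonferroni's rules, since all that is used of the combination is that it maps stochastically-larger-than-uniform inputs to a valid $p$-value.
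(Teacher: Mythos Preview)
Your three-stage plan matches the paper's proof, and stages (ii) and (iii) are correct as stated.

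The description of the bounded-null step in stage (i), however, has the shift running the wrong way. You propose to shift the \emph{observed} outcomes by $\eta_i\ge 0$ on $\expov_{k+1}$-exposed units and $\xi_i\le 0$ on $\expov_k$-exposed units to reach a hypothetical vector ``satisfying $\widetilde H_{0k}$'', and then claim $\pval_k$ dominates the $p$-value on the shifted data. But an upward-in-$\expov_{k+1}$ shift of $\Yobs$ increases the test statistic at every randomization draw \emph{and} at the observed assignment, so no $p$-value ordering follows from this comparison. The paper's argument goes in the opposite direction: set $\eta_i = \Yobs_i - y_i(0,\expov_{k+1}) \ge 0$ and $\xi_i = \Yobs_i - y_i(0,\expov_k) \le 0$, fixed numbers independent of the randomization draw, and observe that for any draw $\expof$ the base vector $(y_i(0,\expof_i))_i$ shifted by these $(\eta,\xi)$ exactly \emph{equals} $\Yobs$. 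Exposure monotonicity then gives $t(\expof,\Yobs)\ge t\bigl(\expof,(y_i(0,\expof_i))_i\bigr)$ for every draw, while at the observed assignment equality holds because $\Yobs_i = y_i(0,\expof_i^{\obs})$; hence $\pval_k \ge \pval_k^*$, where $\pval_k^*$ is the $p$-value one would compute knowing the true potential outcomes, and $\pval_k^*$ is valid by the sharp-null argument of Step~1. This also resolves your simultaneity concern: the same fixed $(\eta_i,\xi_i)$ works for every draw because the indicator $\mathbbm 1\{\expof_i(z)=\expov_{k+1}\}$ in the exposure-monotone definition selects the appropriate sign automatically.
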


\begin{remark}[Covariate adjustment]\label{rem:covadj}
At each step $k$ in Algorithm~\ref{algo:focalrand_multiple}, we can replace the outcome $Y_i$ input into Algorithm~\ref{algo:focalrand_single_general} by $Y_i - \widehat f_{k-1}(X_i)$ for all $i \in \bigcup_{\ell \in [L_k]} \mS_{k,\ell}$, for some model $\widehat f_{k-1}$ fitted using data $(\Yobs_j, X_j)$ for $j \in \mathbb S_{<k}$ only. For example, $\widehat f_{k-1}$ can be a linear regression of $\Yobs_j$ on $X_j$ using $j \in \mathbb S_{<k}$, or simply subtracting the pre-treatment baseline outcome $Y_i^{\text{pre}}$ from $\Yobs_i$. 
The test remains valid because both $\widehat f_{k-1}$ and $X_i$ does not change with $\Zobs_i$ for $i \in \bigcup_{\ell \in [L_k]} \mS_{k,\ell} \setminus \mathbb S_{<k}$ (recall that the randomization test at step $k$ conditions on $\Zobs_{\mathbb S_{<k}}$), so that $H_{0k}$ implies the monotonicity of the adjusted potential outcomes: $y_i(0, \expov_k) - \widehat f_{k-1}(X_i) \geq y_i(0, \expov_{k+1}) - \widehat f_{k-1}(X_i)$ for all $i \in \bigcup_{\ell \in [L_k]} \mS_{k,\ell} \setminus \mathbb S_{<k}$.
\end{remark}

\begin{remark}[Relaxation of Algorithm~\ref{algo:focalrand_multiple}]
We can relax the requirement in Line 2 of Algorithm~\ref{algo:focalrand_multiple} to allow some eligible randomization units constructed at step $k'<k$ to also serve as eligible focal units at step $k$ for testing $H_{0k}$, provided they are observed to be in control. This is because we condition on the treatments of $\mathbb S_{<k}$ so the re-used units should be in control to realize the desired control potential outcomes, and the randomization occurs only at their neighbors, which could be outside $\mathbb S_{<k}$ to avoid a degenerate randomization distribution.
\end{remark}

\begin{remark}[Comparison with previous literature]
The result in Theorem~\ref{thm:focaltest_mono} advances the growing literature of randomization tests under interference~\citep{athey2018exact, basse2019randomization, puelz2022graph}. Prior methods cannot directly test monotone spillover hypotheses such as $H_0$, and are mainly designed to test simpler null hypotheses of the form defined in Equation~\eqref{eq:null_eq_k}. 
Moreover, while methods like the ``biclique method" of~\citet{puelz2022graph} can be adapted to test monotone spillovers, they can be computationally demanding, particularly for large networks. We will discuss this in more detail in Section~\ref{sec:method_clique}. Our method is also related to the one proposed in~\cite{zhang2021multiple}, which provides a general recipe for recursively constructing conditional randomization tests. Their method also builds on similar results to~\cite{rosenbaum2011some} as we do in Step 4 of Section~\ref{sec:overview}.
The key challenge in the recursive construction, however, is that it is context-specific.
It is unclear how their approach could be adapted to our monotone spillover hypothesis, where each individual hypothesis involves a contrast of exposure levels.
\end{remark}

\section{Methodology under Arbitrary Designs}\label{sec:method_clique}
The procedure outlined in Algorithm~\ref{algo:focalrand_multiple} relies heavily on the structure of the non-uniform Bernoulli design of Definition~\ref{def:nunifBRE}.
In this section, we discuss how to test $H_0$ under an arbitrary experimental design $P(\cdot)$. 

As before, we begin with a valid test for $\widetilde H_{0k}$, which 
under general designs is possible through the biclique testing procedure of~\cite{puelz2022graph}. Roughly speaking, this procedure translates a null hypothesis on treatment exposures into a bipartite graph ---known as the ``null exposure graph"--- and conditions the randomization test
on a biclique of this graph. 
The details are provided in Appendix~\ref{appdx:biclique_test}.
Additionally, using similar arguments to Theorem~\ref{thm:focaltest_mono}, the biclique test will be valid for $H_{0k}$ under an exposure-monotone test statistic.

With the biclique test for $H_{0k}$, we can test the full monotone null $H_0$ 
based on the idea of network splitting outlined in Section~\ref{sec:overview}.
Concretely, consider a partition of the network $\mathcal G = (V, E)$ into $\{\mathcal G_k\}_{k\in[K]}$, where $\mathcal G_k = (V_k, E_k)$, such that $V_k \cap V_{k'} = \emptyset$ for $k\neq k' \in [K]$.
Then, we can conduct a biclique test of the null $H_{0k}$ 
within sub-network $\mathcal G_k$, sequentially for $k=1,2,\ldots, K-1$, and combine the resulting $p$-values using Fisher's combination rule.
Importantly, in order to apply Lemma~\ref{lem:recur_pval} and guarantee the validity of the combination,
when testing $H_{0k}$ on $\mathcal G_k = (V_k, E_k)$, we want to ensure that the exposure function $\expof_i(\cdot)$, for $i\in V_k$, is computable using only treatments of units in $V_{\leq k} := \bigcup_{k' \leq k} V_{k'}$, in the sense that $\expof_i(z) = \expof_i(z_{V_{\leq k}})$. 
Thus, before conducting the biclique test on $\mathcal G_k$, we may simply remove units in $V_k$ whose exposures depend on treatments of units outside $V_{\leq k}$. 

The partition of the network is not unique, however. 
Heuristically, the partition could aim to maximize the connections between nodes within each sub-network, and minimize the connections between nodes in different sub-networks.
Several existing algorithms in the community detection literature work in this way, such as 
the Leiden algorithm in \cite{traag2019leiden}. 
We discuss this algorithm in Appendix~\ref{appdx:assignment} and provide a way to select the partition based on statistics of null exposure graphs. Importantly, the selection procedure can help improve our tests' power without affecting their validity.

Given a choice of the partition, Algorithm~\ref{algo:general_design} presents our proposed randomization test for the monotone null $H_0$ under a general design $P(\cdot)$.
The validity of the algorithm is given in Theorem~\ref{thm:generaltest} that follows. The proof is presented in Appendix~\ref{a:proof}.

\begin{algorithm}[t!]
\caption{Test for  $H_0: y_i(0, \expov_1) \geq y_i(0, \expov_2)\geq\cdots \geq y_i(0, \expov_K)$ under general design.}
\label{algo:general_design}
\begin{algorithmic}[1]
\REQUIRE Exposure set $\exposet = \{\expov_1, \expov_2,\ldots, \expov_K\}$; network $\mathcal G$ and a partition $\{\mathcal G_k\}_{k \in [K]}$ with $\mathcal G_k = (V_k, E_k)$;  design $P(\cdot)$; observed treatment $\Zobs$ (Input).

\hspace{-24px} {\bf Output:} Finite-sample valid $p$-value for $H_{0}$, $p_{\mathrm{FCT}}$.
\FOR{$k = 1,2,\ldots, K-1$}
    \STATE Define $V_k' \leftarrow \{i \in V_k: \expof_i(z) ~\text{is computable with}~ (z_i: i \in V_{\leq k}) \}$.
    \STATE Run the biclique test on $V_k'$, using as 
    input (i) the conditional distribution of $Z_{V_k}$ given $(\Zobs_i)_{i\in V_{<k}}$ where $V_{<k} := \bigcup_{k'<k} V_{k'}$, denoted as $P_k(\cdot)$;
    and (ii) an exposure-monotone test statistic $t_k$ in the order $(\expov_k, \expov_{k+1})$.
    \STATE From the biclique test above, obtain the biclique decomposition $\{\mathcal C_j^k\}_{j\in J_k}$ (defined in Appendix~\ref{appdx:biclique_test}) and the corresponding $p$-value $\pval_k$.
\ENDFOR
\STATE Output the sequence of $p$-values $(\pval_k)_{k=1}^{K-1}$ and the combined $p$-value $p_{\mathrm{FCT}}$ using~\eqref{eq:pval_FCT}.
\end{algorithmic}
\end{algorithm}

\begin{theorem}\label{thm:generaltest}
For any $k \in [K-1]$ and $\alpha_k \in [0,1]$ in Algorithm~\ref{algo:general_design},
\[
\mathbbm P_{Z^{\mathrm{obs}}_{V_k} \sim P_k(\cdot)} \left(\pval_k \leq \alpha_k ~\big|~ \{\mathcal C_j^k\}_{j\in J_k}, ~(\Zobs_i)_{i \in V_{<k}}, ~H_{0k}\right) \leq \alpha_k,
\]
where $P_k(\cdot)$ is defined in Line 3 of Algorithm~\ref{algo:general_design} as $P_k(Z_{V_k}) = P(Z_{V_k} | Z^{\mathrm{obs}}_{V_{<k}})$.
Moreover, the $p$-values $(\pval_k)_k$ are stochastically larger than uniform, so the combined $p$-value from Line 6 of Algorithm~\ref{algo:general_design} is valid under the monotone spillover hypothesis $H_0$. That is, $\mathbbm P(p_{\mathrm{FCT}} \leq \alpha ~|~ H_{0} ) \leq \alpha$ for all $\alpha \in [0,1]$,
where the probability is taken with respect to design the $P(\cdot)$.
\end{theorem}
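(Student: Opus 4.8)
The argument will mirror the proof of Theorem~\ref{thm:focaltest_mono}, with the biclique test of~\citep{puelz2022graph} replacing the module-based Algorithm~\ref{algo:focalrand_single}. It decomposes into three steps: (i) for each $k\in[K-1]$, $\pval_k$ is a valid conditional $p$-value for the one-sided contrast $H_{0k}$, given the biclique decomposition $\{\mathcal C_j^k\}_{j\in J_k}$ and the treatments $\Zobs_{V_{<k}}$; (ii) the sequence $(\pval_1,\ldots,\pval_{K-1})$ is stochastically larger than uniform under $H_0$; and (iii) Fisher's rule~\eqref{eq:pval_FCT} converts such a sequence into a valid overall $p$-value. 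Step (iii) is exactly the property recalled immediately after the display of~\eqref{eq:pval_FCT}, so the content lies in (i) and (ii).

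\emph{Step (i).} Fix $k$ and work conditionally on $\Zobs_{V_{<k}}$. Since the $V_k$ are disjoint, the conditional law of $Z_{V_k}$ under the design equals $P_k(\cdot)=P(Z_{V_k}\mid\Zobs_{V_{<k}})$. After restricting to $V_k'$, for every $i\in V_k'$ the exposure $\expof_i(\Zobs)$ is---conditionally on $\Zobs_{V_{<k}}$---a deterministic function of $Z_{V_k}$ alone, and $\Yobs_i=y_i(\Zobs_i,\expof_i(\Zobs))$ by~\eqref{eq:exposure}; in particular, although the exposure-monotone statistic $t_k$ may read treatments in $\nei(\mU)$ lying in $V_{>k}$, it only ever uses exposure values of units in $V_k'$, so it is effectively a function of $Z_{V_k}$. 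Thus the conditional problem is a genuine conditional randomization test, and I would invoke the validity of the biclique test of~\citep{puelz2022graph} for the sharp contrast $\widetilde H_{0k}$ of~\eqref{eq:null_eq_k}: conditional on the selected biclique---a measurable function of $\Zobs_{V_k}$ and $\{\mathcal C_j^k\}_{j\in J_k}$---it produces a valid $p$-value for $\widetilde H_{0k}$, and averaging over the selection gives $\mathbbm P(\pval_k\leq\alpha_k\mid\{\mathcal C_j^k\}_{j\in J_k},\Zobs_{V_{<k}},\widetilde H_{0k})\leq\alpha_k$. To pass from $\widetilde H_{0k}$ to $H_{0k}$ I would use exposure monotonicity of $t_k$ in the order $(\expov_k,\expov_{k+1})$ exactly as in the proof of Theorem~\ref{thm:focaltest_mono}: complete the potential outcomes to satisfy $\widetilde H_{0k}$ by imputing the missing entries with the observed outcomes, and apply~\eqref{eq:expo_monotone} with shifts $\eta_i\geq0\geq\xi_i$ dictated by the signs of $y_i(0,\expov_k)-y_i(0,\expov_{k+1})\geq0$, concluding that the observed statistic is no larger than the one obtained under this sharp-null completion; hence its tail probability is dominated by the tail probability under $\widetilde H_{0k}$, which gives the claimed bound with $H_{0k}$ in place of $\widetilde H_{0k}$. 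This is the interference extension of the effect-increasing, bounded-null argument of~\citep{caughey2023bounded}.

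\emph{Step (ii).} I would apply Lemma~\ref{lem:recur_pval} with $Q_k:=\Zobs_{V_k}$, augmented by any auxiliary Monte-Carlo or biclique-cover randomness used at step $k$. First, $\pval_k$ is a function of $(Q_1,\ldots,Q_k)$: all of $P_k$, the index set $V_k'$, the decomposition $\{\mathcal C_j^k\}_{j\in J_k}$, and the outcomes $(\Yobs_i)_{i\in V_k'}$ depend on $\Zobs$ only through $\Zobs_{V_{\leq k}}$ by the measurability noted in Step (i), and any extra randomness is absorbed into $(Q_1,\ldots,Q_k)$. Second, under $H_0$ every $H_{0k}$ holds, so Step (i) applies; integrating its conclusion over the conditional law of $\{\mathcal C_j^k\}_{j\in J_k}$ given $\Zobs_{V_{<k}}$ (the tower property), and using that the conditional law of $Q_k$ given $Q_1,\ldots,Q_{k-1}$ is exactly $P_k$, yields $\mathbbm P(\pval_k\leq p_k\mid Q_1,\ldots,Q_{k-1})\leq p_k$ for all $p_k\in[0,1]$ and all realized $(Q_1,\ldots,Q_{k-1})$. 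Lemma~\ref{lem:recur_pval} then gives that $(\pval_1,\ldots,\pval_{K-1})$ is stochastically larger than uniform under $H_0$, and Step (iii) finishes the proof: $\mathbbm P(p_{\mathrm{FCT}}\leq\alpha\mid H_0)\leq\alpha$ for all $\alpha\in[0,1]$.

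\emph{Main obstacle.} The delicate step is (i): one must check that truncating $V_k$ to $V_k'$ and conditioning on $\Zobs_{V_{<k}}$ genuinely reduces matters to a conditional randomization test on $Z_{V_k}$ under $P_k$---so that the biclique machinery of~\citep{puelz2022graph} applies verbatim, even though a focal unit's neighborhood may reach into $V_{>k}$---and that the exposure-monotone upgrade from $\widetilde H_{0k}$ to $H_{0k}$ survives the biclique conditioning exactly as it does in Theorem~\ref{thm:focaltest_mono}. Once these are in place, Steps (ii) and (iii) are bookkeeping via Lemma~\ref{lem:recur_pval} and Fisher's rule.
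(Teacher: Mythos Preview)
Your proposal is correct and matches the paper's own proof essentially line for line: the paper invokes Proposition~\ref{prop:bicliquert_mono} (the exposure-monotone upgrade of the biclique test of~\cite{puelz2022graph}, i.e., exactly your Step~(i)), then traces the same measurability chain---$\{\mathcal C_j^k\}_{j\in J_k}$ a function of $\Zobs_{V_{<k}}$, $\mathcal C^{k,\obs}$ and hence $r_k$ and $\pval_k$ a function of $\Zobs_{V_{\le k}}$---and applies Lemma~\ref{lem:recur_pval} with $Q_k=\Zobs_{V_k}$, just as you do. Your treatment is slightly more explicit about the auxiliary randomness and about why the restriction to $V_k'$ makes $t_k$ effectively a function of $Z_{V_{\le k}}$, but the route is identical.
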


Algorithm~\ref{algo:general_design} can handle general designs but could be computationally intensive. 
The main computational challenge is to build the null exposure graph and execute the biclique test based on the conditional distribution specified recursively in Line 3.
When the treatment follows a non-uniform Bernoulli design as in Section~\ref{sec:nonunifBRE}, this conditional distribution is still a non-uniform Bernoulli design on the set $V_k$ and is computationally easy to sample from to build the null exposure graph.
Similarly, this conditional distribution has a simple structure whenever the experiment follows a completely randomized design, or a clustered or blocked completely randomized design. 
Under general designs, however, the form of this conditional distribution may be complex.
On the other hand, under the non-uniform Bernoulli design, the module-based test in Section~\ref{sec:nonunifBRE} is significantly easier computationally compared to the biclique-based test and can also achieve higher power as shown in the simulations in Section~\ref{sec:simu_results} and Appendix~\ref{appdx:assign_power} below, by leveraging the structure of the design.

\section{Simulated Studies: Validity and Power}\label{sec:simu}

In this section, we examine our methods in simulations. 
We will use as a basis a large-scale ``hotspots" policing experiment in Medellín, Colombia conducted by~\cite{collazos2021hot} to test a monotone spillover hypothesis related to the crime displacement hypothesis discussed in Example~\ref{eg:crime_spill}. 
First, we describe some background information on the experiment, 
and then demonstrate our randomization procedures using both simulated and real outcomes. 

\subsection{Background}
The units of the experiment were $N=37,055$ street segments in Medellín. Based on past crime data and input from the police, $967$ of them were selected as ``hotspot" streets. Among these hotspots, $384$ were randomly assigned to treatment consisting of a 50-80\% increase in police patrol time over a period of six months. Treatment assignment was subject to certain complex constraints imposed by the police.\footnote{For instance, the police limited the number of treated hotspots across policing stations. Moreover, due to an inadvertent coding error, 7 hotspots were always assigned to treatment~\citep{collazos2021hot}.} 
To simplify, we approximate the design through a non-uniform Bernoulli design with unit treatment probabilities calculated from the true assignment mechanism. We think this approximation is adequate as the unit-level treatments according to the true assignment mechanism are uncorrelated.

The outcomes of interest are post-treatment crime counts on five types of crime: homicides, assaults, car theft, motorbike theft, and personal robbery, as well as a crime index weighted by the relative average prison sentence.\footnote{$0.550$ for homicides, $0.112$ for assaults, $0.221$ for car and motorbike theft, and $0.116$ for robbery.}
Crime spillovers of any type are possible due to the adjacency between nearby streets. 
Let $d(i,j)$ denote the distance between two streets $i$ and $j$, defined as the geographic distance 
between their midpoints.
We consider the following exposure mapping function akin to the definition in Equation~\eqref{eq:neightreated_cont}:
\[
\expof_i(z) = \sum_{j \in \nei_i} z_j, \quad \nei_i = \{j \in [N]: d(i,j) \leq r\}.
\]
This definition counts how many streets that are neighbors of street $i$ are treated, 
where a neighbor of $i$ is any street within $r$ meters of $i$.
We choose $r = 225$, following the analysis in~\cite{collazos2021hot, puelz2022graph}.
Figure~\ref{fig:Med-histdeg} (left panel) shows the geography of Medellín and the arrangement of 
treated and control hotspot streets as they were observed in the actual experiment.
The right panel shows a histogram of the number of neighboring hotspot streets (which we refer to as ``degree") across all streets, revealing a pattern typically found in scale-free networks~\citep{kolaczyk2009statistical}.

\begin{figure}[!t]
    \centering
    \includegraphics[width = 0.38\textwidth]{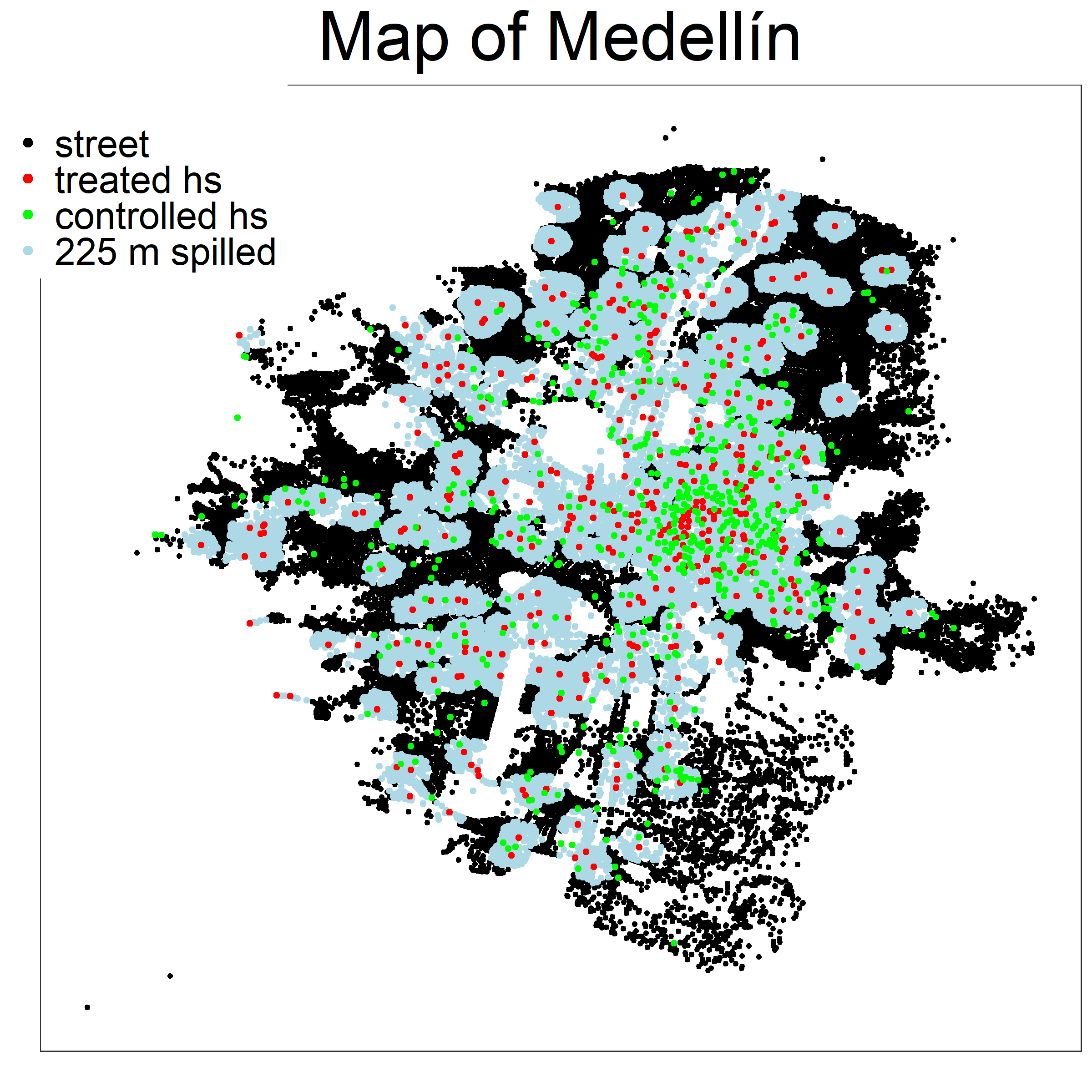}~~
    \includegraphics[width = 0.38\textwidth]{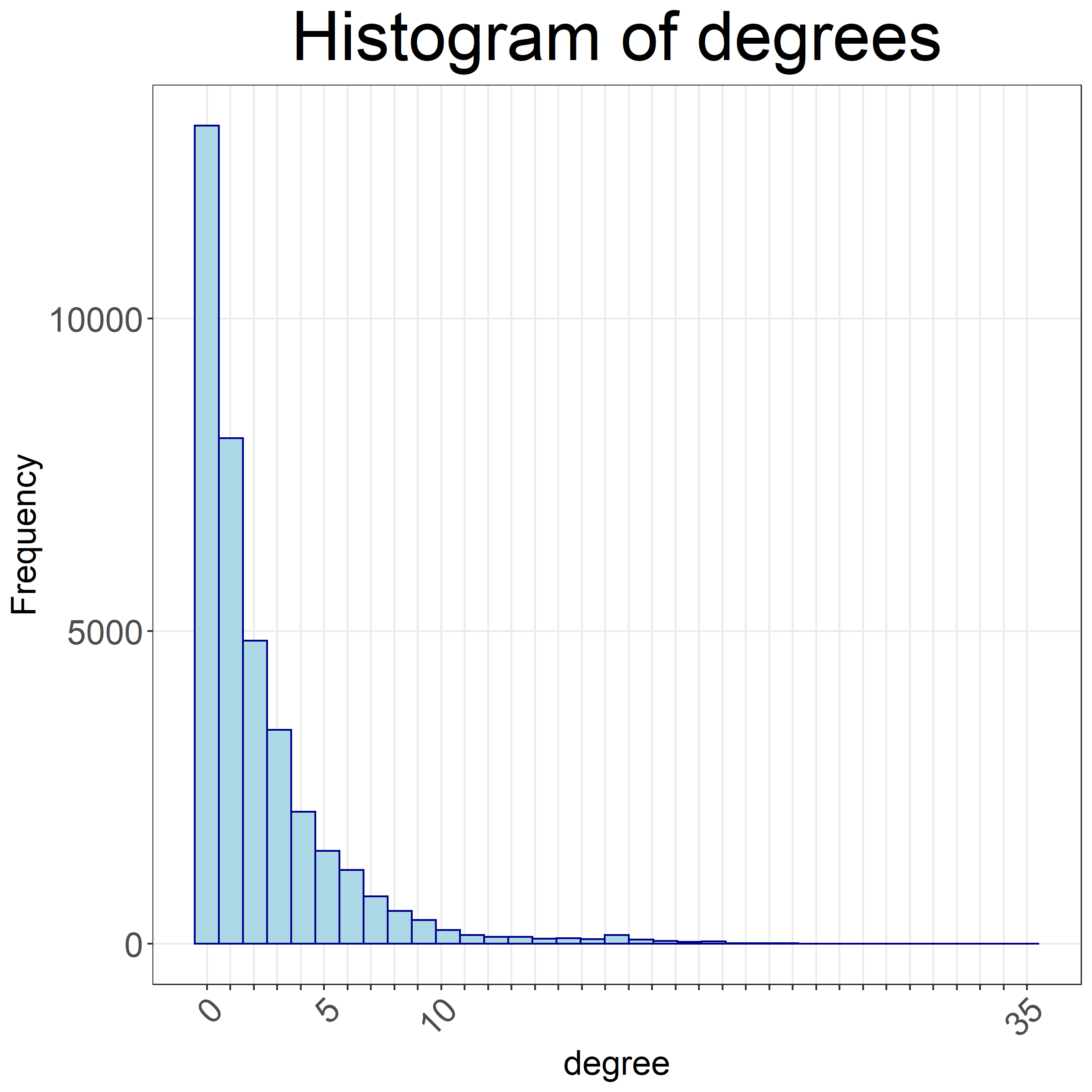}
    \caption{Map of Medellín and histogram of degrees for all units.}
    \label{fig:Med-histdeg}
\end{figure}

Based on these observations, we define $\exposet = \{0,1,2,[\geq 3]\}$ as the set of possible exposures, where ``$[\geq 3]$" denotes any exposure with at least 3 treated neighbors. That is, all units with $\expof_i \geq 3$ are grouped into one exposure category labeled as ``$[\geq 3]$". Under Assumption~\ref{a:exposure}, this implies that the controlled potential outcomes of a street are the 
same for all exposures that include at least 3 treated neighbors. 
This assumption is limiting but it simplifies the analysis and can be justified through an empirical check using randomization tests; see Appendix~\ref{appdx:grouping3} for details.
Moreover, to check the robustness of our results, in Appendix~\ref{sec:Med_5H} we consider an alternative definition of the exposure set, $\exposet = \{0, 1, 2, 3, 4, [\geq 5]\}$, where ``$[\geq 5]$" is defined similarly and allows for more exposure levels.

Given the exposure set, we consider the following monotone null hypothesis:
\begin{equation}\label{eq:mononull_med}
\begin{gathered}
    H_0 = \bigcap_{k \in [3]} H_{0k}, ~\text{where}\\
    H_{01}: y_i(0, 0) \geq y_i(0, 1),\quad H_{02}: y_i(0, 1) \geq y_i(0, 2), \quad H_{03}: y_i(0, 2)\geq y_i(0, [\geq 3]),\quad \forall i.
\end{gathered}
\end{equation}
The potential outcome $y_i(0,\expov)$ represents a certain crime occurrence when $i$ is in control and receives exposure $\expov$.
That is, we are testing the hypothesis that a street is benefited from having treated neighbor streets, 
and that this benefit is weakly monotonic. 
In the real data analysis in Section~\ref{sec:realMed}, we will also consider the above hypothesis
in the opposite direction:
\begin{equation}\label{eq:H0_crime_displacement}
\begin{gathered}
    H_0' = \bigcap_{k \in [3]} H_{0k}', ~\text{where}\\
    H_{01}': y_i(0, 0) \leq y_i(0, 1),\quad H_{02}': y_i(0, 1) \leq y_i(0, 2), \quad H_{03}': y_i(0, 2)\leq y_i(0, [\geq 3]),\quad \forall i.
\end{gathered}
\end{equation}
This hypothesis formulates the crime displacement hypothesis: a street is negatively impacted from having treated neighbor streets, and the  impact is weakly monotonic. 
We note that a rejection of an individual hypothesis $H_{0k}$ generally implies a non-rejection of $H_{0k}'$. However, this need not be true for the combined hypotheses $H_0$ and $H_0'$, since the combination typically involves a non-linear transformation of individual $p$-values.

\subsection{Simulation DGPs}\label{sec:simu_dgp}
We consider four different data generating processes (DGP) for the monotone hypothesis~\eqref{eq:mononull_med}, each showcasing a particular aspect of our proposed methodology. 
For each DGP, the control potential outcomes are generated as $y_i(0,0) \overset{iid}{\sim} \mathrm{Gamma}(\widehat \alpha, \widehat \beta)$, where $(\widehat \alpha, \widehat \beta)$ are calibrated on the real Medellín data by matching the mean and variance of the observed post-treatment crime index among units that are in control ($\Zobs_i=0$) and receive no exposure, i.e., $\expof_i(\Zobs) = 0$. The remaining potential outcomes are defined as follows.
{\small \begin{align}
y_i(z_i, \expof_i) & = y_i(0,0) \exp\{-z_i + \tau \expof_i (1 - 0.5 z_i) \}, & \tag{DGP1} \\
y_i(z_i, \expof_i) & = y_i(0,0) \exp\{-z_i + \tau  [\expof_i - 2\mathbbm{1}(\expof_i=1)] (1 - 0.5 z_i) \},& \tag{DGP2}\\
y_i(z_i, \expof_i) & = y_i(0,0) \exp\{-z_i + \tau \expof_i \cdot (1 - 0.5 z_i) + \theta d_i\}, & \tag{DGP3} \\
y_i(z_i, \expof_i) & = y_i^*(0,0) \exp\{-z_i + \tau \expof_i (1 - 0.5 z_i) \}, & \tag{DGP4}\\
\text{with}&~y_i^*(0,0) = y_i(0, 0)+ \frac{|G^{r\prime}_{i,\cdot}\varepsilon|}{\sqrt{\sum_j G^r_{i, j}}}, ~\varepsilon_i \overset{iid}{\sim} N(0,1). & \nonumber
\end{align}}DGP1 is used to illustrate the validity of our approach for $\tau \leq 0$, and its power against $\tau > 0$.
DGP2 is a variation of DGP1 with heterogeneity and non-monotonicity in the spillover effect. 
Specifically, when $\tau>0$, the spillover effect turns from positive to negative when $\expof_i=1$ for control streets, and reverses for treated streets. DGP3 introduces network-related confounding where the degree $d_i$ of street $i$ affects outcomes 
at a level controlled by parameter $\theta$.
DGP4 introduces confounding through network-correlated errors, where $G^r \in \{0, 1\}^{N\times N}$ with $G^r_{i,j} = \mathbbm 1\{d(i,j) \leq r\}$ as its $(i,j)$-th element and $G^r_{i,\cdot}$ as its $i$-th row.

For test statistics in the randomization tests, we consider the simple difference-in-means (``DiM") defined in~\eqref{eq:diff-in-mean} and the Stephenson rank sum statistics defined in~\eqref{eq:rank-stat} with $\varphi(r) = \binom{r-1}{s-1}$ if $r\geq s$ and $\varphi(r) = 0$ otherwise, for $s=5,20$. We refer to these as ``rs5" and ``rs20", respectively. Apart from Fisher's combination of $p$-values, we also consider Stouffer's combination $p_{\mathrm{Stouf}} =\Phi(\sum_k n_k\Phi^{-1}(p_k) / \|n\|_2)$, where $\Phi(\cdot)$ denotes the standard Normal CDF and $n = (n_1, n_2, n_3)$ with $n_k$ being the expected number of units with exposures $\expov_k$ or $\expov_{k+1}$ for $k = 1,2,3$. Other weighting schemes are also possible.
Additionally, as a baseline procedure, we consider a one-sided {\tt t}-test on the coefficient on $\expof_i$ in the linear regression model of $\log Y_i$ on treatment $Z_i$, exposure $\expof_i$ and their interaction $Z_i \cdot \expof_i$. 

\subsection{Simulation results}\label{sec:simu_results}
The simulation results for all DGPs are presented in Table~\ref{tab:simu_resulttables}. The results are calculated over 2,000 replications, corresponding to a $\pm0.5\%$ sampling error. 
In DGP1, we see that all methods are valid, including the baseline OLS procedure. The rank sum statistics 
is generally low-powered, partly due to homogeneity in the effects, but the difference-in-means test statistic leads to a randomization test with a power comparable to OLS (roughly 70\%). The Fisher and Stouffer combination rules differ only slightly.

In DGP2, our randomization tests are all valid in the absence of spillover effect ($\tau=0$). 
In contrast to DGP1, negative $\tau$'s lead to a violation of the monotonicity hypothesis as well. 
We see that the randomization tests with the difference-in-means statistic  are powerful against these alternatives, 
whereas OLS is not. The randomization test with the rank-sum statistic is again low-powered but is clearly better compared to OLS against those alternatives. 
Interestingly, there is a noticeable difference between the Fisher and Stouffer combination rules in terms of power under DGP2, due to the behavior of the transformation $\Phi^{-1}(p)$ when $p\to0$ and $1$. We discuss more about this issue in Appendix~\ref{appdx:pvalcomb}.

\renewcommand{\arraystretch}{0.95}
\begin{table}[!t]
\centering
{
\begin{tabular}{cccccccccc}
  \hline
\multirow{10}{*}{DGP1}  & Combination & Test statistic & 
\multicolumn{7}{c}{Spillover effect  $\tau$} \\
\cmidrule{2-10}
&  &  & -0.5 & -0.2 & -0.1 & 0 & 0.1 & 0.2 & 0.5 \\ 
\cmidrule{2-10}
  & \multirow{3}{*}{Fisher} & DiM & 0.00 & 0.00 & 0.40 & 5.15 & 24.65 & 62.25 & 100.00 \\ 
  & & rs5 & 1.20 & 2.85 & 3.45 & 5.70 & 6.55 & 7.40 & 13.60 \\ 
  & & rs20 & 0.20 & 1.10 & 2.25 & 4.90 & 9.55 & 17.20 & 50.55 \\ 
  \cmidrule{2-10}
  & \multirow{3}{*}{Stouffer} & DiM & 0.00 & 0.00 & 0.35 & 5.15 & 28.25 & 68.70 & 100.00 \\ 
  & & rs5 & 1.30 & 2.95 & 3.35 & 5.75 & 6.50 & 8.35 & 12.85 \\ 
  & & rs20 & 0.00 & 0.95 & 2.15 & 5.85 & 10.20 & 16.95 & 48.45 \\ 
  \cmidrule{2-10}
  & & OLS & 0.00 & 0.00 & 0.35 & 4.90 & 40.65 & 88.30 & 100.00 \\ 
\hline\\
\hline
\multirow{10}{*}{DGP2}  & Combination &  & \multicolumn{7}{c}{Spillover effect  $\tau$} \\
\cmidrule{2-10}
&  &  & -0.5 & -0.2 & -0.1 & 0 & 0.1 & 0.2 & 0.5 \\ 
\cmidrule{2-10}
 & \multirow{3}{*}{Fisher} & DiM & 94.45 & 18.60 & 6.10 & 4.15 & 16.60 & 52.65 & 99.85 \\ 
  & & rs5 & 2.85 & 2.90 & 4.40 & 4.95 & 6.40 & 7.05 & 13.80 \\ 
  & & rs20 & 7.50 & 2.45 & 3.85 & 5.15 & 9.70 & 16.80 & 59.15 \\ 
  \cmidrule{2-10}
  & \multirow{3}{*}{Stouffer} & DiM & 1.30 & 10.85 & 7.55 & 4.30 & 3.05 & 0.60 & 0.00 \\ 
  & & rs5 & 3.80 & 4.10 & 5.05 & 5.20 & 4.75 & 5.15 & 4.85 \\ 
  & & rs20 & 2.95 & 4.15 & 4.80 & 5.60 & 4.90 & 5.10 & 3.20 \\ 
  \cmidrule{2-10}
  & & OLS & 0.00 & 0.00 & 0.35 & 4.90 & 36.35 & 79.50 & 100.00 \\ 
\hline\\
\hline
\multirow{7}{*}{DGP3}  & Combination &  & \multicolumn{7}{c}{Degree confounding $\theta$ ($\tau=0$ throughout)} \\
\cmidrule{2-10}
&  &  & -0.3 & -0.2 & -0.1 & 0 & 0.1 & 0.2 & 0.3 \\ 
\cmidrule{2-10}
 & \multirow{3}{*}{Fisher} & DiM & 5.20 & 5.15 & 4.85 & 5.40 & 4.95 & 4.95 & 5.25 \\ 
  & & rs5 & 4.55 & 5.00 & 5.05 & 4.10 & 5.35 & 5.25 & 4.00 \\ 
  & & rs20 & 4.50 & 4.85 & 4.95 & 4.25 & 4.55 & 5.65 & 4.00 \\ 
  \cmidrule{2-10}
  & & OLS & 0.00 & 0.00 & 0.00 & 5.10 & 94.05 & 100.00 & 100.00 \\ 
\hline\\
\hline
\multirow{7}{*}{DGP4}  & Combination & tstat & \multicolumn{7}{c}{Network correlation $r$ ($\tau=0$ throughout)} \\
\cmidrule{2-10}
&  &  & 0 & 50 & 100 & 150 & 225 & 350 & 400 \\ 
\cmidrule{2-10}
 & \multirow{3}{*}{Fisher} & DiM & 4.67 & 4.60 & 4.80 & 4.50 & 5.65 & 5.03 & 4.50 \\
  & & rs5 & 4.92 & 5.20 & 5.10 & 4.35 & 5.25 & 3.92 & 3.60 \\ 
  & & rs20 & 4.67 & 4.90 & 5.30 & 4.75 & 5.00 & 4.47 & 4.35 \\ 
  \cmidrule{2-10}
  & & OLS & 5.13 & 13.90 & 23.95 & 29.80 & 34.10 & 37.14 & 38.50 \\ 
   \hline
\end{tabular}
}
\caption{Simulation results for DGP1-DGP4 under 2,000 replications. All tests are conducted at the 5\% level and 
the reported values are rejection probabilities (in \%).}
\label{tab:simu_resulttables}
\end{table}

In DGP3 and DGP4, we fix $\tau=0$ (no spillover effect) and vary the degree of network confounding through parameters $\theta$ and $r$, respectively. We therefore expect all valid procedures to reject at 5\% across all these parameters.
Indeed, the randomization-based tests are all valid across all settings.
However, the OLS-based tests are significantly distorted. For instance, in DGP3, the 
OLS test over-rejects at a level 94\%  when $\theta=0.1$. An increasing $\theta$ parameter leads 
to a spurious correlation between outcomes and network degrees, which confounds OLS.
In DGP4, OLS over-rejects from 13.9\% to 38.50\% as we increase $r$ from 50 to 400. 
An increasing parameter $r$ leads to an increasingly long-range correlation between error terms in the potential outcomes, leading in turn to a spurious correlation between observed outcomes and treatment exposures.
See Appendix~\ref{appdx:degree_medellin} for empirical evidence of the confounders.

\begin{remark}[Saturated regression]
One can also run a regression saturated at both treatment and exposure, i.e., $Y_i = \sum_{z\in\{0,1\}}\sum_{k\in[K]} \mathbbm 1\{ Z_i = z, \expof_i = \expov_k\} \beta_{z,k} + \varepsilon_i$, and test the monotone hypothesis~\eqref{eq:mononull_med} by testing $\beta_{0,k} \geq \beta_{0,k'}~\forall k \leq k'$ using the limiting distribution of $\widehat \beta_{z,k}$. The validity of this approach again depends critically on the correct specification of the model, without which the $\beta$ does not represent the true causal effect (DGP3).
Even under the correct specification, establishing the limiting distribution can also be challenging when, for example, the correlation between outcomes is too strong, as in DGP4. 
\end{remark}

\section{Application: Testing Monotonicity in Crime Spillovers}\label{sec:realMed}
We now turn to testing the monotone nulls defined in Equations~\eqref{eq:mononull_med} and~\eqref{eq:H0_crime_displacement} using the real Medellín data, where we specify $\exposet = \{0, 1, 2, [\geq 3]\}$. The robustness results for the alternative specification $\exposet = \{0, 1, 2, 3, 4, [\geq 5]\}$ are presented in Appendix~\ref{sec:Med_5H}.

The left panel of Figure~\ref{fig:hist-crimes} displays the histograms of the post-treatment counts of the five types of crime and the crime index. Generally, all these outcomes are inflated at zero, and some of them take a few large values. Also, many of the pre-treatment baseline outcomes are identical to the post-treatment ones. As a result, we adjust each outcome by its pre-treatment value and use the difference between post- and pre-treatment crime or crime index, $\Delta Y_i = Y_i^{\text{post}} - Y_i^{\text{pre}}$, as the (adjusted) outcome in test statistics. The histogram of $\Delta Y_i$ is displayed in the right panel in Figure~\ref{fig:hist-crimes}. For test statistics, we again use the difference-in-means and the rank sum statistics in Section~\ref{sec:simu_dgp}. We pay particular attention to the rank sum statistics due to their ability to detect uncommon-but-dramatic responses to treatment and often superior power under treatment effect heterogeneity~\citep{caughey2023bounded}, which could be useful for detecting the violation of individual-level monotonicity given the distribution of $\Delta Y_i$ in our setting. For brevity, in the following we will focus on crime index as the outcome. Results for other outcomes are presented in Appendix~\ref{appdx:more_emp_3H}.

To account for the uncertainty in constructing module sets, we repeat the construction $2,000$ times independently, run Algorithm~\ref{algo:focalrand_multiple} for each construction, and report twice the median of all the $2,000$ $p$-values for~\eqref{eq:mononull_med} in the third column in Table~\ref{tab:Medellin_crimeindex}, which is a valid but potentially conservative $p$-value\footnote{See, for example, Lemma 4.1 in \cite{chernozhukov2018generic}. In Appendix~\ref{appdx:more_emp_3H} we show the histograms of these $p$-values.}. 
Figure~\ref{fig:hist-fu} shows descriptive histograms of the numbers of eligible and active focal units across the $2,000$ replications.

\begin{figure}[!t]
    \centering
    \begin{tabular}{cc}
        \includegraphics[width = 0.5\textwidth]{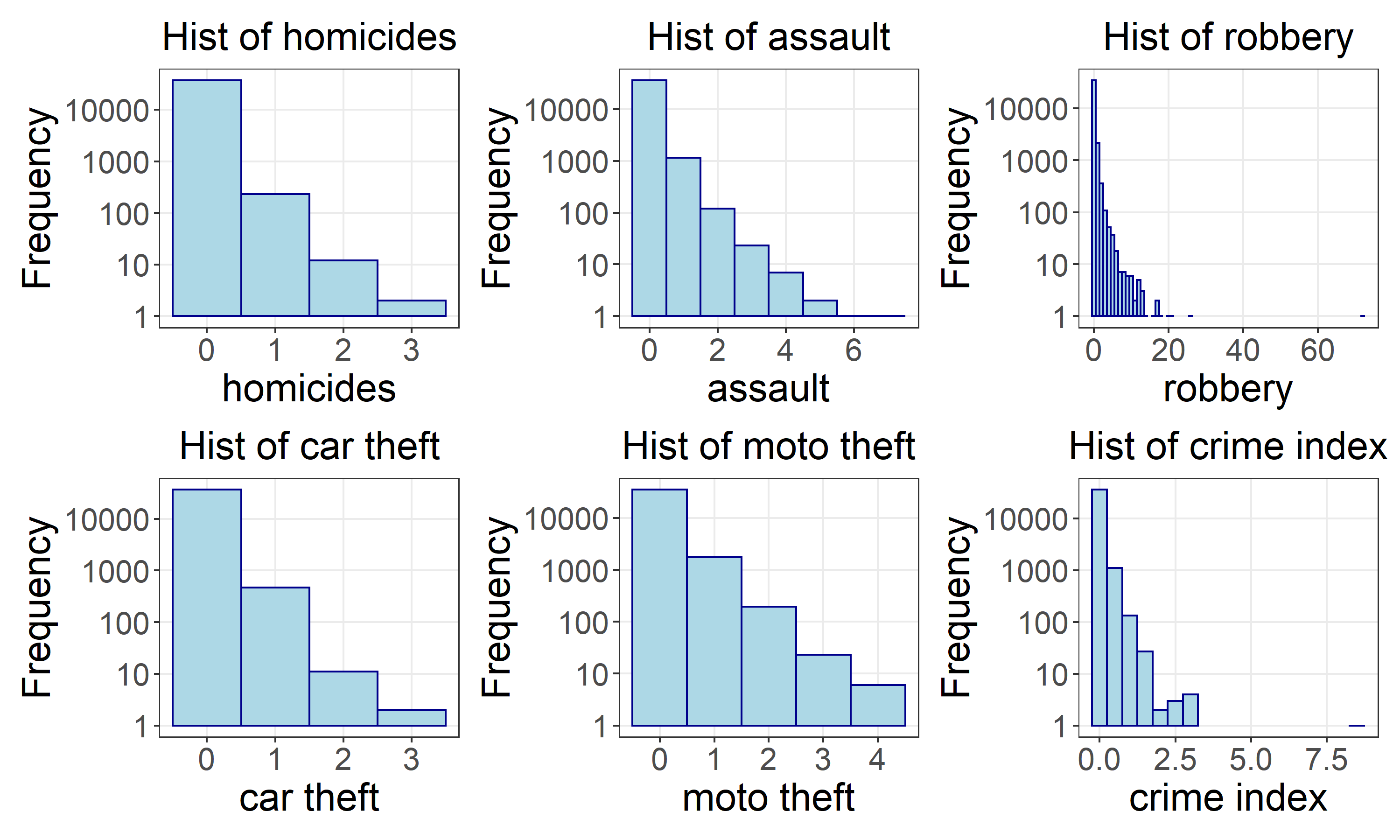} &
        \includegraphics[width = 0.5\textwidth]{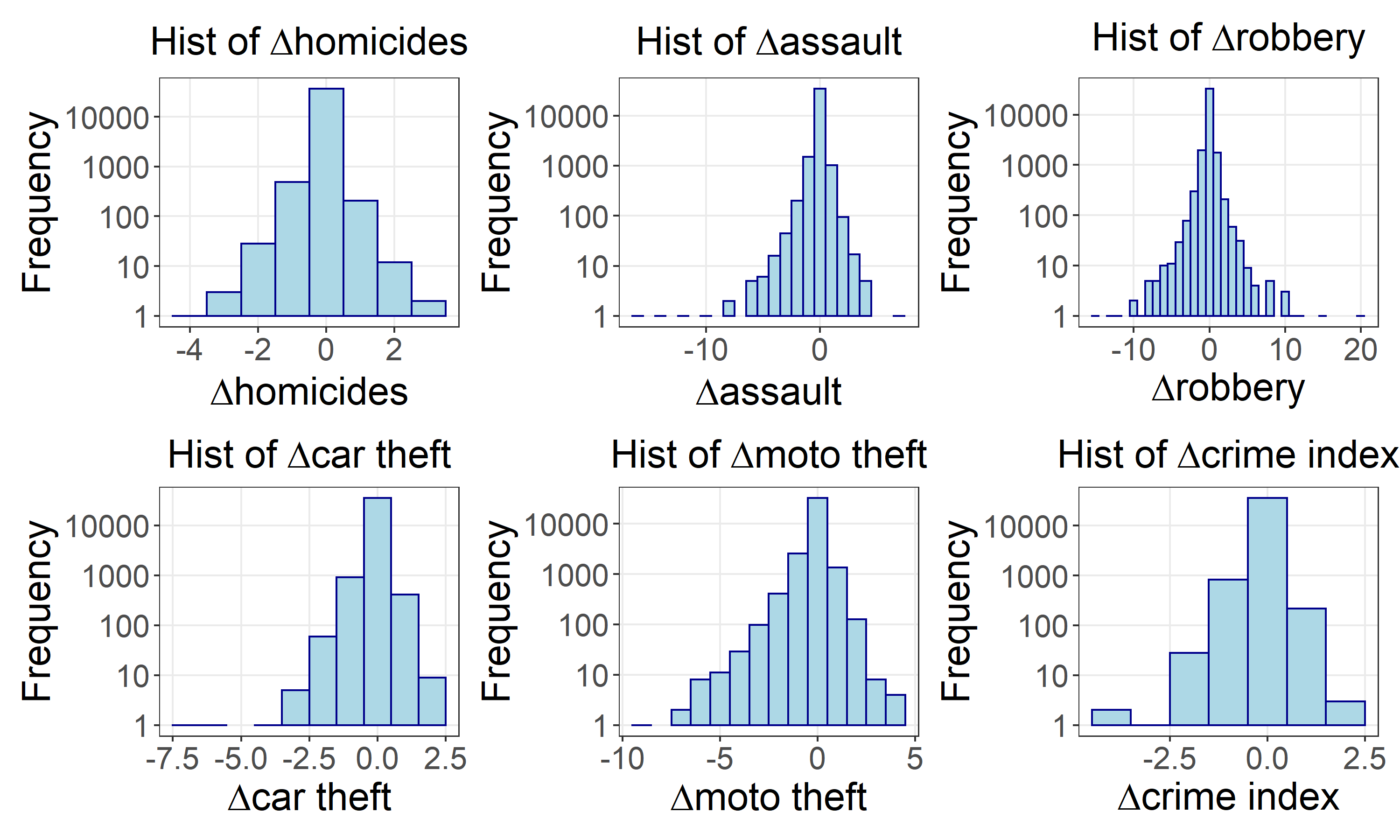} \\
    \end{tabular}
    \caption{Histogram of crimes (left) and differences in post- and pre-treatment crimes (right).}
    \label{fig:hist-crimes}
\end{figure}

\begin{figure}[!t]
    \centering
    \includegraphics[width = 0.9\textwidth]{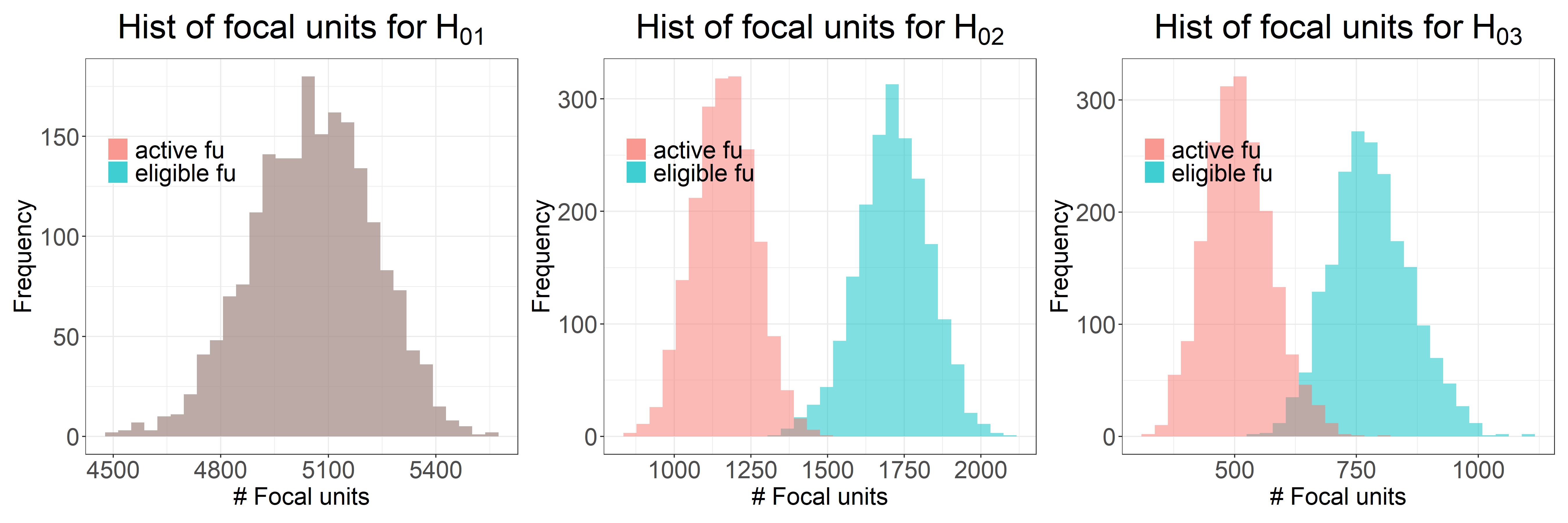}
    \caption{Histogram of number of focal units across the $2,000$ constructions.}
    \label{fig:hist-fu}
\end{figure}

To get a sense of how the focal units are distributed on the city map, we display two realizations of the algorithm in Figure~\ref{fig:3H_realization}. In general, there are far more eligible/active focal units for hypotheses involving lower levels of exposure than for those involving higher levels. 
Also, focal units for lower exposure level hypotheses tend to be located at the outskirts of the city, while focal units for higher exposure level hypotheses tend to be located at the center of the city. 
A plausible explanation for this observation is that there is higher density of hotspot streets at the city center, which is considered more disturbing, compared to the outskirts.
We also apply the idea discussed in Section~\ref{para:temp} to conduct tests on the module set that maximizes the expected number of active focal units among the $2,000$ module sets, both for the beneficial hypothesis~\eqref{eq:mononull_med} and the crime displacement hypothesis~\eqref{eq:H0_crime_displacement}. These results are presented in the fourth and fifth column of Table~\ref{tab:Medellin_crimeindex}. 

\begin{figure}[!t]
    \centering
    \includegraphics[width = 0.40\textwidth]{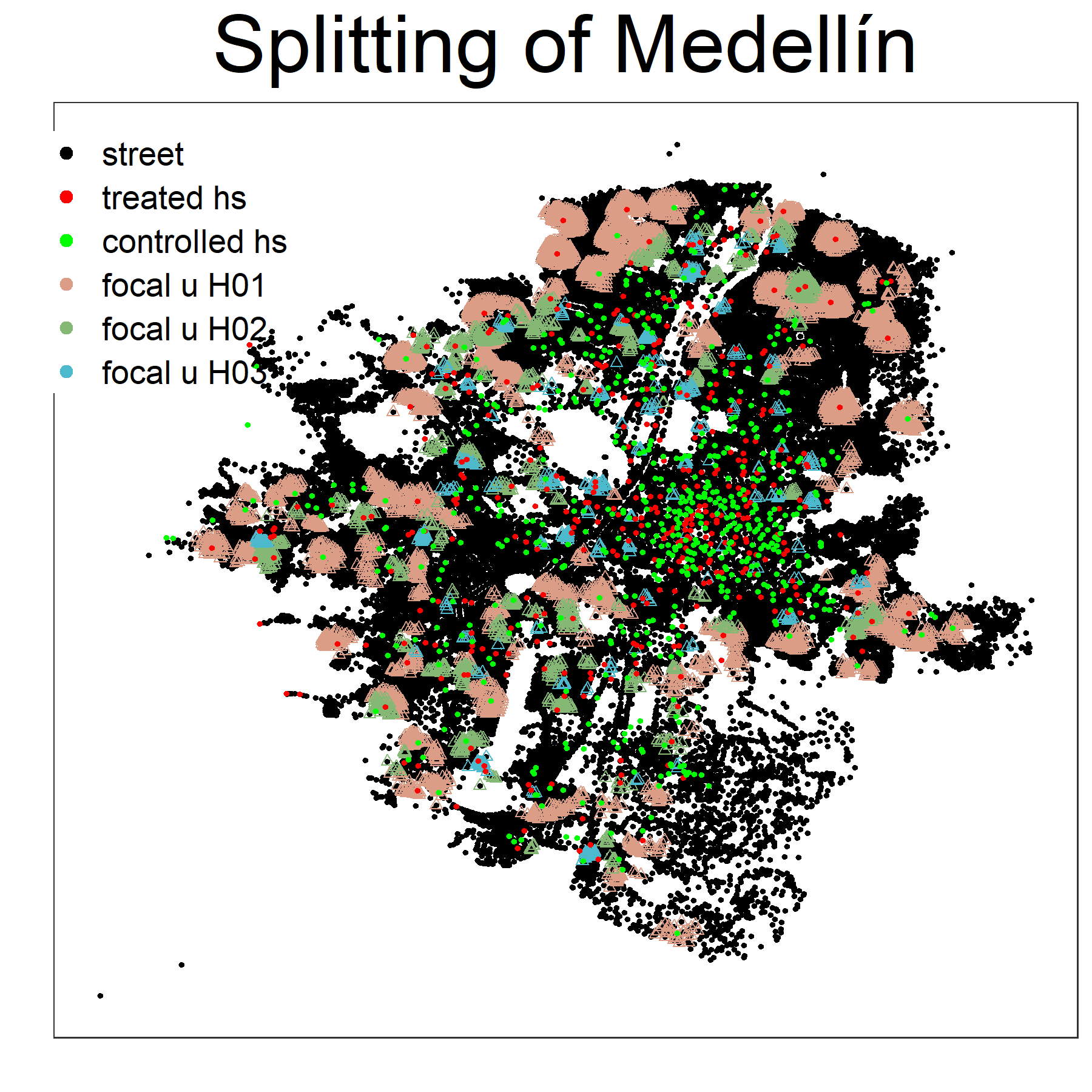}
    \includegraphics[width = 0.40\textwidth]{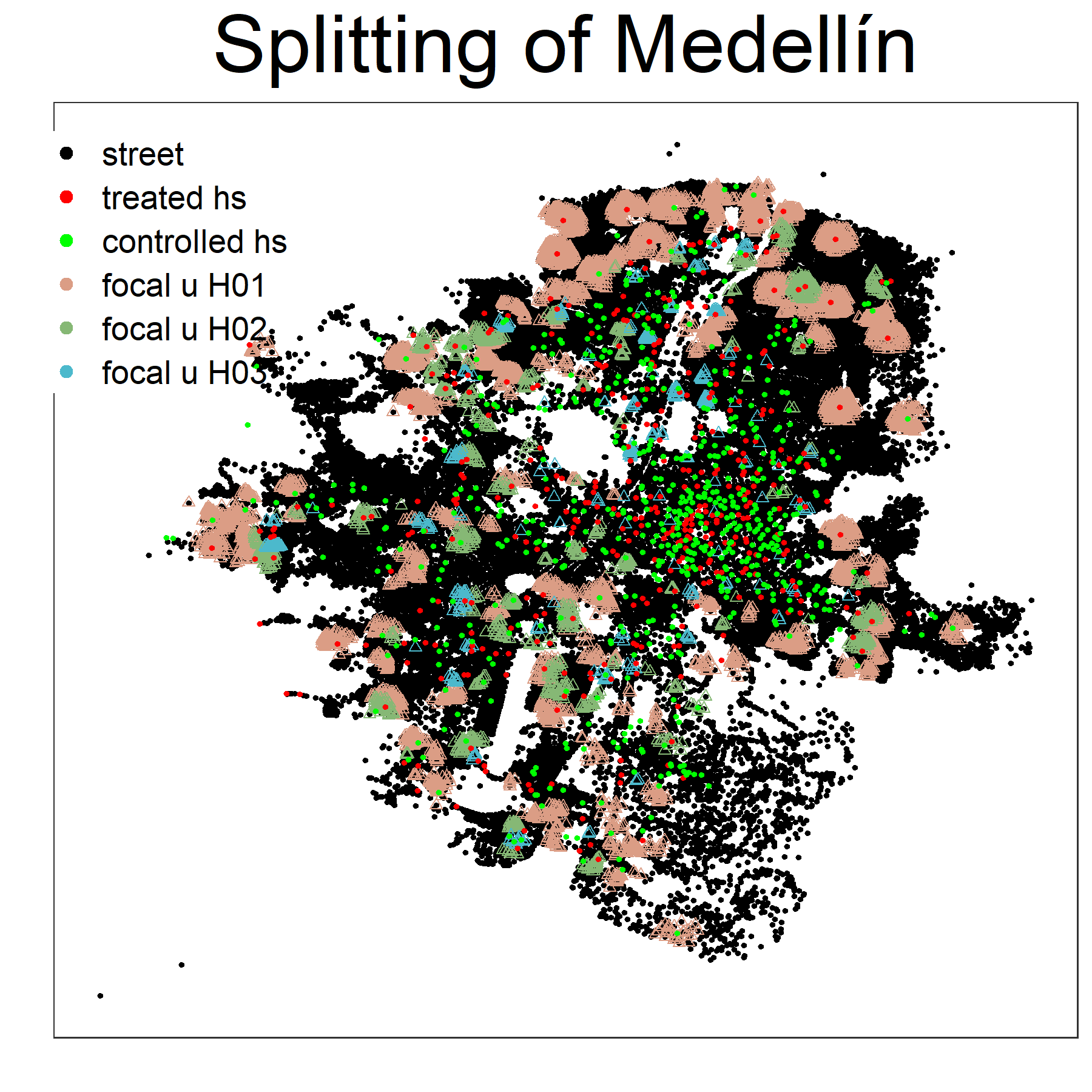}
    \caption{Two realizations of focal units for testing under exposures $\exposet = \{0, 1, 2, [\geq 3]\}$.}
    \label{fig:3H_realization}
\end{figure}

Overall, our randomization procedures consistently reject the beneficial spillover hypothesis~\eqref{eq:mononull_med} that the crime outcome is monotonically decreasing in the number of neighboring treated streets. Moreover, $p$-values tend to increase from $H_{01}$ to $H_{03}$, and the strongest rejection signal comes from $H_{01}$, so that the failure of monotonicity mainly comes from the individual null hypothesis $H_{01}: y_i(0, 0) \geq y_i(0,1)$. 
This suggests that there is negative crime spillover from nearby policing, supporting the crime displacement hypothesis which we fail to reject. Moreover, our results also provide some evidence of ``diminishing returns", as more intense nearby policing may not necessarily lead to increased crime displacement.

\renewcommand{\arraystretch}{0.95}
\begin{table}[!t]
\centering
\begin{tabular}{lcccc}
  \hline
Hypothesis & test stat & Twice median & Max AFUs & Crime disp.\ \\
  \hline
$H_{01}$ & \multirow{4}{*}{DiM} & 0.148 & \textbf{0.011} & 0.989 \\ 
  $H_{02}$ &  & 0.856 & 0.538 & 0.462 \\ 
  $H_{03}$ &  & 0.406 & 0.200 & 0.800 \\ 
  $H_0$ by FCT &  & 0.179 & \textbf{0.036} & 0.919 \\ 
  \hline
  $H_{01}$ & \multirow{4}{*}{rs5} & 0.074 & \textbf{0.008} & 0.992 \\ 
  $H_{02}$ &  & 0.128 & 0.055 & 0.945 \\ 
  $H_{03}$ &  & 0.404 & 0.182 & 0.818 \\ 
  $H_0$ by FCT &  & \textbf{0.029} & \textbf{0.004} & 0.997 \\ 
  \hline
  $H_{01}$ & \multirow{4}{*}{rs20} & 0.159 & \textbf{0.011} & 0.989 \\ 
  $H_{02}$ &  & 0.220 & 0.164 & 0.836 \\ 
  $H_{03}$ &  & 0.513 & 0.280 & 0.720 \\ 
  $H_0$ by FCT &  & 0.095 & \textbf{0.019} & 0.984 \\ 
   \hline
\end{tabular}
\caption{Third column: Twice the median of $p$-values for testing~\eqref{eq:mononull_med} across $2,000$ module set constructions. Fourth column: $p$-values for testing~\eqref{eq:mononull_med} using the module sets that maximize the expected number of active focal units. Fifth column: $p$-values for testing the crime displacement hypothesis~\eqref{eq:H0_crime_displacement} using the module sets that maximize the expected number of active focal units. \textbf{Bold} denotes value below $0.05$.} 
\label{tab:Medellin_crimeindex}
\end{table}

\section{Concluding Remarks}\label{sec:conclusion}
In this paper, we extend randomization tests for spillover effects in network settings to test the monotonicity of spillover effects, leveraging the idea of network splitting and the ``bounded null" perspective of randomization tests. 
An interesting future direction would be to study the power of our tests.
Power analysis is particularly challenging in the current setup because it
depends on the network topology, the alternative hypothesis, and the module set construction. Another valuable direction would be to formalize the test for diminishing returns of spillover effects. 
Although our current randomization tests can provide evidence for or against diminishing returns, they do not directly test the second-order derivative of the potential outcome function, which would be crucial for a formal test on diminishing returns.

\singlespacing
\bibliographystyle{chicago}

\newpage
\appendix
\renewcommand\thefigure{A\arabic{figure}}
\setcounter{figure}{0}
\setcounter{page}{1}
\renewcommand{\thepage}{A-\arabic{page}}
\renewcommand\thetable{A\arabic{table}}
\setcounter{table}{0}
\normalsize
\doublespacing

{\centering
    \section*{Supplementary Material}
    \vspace{1cm}
}

\section{Proof of Results}\label{a:proof}
\subsection{Details of exposure monotone statistics}\label{appdx:exp-mono-stat}
Verifying the exposure monotonicity follows from arguments similar to those in Proposition 1 and 2 of~\cite{caughey2023bounded}. Let $\eta_i \geq 0 \geq \xi_i$, $i \in \mathcal U$ and define $N_{z,\expov} = \sum_{i\in\mU} I_i(z, \expov)$. 
To show that the difference-in-means statistic~\eqref{eq:diff-in-mean} satisfies the exposure-monotone property, note that
\begin{align*}
&t_{\expov, \expov'}^{\mathrm{DiM}}(z, y_{\eta\xi}; \mathcal C)  - t_{\expov, \expov'}^{\mathrm{DiM}}(z, y; \mathcal C) \\
&= \sum_{i\in\mU} \frac{I_i(z, \expov')}{N_{z,\expov'}}(\psi_1(y_i + \eta_i) - \psi_1(y_i)) - \sum_{i\in\mU} \frac{I_i(z, \expov)}{N_{z,\expov}} (\psi_0(y_i + \xi_i) - \psi_0(y_i)) \ge 0,
\end{align*}
because both $\psi_1(\cdot)$ and $\psi_0(\cdot)$ are non-decreasing, as desired. From the expression above it is clear that $\psi_1$ and $\psi_0$ can also vary across $i$, as long as they are non-decreasing.

To show the rank-sum statistic~\eqref{eq:rank-stat} satisfies the exposure-monotone property, recall when there are ties we define the value of $\varphi(\cdot)$ to be the average of $\varphi(\cdot)$ evaluated at those ranks with ties broken by unit ordering, so it suffices to consider the ranks defined by
\[
r_i(y_\mU) = \sum_{j \in \mU} \delta_{ij}(y_i, y_j), \quad \text{where}~\delta_{ij}(y_i, y_j) = \mathbbm 1\{y_i > y_j\} + \mathbbm 1\{y_i = y_j, i \geq j\}.
\]
The desired conclusion now follows from Lemma A3 in~\cite{caughey2023bounded} by mapping our $(y_{\mU}, \mU, I_i(z, \expov'))$ to their $(\boldsymbol{y}, [n], \mathbbm 1\{z_i=1\})$.

\subsection{Proof of Theorem~\ref{thm:focaltest_mono}}
\begin{proof}[Proof of Theorem~\ref{thm:focaltest_mono}]
We will proceed the proof in three steps. Throughout $\{\mathbb S_k\}_k$ is considered fixed.

\underline{1. $\pval_k$ is valid for $\widetilde H_{0k}$:} We will apply Theorem 1 in~\cite{basse2019randomization}. Under the non-uniform Bernoulli design, it suffices to consider treatments within $\mathbb S_{\leq k}$. Given $\mathbb S_{\leq k}$, define 
\[
\mathbb C_k = \{(\mathcal U_k, ~\mathcal Z_k):~ \mathcal U_k \subseteq \mathbb S_{\leq k}, ~\mathcal Z_k \subseteq \{0,1\}^{|\mathbb S_{\leq k}|}\},
\]
the space of ``conditioning event"~\citep{basse2019randomization}. The preprocessing step in Algorithm~\ref{algo:focalrand_single_general} defines the following ``conditioning mechanism" that maps any $Z_{\mathbb S_{\leq k}} \in \{0,1\}^{|\mathbb S_{\leq k}|}$ into a (degenerate) distribution over $\mathbb C_k$:
\[
m_k(\mathcal U_k, \mathcal Z_k | Z_{\mathbb S_{\leq k}} ) = f_k(\mathcal U_k | Z_{\mathbb S_{\leq k}}) g_k(\mathcal Z_k | \mathcal U_k, Z_{\mathbb S_{\leq k}})
\]
where 
{\footnotesize \begin{align*}
    f_k(\mathcal U_k | Z_{\mathbb S_{\leq k}}) &= \mathbbm 1\bigg\{\mathcal U_k = \bigcup_{\ell \in [L_k]}\af(Z_{\mathbb S_{\leq k}}; \mS_{k, \ell}) \bigg\} \\
    &= \mathbbm 1\bigg\{\mathcal U_k = \Big\{i \in \bigcup_{\ell \in [L_k]} \ef(\mS_{k, \ell}) : Z_{i} = 0, ~\expof_i(Z_{\mathbb S_{\leq k}}) \in \{\expov_k, \expov_{k+1}\} \Big\} \bigg\} \\
    g_k(\mathcal Z_k | \mathcal U_k, Z_{\mathbb S_{\leq k}} ) &\\
    = \mathbbm 1 \bigg\{\mathcal Z_k = &\Big\{Z' \in \mathrm{supp}(P) \cap \mathbb S_{\leq k}:~ Z'_{\mathbb S_{\leq k}\setminus(\nei(\mathcal U_k)\setminus \mathbb S_{<k})} = Z_{\mathbb S_{\leq k}\setminus(\nei(\mathcal U_k)\setminus \mathbb S_{<k})},~ \expof_i(Z_{\mathbb S_{\leq k}}') \in \{\expov_k, \expov_{k+1}\},~\forall i \in \mathcal U_k \Big\} \bigg\}.
\end{align*}}The test conditions on treatments $Z_{\mathbb S_{<k}}$ so we remove it from $\nei(\mathcal U_k)$ in $\mathbb S_{\leq k} \setminus (\nei(\mathcal U_k)\setminus \mathbb S_{<k}) = (\mathbb S_{\leq k}\setminus \nei(\mathcal U_k)) \cup \mathbb S_{< k}$.
Both $\expof_i(Z_{\mathbb S_{\leq k}})$ and $\expof_i(Z_{\mathbb S_{\leq k}}')$ are well-defined because $\expof_i(Z)$ depends only on treatments of $\nei_i$, or treatments of any other units included in $\er(\mS_{k,\ell})$, 
and by construction these units are included in $\er(\mS_{k,\ell}) \subseteq \mathbb S_{k} \subseteq \mathbb S_{\leq k}$ for all $i \in \ef(\mS_{k,\ell})$. 
The potential outcomes relevant for $\widetilde H_{0k}$, $\{y_i(0, \expov_k), y_i(0, \expov_{k+1})\}$, are imputable for each $i \in \mathcal U_k$ under any $Z_{\mathbb S_{\leq k}} \in \mathcal Z_k$ by construction. Since the test statistic uses only outcomes in $\mathcal U_k$, the test statistic is also imputable.

It remains to show that the randomization distribution in~\eqref{eq:rdist_single} coincides with $\mathbbm P(Z_{\mathbb S_{\leq k}} | \mathcal C_k^\obs) \propto m_k(\mathcal C_k^\obs | Z_{\mathbb S_{\leq k}}) P(Z_{\mathbb S_{\leq k}})$ where $\mathcal C_k^\obs = (\mathcal U_k^\obs, \mathcal Z_k^\obs) \sim m_k(\cdot|\Zobs_{\mathbb S_{\leq k}})$. That is, the proposed randomization distribution coincides with the conditional distribution of $Z_{\mathbb S_{\leq k}}$ induced by the conditioning mechanism and the design.

Given $\Zobs_{\mathbb S_{\leq k}}$ and the induced set of all active focal units $\af(\Zobs_{\mathbb S_{\leq k}}; \mathbb S_{k}) := \bigcup_{\ell \in [L_k]} \af(\Zobs_{\mathbb S_{\leq k}}; \mS_{k, \ell})$ which equals $\mathcal U_k^\obs$ almost surely, as well as the set of all active randomization units $\ar(\Zobs_{\mathbb S_{\leq k}}; \mathbb S_{k}) := \bigcup_{\ell \in [L_k]} \ar(\Zobs_{\mathbb S_{\leq k}}; \mS_{k, \ell})$, the $r_{k, \ell}(\cdot)$ in~\eqref{eq:rdist_single} defines a distribution on all units in $\mathbb S_{\leq k}$ by 
\begin{align*}
    r_k(Z_{\mathbb S_{\leq k}}=z_{\mathbb S_{\leq k}}) &\propto \mathbbm 1\{z_i = Z^{\mathrm{obs}}_i,~ \forall i \in \mathbb S_{\leq k}\setminus( \ar(\Zobs_{\mathbb S_{\leq k}}; \mathbb S_{k}) \setminus \mathbb S_{<k}) \} \\
    &~~~~~~ \times \mathbbm 1\{\expof_i(z_{\mathbb S_{\leq k}}) \in \{\expov_k, \expov_{k+1}\},~\forall i \in \af(\Zobs_{\mathbb S_{\leq k}}; \mathbb S_{k})\} \\
    &~~~~~~ \times P(z_{\mathbb S_{\leq k}}).
\end{align*}
Hence, it suffices to show that the two indicators above defines the same conditioning mechanism as $m_k$. That is,
\begin{equation}\label{eq:indicators}
\begin{split}
    &\mathbbm 1\{z_i = Z^{\mathrm{obs}}_i,~ \forall i \in \mathbb S_{\leq k} \setminus ( \ar(\Zobs_{\mathbb S_{\leq k}}; \mathbb S_{k}) \setminus \mathbb S_{<k} ) \} \\
    &~~~~~~ \times \mathbbm 1\{\expof_i(z_{\mathbb S_{\leq k}}) \in \{\expov_k, \expov_{k+1}\},~\forall i \in \af(\Zobs_{\mathbb S_{\leq k}}; \mathbb S_{k})\} \\
    =&~ f_{k}(\mathcal U_k^\obs | z_{\mathbb S_{\leq k}}) g_k(\mathcal Z_k^\obs | \mathcal U_k^\obs, z_{\mathbb S_{\leq k}}),
\end{split}
\end{equation}
almost surely. Firstly note that 
\begin{equation}\label{eq:fk_equiv}
    f_{k}(\mathcal U_k^\obs | z_{\mathbb S_{\leq k}}) = 1 \iff \af(\Zobs_{\mathbb S_{\leq k}}; \mS_{k, \ell}) = \af(z_{\mathbb S_{\leq k}}; \mS_{k, \ell}) ~\forall \ell \in [L_k].
\end{equation}
Suppose the LHS of~\eqref{eq:indicators} is $1$, then it's easy to see that $\af(z_{\mathbb S_{\leq k}}; \mS_{k, \ell}) = \af(\Zobs_{\mathbb S_{\leq k}}; \mS_{k, \ell})$ for all $\ell \in [L_k]$ by~\eqref{eq:active_focal}, which implies $f_{k}(\mathcal U_k^\obs | z_{\mathbb S_{\leq k}}) = 1$ by~\eqref{eq:fk_equiv}. Also, we can see $z_{\mathbb S_{\leq k}} \in \mathcal Z^\obs$, which implies $g_k(\mathcal Z_k^\obs | \mathcal U_k^\obs, z_{\mathbb S_{\leq k}}) = 1$. Hence RHS is also $1$. 
On the other hand, suppose the LHS of~\eqref{eq:indicators} is $0$, then either for some unit $i \in \mathbb S_{\leq k} \setminus ( \ar(\Zobs_{\mathbb S_{\leq k}}; \mathbb S_{k}) \setminus \mathbb S_{<k} )$, $z_i \neq \Zobs_i$, or for some unit $i \in \af(\Zobs_{\mathbb S_{\leq k}}; \mathbb S_k)$, $\expof_i(z_{\mathbb S_{\leq k}}) \notin \{\expov_k, \expov_{k+1}\}$. For the first case, 
\begin{align*}
    &\big\{Z' \in \mathrm{supp}(P) \cap \mathbb S_{\leq k}:~ Z'_{\mathbb S_{\leq k}\setminus(\nei(\mathcal U_k)\setminus \mathbb S_{<k})} = z_{\mathbb S_{\leq k}\setminus(\nei(\mathcal U_k)\setminus \mathbb S_{<k})},~ \expof_i(Z_{\mathbb S_{\leq k}}') \in \{\expov_k, \expov_{k+1}\},~\forall i \in \mathcal U_k \big\} \\
    &\qquad \neq \mathcal Z^\obs,
\end{align*}
because for any $z_{\mathbb S_{\leq k}}' \in \mathcal Z^\obs$ we will have $z'_i = \Zobs_i \neq z_i$ for that unit $i$, so that $g_k(\mathcal Z_k^\obs | \mathcal U_k^\obs, z) = 0$. 
For the second case, there exists $\ell \in [L_k]$ and $i \in \af(\Zobs_{\mathbb S_{\leq k}}; \mS_{k, \ell})$ such that $\expof_i(z_{\mathbb S_{\leq k}}) \notin \{\expov_k, \expov_{k+1}\}$. But that means $\af(z_{\mathbb S_{\leq k}}; \mS_{k, \ell}) \neq \af(\Zobs_{\mathbb S_{\leq k}}; \mS_{k, \ell})$ so that $f_{k}(\mathcal U_k^\obs | z_{\mathbb S_{\leq k}}) = 0$ by~\eqref{eq:fk_equiv}. As a result, \eqref{eq:indicators} holds, and by Theorem~1 in \cite{basse2019randomization}, for any $\alpha \in [0, 1]$,
\[
\mathbbm P\left( \pval_k \leq \alpha ~\big|~ \Zobs_{\mathbb S_{< k}}, \mathcal U^\obs_k; \widetilde H_{0k} \right) \leq \alpha,
\]
where the probability is taken with respect to the design conditional on realizations of $\Zobs_{\mathbb S_{< k}}$ and $\mathcal U_k^\obs$ being the active focal units. This implies for any $\alpha \in [0, 1]$,
\[
\mathbbm P\left( \pval_k \leq \alpha ~\big|~ \Zobs_{\mathbb S_{< k}}; \widetilde H_{0k} \right) \leq \alpha,
\]
where the probability is taken with respect to the design conditional on realizations of $\Zobs_{\mathbb S_{< k}}$ solely.

\underline{2. $p_k$ is valid for $H_{0k}$:}
This follows from a similar argument to Theorem A1 of~\cite{caughey2023bounded}, by recognizing the two exposure levels $\expov_k$ and $\expov_{k+1}$ as ``controlled" and ``treated" separately as in the binary treatment setting with no interference. The $p$-value $\pval_k$ in \eqref{eq:pval_1} can be equivalently written as (ignoring conditioning and subscripts for brevity)
\[
\pval_k = \mathbbm P_{\expof \sim r_{k, \expof}} \big(t(\expof, \Yobs ; \mathcal U_k^\obs) \geq t(\expof^\obs, \Yobs; \mathcal U_k^\obs) \big),
\]
where $\expof = (\expof_i)_{i \in \mathcal U_k^\obs}$ is the (random) vector of exposures for units in $\mathcal U_k^\obs$, $\mathcal U_k^\obs = \af(\Zobs_{\mathbb S_{\leq k}}; \mathbb S_k)$ the set of all active focal units almost surely, $\expof_i^\obs = \expof_i(\Zobs_{\mathbb S_{\leq k}})$ the observed exposures, and 
\[
r_{k, \expof}(\expof = (\expov^i)_{i \in \mathcal U_k^\obs}) = \sum_{z_{\mathbb S_{\leq k}}} \mathbbm 1\{\expof_i(z_{\mathbb S_{\leq k}}) = \expov^i,~\forall i \in \mathcal U_k^\obs\} r_k(z_{\mathbb S_{\leq k}}), \quad \forall (\expov^i)_{i \in \mathcal U_k^\obs} \in \{\expov_{k}, \expov_{k+1}\}^{|\mathcal U_k^\obs|},
\]
is the distribution over the exposures of $\mathcal U_k^\obs$ induced by the randomization distribution $r_k$. 
Let $\tau_{k,i} := y_i(0, \expov_{k+1}) - y_i(0, \expov_k)$ be the (unobserved) true ``treatment effect" of exposure $\expov_{k+1}$ versus $\expov_{k}$. Suppose $H_{0k}$ holds, then we have $\tau_{k,i} \leq 0$ for all $i$. Note that for all $i \in \mathcal U_k^\obs$,
\begin{align*}
    &Y^{\mathrm{obs}}_i - y_i(0, \expov_{k+1}) = \mathbbm 1\{\expof_i^{\mathrm{obs}} = \expov_k\} (-\tau_{k,i}) \geq 0 \\
    &Y^{\mathrm{obs}}_i - y_i(0, \expov_k) = \mathbbm 1\{\expof_i^{\mathrm{obs}} = \expov_{k+1}\} (\tau_{k,i}) \leq 0.
\end{align*}
Since $Y^{\mathrm{obs}}_i = y_i(0, \expof_i) + \mathbbm 1\{\expof_i = \expov_{k+1}\} (Y^{\mathrm{obs}}_i - y_i(0, \expov_{k+1})) + \mathbbm 1\{\expof_i = \expov_{k}\} (Y^{\mathrm{obs}}_i - y_i(0, \expov_{k}))$, 
where we write $y_i(0, \expof_i) := \mathbbm 1\{\expof_i=\expov_{k+1}\} y_i(0, \expov_{k+1}) + \mathbbm 1\{\expof_i=\expov_{k}\} y_i(0, \expov_{k})$ as the shorthand for the observed outcome under the randomly drawn $w_i$,
by the definition of exposure monotonicity, $t( \expof, Y^{\mathrm{obs}}; \mathcal U_k^\obs ) \geq t( \expof, (y_i(0, \expof_i))_i; \mathcal U_k^\obs )$.
Hence, 
\begin{align*}
    \pval_k &= \mathbbm P_{\expof \sim r_{k, \expof}} \big(t(\expof, \Yobs ; \mathcal U_k^\obs) \geq t(\expof^\obs, \Yobs; \mathcal U_k^\obs) \big) \\
    &\geq \mathbbm P_{\expof \sim r_{k, \expof}} \big(t( \expof, (y_i(0, \expof_i))_i; \mathcal U_k^\obs ) \geq t( \expof^{\mathrm{obs}}, Y^{\mathrm{obs}}; \mathcal U_k^\obs ) \big) =: \pval^*_k.
\end{align*}
Under $H_{0k}$, $\pval^*_k$ stochastically dominates Uniform $(0,1)$ following the same reasoning in step 1 with only a modification that we impute missing potential outcomes using both $\Yobs$ and $(\tau_{k,i})_{i}$. Hence, for any $\alpha \in [0,1]$,
\[
\mathbbm P\left( \pval_k \leq \alpha \big| \Zobs_{\mathbb S_{< k}}; H_{0k} \right) \leq \mathbbm P\left( \pval^*_k \leq \alpha \big| \Zobs_{\mathbb S_{< k}}; H_{0k} \right) \leq \alpha.
\]

\underline{3. Stochastically larger than Uniform:}
Recall in Algorithm~\ref{algo:focalrand_multiple} from each step $k=1,\ldots,K-1$, we apply Algorithm~\ref{algo:focalrand_single_general} with module set $\mathbb S_k$, observed treatment $\Zobs$, conditional set $\mathbb S_{<k}$, and get $p$-value $\pval_k$. Essentially we only input the observed treatment $\Zobs_{\mathbb S_{\leq k}}$ on $\mathbb S_{\leq k}$.
As a result, $\pval_k$ can be written as a function of $\pval_k(\Zobs_{\mathbb S_{\leq k}}) = \pval_k(\Zobs_{\mathbb S_{1}}, \ldots, \Zobs_{\mathbb S_{k}})$. The result follows from Lemma~\ref{lem:recur_pval} by taking $Q_k = \Zobs_{\mathbb S_{k}}$.
\end{proof}

\subsection{Proof of Theorem~\ref{thm:generaltest}}
\begin{proof}[Proof of Theorem \ref{thm:generaltest}]
The fact that each $\pval_k$ is valid under $H_{0k}$, that is,
\begin{equation}\label{eq:pval_general}
    \mathbbm P_{Z^{\mathrm{obs}}_{V_k} \sim P_k(\cdot)} \left(\pval_k \leq \alpha_k ~\big|~ \{\mathcal C_j^k\}_{j\in J_k}, ~(\Zobs_i)_{i \in V_{<k}}, ~H_{0k}\right) \leq \alpha_k,
\end{equation}
for all $k$ and $\alpha_k \in [0,1]$, follows directly from Proposition \ref{prop:bicliquert_mono} in Appendix~\ref{appdx:biclique_test} by taking $P(\cdot)$ as $P(\cdot | Z^{\mathrm{obs}}_{\cup_{k'<k} V_{k'}})$. 

In addition, the $k$-th biclique decomposition $\{\mathcal C_j^k\}_{j \in J_k}$ is computed using the $P_k(Z_{V_k}) = P(Z_{V_k} | Z^{\mathrm{obs}}_{V_{<k}})$ (and the partition of the network, which we considered as fixed), so that it is a function of $Z^{\mathrm{obs}}_{V_{<k}}$.
Given the $k$-th biclique decomposition $\{\mathcal C_j^k\}_{j \in J_k}$, the biclique test defines a conditioning mechanism $s_k$ that maps any $Z_{V_k}$ into the unique biclique $\mathcal C = (\mathcal U, \mathcal Z) \in \{\mathcal C_j^k\}_{j \in J_k}$ such that $Z_{V_k} \in \mathcal Z$. $s_k$ maps the observed $\Zobs_{V_k}$ into the $\mathcal C^{k, \obs} = (\mathcal U^{k, \obs}, \mathcal Z^{k, \obs})$ that is used in the randomization test, so that $\mathcal C^{k, \obs}$ is a function of $\Zobs_{V_k}$. Moreover, the $k$-th randomization distribution (Line 3 of Algorithm~\ref{algo:cliquetest}) is given by
\[
r_k(Z_{V_k}) \propto \mathbbm 1\{Z_{V_k} \in \mathcal Z^{k, \obs}\} P_k(Z_{V_k}) = \mathbbm 1\{Z_{V_k} \in \mathcal Z^{k, \obs}(Z^{\mathrm{obs}}_{V_k})\} P(Z_{V_k} | Z^{\mathrm{obs}}_{V_{<k}}),
\]
which shows that $r_k(\cdot)$ is a function of $Z^{\mathrm{obs}}_{V_{\leq k}}$. Altogether, the $k$-th $p$-value $\pval_k$ is a function of $Z^{\mathrm{obs}}_{V_{\leq k}}$. Taking $Q_k = Z^{\mathrm{obs}}_{V_k}$ in Lemma \ref{lem:recur_pval}, from \eqref{eq:pval_general}, we conclude that $(\pval_k)_{k}$ are stochastically larger than uniform, which finishes the proof.
\end{proof}

\section{Combination of $p$-values}\label{appdx:pvalcomb}
There are many other ways one can combine the $p$-values resulting from each sub-hypothesis $H_{0k}$ apart from Fisher's rule. A general condition is given by the lemma below.
\begin{lemma}[\cite{rosenbaum2011some}, Lemma 1]\label{lem:combine}
If $f: [0,1]^{K} \to \mathbb R$ is monotone decreasing in each of the $K$ coordinates and $(P_1,\ldots, P_{K})$ is stochastically larger than uniform, then $f(P_1,\ldots, P_{K}) \lesssim^{\mathrm{1st}} f(P_1^*, \ldots, P_{K}^*)$ where $P_k^* \overset{iid}{\sim} U(0,1)$ and $\lesssim^{\mathrm{1st}}$ denotes first-order stochastic dominance. 
\end{lemma}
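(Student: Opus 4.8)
The plan is to reduce the first-order stochastic dominance claim to a statement about coordinatewise-monotone functionals and then settle it by an explicit coupling. Recall that $X \lesssim^{\mathrm{1st}} Y$ holds if and only if $\mathbbm{E}[\phi(X)] \le \mathbbm{E}[\phi(Y)]$ for every bounded nondecreasing $\phi$. Applying this with $X = f(P_1,\ldots,P_K)$ and $Y = f(P_1^*,\ldots,P_K^*)$, and writing $h := \phi \circ f$, it suffices to show that $\mathbbm{E}[h(P_1,\ldots,P_K)] \le \mathbbm{E}[h(P_1^*,\ldots,P_K^*)]$ for every bounded $h$ that is nonincreasing in each coordinate; indeed, since $f$ is coordinatewise nonincreasing and $\phi$ is nondecreasing, $h = \phi \circ f$ is coordinatewise nonincreasing. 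By the layer-cake identity $h(x) = c + \int \mathbbm 1\{h(x) > t\}\, dt$ over the bounded range of $h$, and because each super-level set $D_t = \{x : h(x) > t\}$ is a lower set (if $h(x) > t$ and $y \le x$ coordinatewise, then $h(y) \ge h(x) > t$), this in turn reduces to the single inequality $\mathbbm{P}(P \in D) \le \mathbbm{P}(P^* \in D)$ for every lower set $D \subseteq [0,1]^K$.

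The cleanest way I would establish this lower-set domination is by coupling. Suppose we can place, on a common probability space, independent uniforms $U_1^*,\ldots,U_K^*$ together with $(P_1,\ldots,P_K)$ so that $U_k^* \le P_k$ almost surely for every $k$. Then for any lower set $D$ the event $\{P \in D\}$ is contained in $\{U^* \in D\}$ (because $U^* \le P$ coordinatewise and $D$ is downward closed), whence $\mathbbm{P}(P \in D) \le \mathbbm{P}(U^* \in D) = \mathbbm{P}(P^* \in D)$, giving the reduction above. In fact the coupling yields the conclusion even more directly: $f$ nonincreasing together with $U^* \le P$ forces $f(P) \le f(U^*)$ pointwise, and since $(U_1^*,\ldots,U_K^*) \overset{d}{=} (P_1^*,\ldots,P_K^*)$, pointwise domination immediately implies $f(P) \lesssim^{\mathrm{1st}} f(P^*)$.

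The hard part will be producing this coupling, which is the real content of the lemma: the coordinatewise distribution-function bound in the definition of stochastically-larger-than-uniform directly controls only rectangular (orthant) sets, so passing to arbitrary lower sets---equivalently, to the usual multivariate stochastic order---requires an additional argument. Abstractly, the existence of the coordinatewise-dominating independent-uniform vector $U^*$ is exactly the content of Strassen's theorem for the usual stochastic order. In the situations where we invoke the lemma the coupling can be built by hand from the sequential structure of Lemma~\ref{lem:recur_pval}: there $P_k$ is a function of $(Q_1,\ldots,Q_k)$ with $\mathbbm{P}(P_k \le p \mid Q_1,\ldots,Q_{k-1}) \le p$, so setting $U_k^* := G_k(P_k \mid Q_1,\ldots,Q_{k-1})$ (the conditional distribution function of $P_k$, externally randomized at atoms) makes each $U_k^*$ conditionally---and hence, by the tower property, jointly---uniform and independent, while the bound $G_k(p \mid \cdot) \le p$ guarantees $U_k^* \le P_k$ almost surely. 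Verifying this conditional probability-integral-transform construction, and confirming that it delivers genuinely independent uniforms dominated coordinatewise by the $P_k$, is the step I expect to require the most care.
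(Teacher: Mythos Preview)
The paper does not prove this lemma---it is quoted from Rosenbaum (2011) without argument---so there is no paper-side proof to compare against. More importantly, you have correctly sensed that the orthant inequality in the paper's Definition of ``stochastically larger than uniform'' does not by itself deliver the lower-set inequality (equivalently, the usual multivariate stochastic order) that your first-paragraph reduction requires, and in fact the lemma is \emph{false} under that bare hypothesis. Take $K=2$, $P_1\sim U(0,1)$, $P_2=1-P_1$: then $\mathbbm P(P_1\le p_1,\,P_2\le p_2)=\max(0,p_1+p_2-1)\le p_1p_2$ for all $p_1,p_2\in[0,1]$, so the orthant condition holds; yet with $f(x,y)=-(x+y)$ one gets $f(P)\equiv -1$, which cannot be first-order comparable in either direction to the nondegenerate $f(P^*)=-(P_1^*+P_2^*)$. (Equivalently, Fisher's combination applied to $(U,1-U)$ yields $p_{\mathrm{FCT}}\le \tfrac14(1+2\ln 2)\approx 0.597$ with probability one, which is not a valid $p$-value.) Your appeal to Strassen is therefore circular exactly where you suspect: Strassen requires the lower-set inequality as input, and the orthant condition does not supply it.

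Your fallback to the sequential structure of Lemma~\ref{lem:recur_pval} is the correct repair, and the conditional probability-integral-transform coupling you sketch is exactly the right construction: when each $P_k$ is conditionally valid given $Q_1,\ldots,Q_{k-1}$, the (externally randomized) conditional CDFs $U_k^*=G_k(P_k\mid Q_{<k})$ are jointly independent $U(0,1)$ with $U_k^*\le P_k$ almost surely, and then $f(P)\le f(U^*)\overset{d}{=}f(P^*)$ pointwise gives the dominance. This establishes the combination result under the recursive hypothesis of Lemma~\ref{lem:recur_pval} (the $p$-clud condition of Brannath--Posch--Bauer), which is the only setting in which the paper ever invokes Lemma~\ref{lem:combine}; but it does not---and, by the counterexample above, cannot---establish Lemma~\ref{lem:combine} as literally stated.
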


Therefore one can compare the observed value of $f(P_1,\ldots, P_{K})$ with quantiles of $f(P_1^*, \ldots, P_{K}^*)$ for $P_k^* \overset{iid}{\sim} U(0,1)$, which can be calculated exactly or approximated by Monte-Carlo. 
Some examples include Stouffer's weighted $z$-score \citep{stouffer1949} and the Cauchy combination rule~\citep{liu2019cauchy}. Existing theory has pointed out that no single method of combining independent tests of significance is optimal in general~\citep{birnbaum1954combining}. 

In our context, we prefer Fisher's combination for the behavior of its log-transformation when $p$-values are close to $0$ or $1$ (see Table~\ref{tab:comb_extreme}).
When the null is violated at some of the sub-hypothesis $H_{0k}$, and the signal strength is quite strong in the sense that $p_k \to 0$ and $p_{k'} \to 1$ for all $k' \neq k$ (or the other way), where $p_k$ is the $p$-value for testing $H_{0k}$, both Stouffer and Cauchy combination involve expression of $+\infty - \infty$, making the combined $p$-value unstable or even powerless depending on the weighting schemes. This is the case in DGP2 of Section~\ref{sec:simu_results} when $\tau \neq 0$. In that setup, $p_1$ provides an opposite and more powerful signal of the monotone null compared to $p_2$ and $p_3$, as shown in Figure~\ref{fig:DGP2_allrej} where we plot the rejections of each individual hypothesis and the combined rejection of the monotone hypothesis. Here we truncate each $p$-value to be within $[\epsilon, 1-\epsilon]$ with $\epsilon = 10^{-4}$ before the CDF transformation $\Phi^{-1}(\cdot)$, and the truncation binds when $|\tau|$ is large. When using the expected number of units with exposures $\expov_k$ or $\expov_{k+1}$ for $k = 1,2,3$ to be the weights, the combination happens to offset the signals from the three $p$-values, so that the Stouffer's method becomes almost powerless in rejecting $\tau \neq 0$. For other weighting schemes it could be possible that one of the $p$-values becomes overly dominant, so that the Stouffer's method is again powerless in rejecting $\tau < 0$ or $\tau > 0$. In any case, the combination might yield undesirable results.
\begin{table}[!h]
\centering
\begin{tabular}{l|c|c|c}
\hline
     &  $p_k \to 0$ & $p_{k'} \to 1$ & Combination of $p_k$ and $p_{k'}$\\
   \hline
   Fisher & $\log p_k \to -\infty$ & $\log p_{k'} \to 0$ & FCT $\to -\infty$ \\
   Stouffer & $\Phi^{-1}(p_k) \to -\infty$ & $\Phi^{-1}(p_{k'}) \to +\infty$ & Stouffer $\to +\infty -\infty$ \\
   Cauchy & $\tan((0.5-p_k) \pi) \to -\infty$ & $\tan((0.5-p_{k'}) \pi) \to +\infty$ & CCT $\to +\infty -\infty$ \\
\hline
\end{tabular}
\caption{Behavior of transformed $p$-values under extreme values for Fisher (FCT), Stouffer and Cauchy combination rule (CCT).}
\label{tab:comb_extreme}
\end{table}

\begin{figure}[!ht]
    \centering
    \includegraphics[width = 0.4\textwidth]{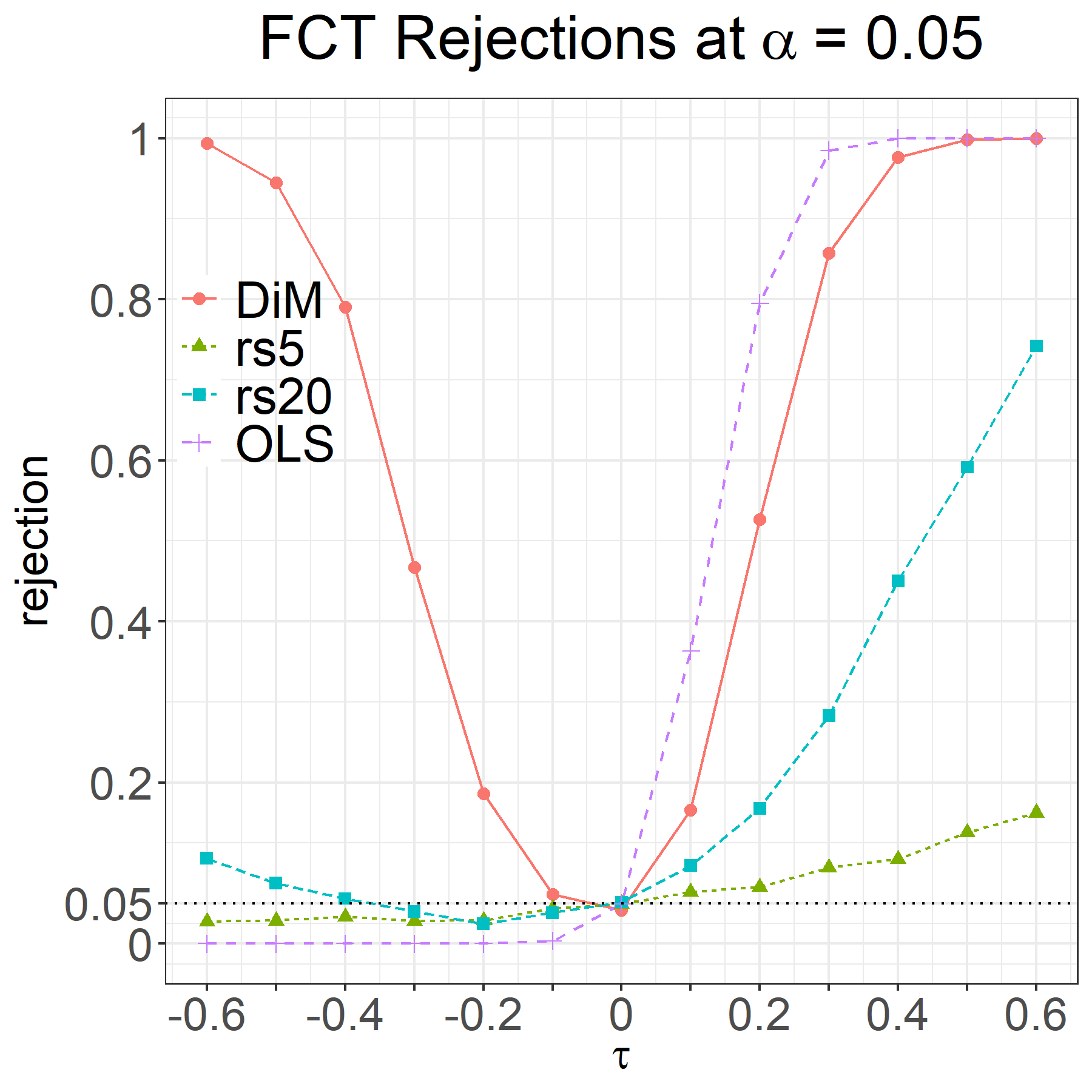}
    \includegraphics[width = 0.4\textwidth]{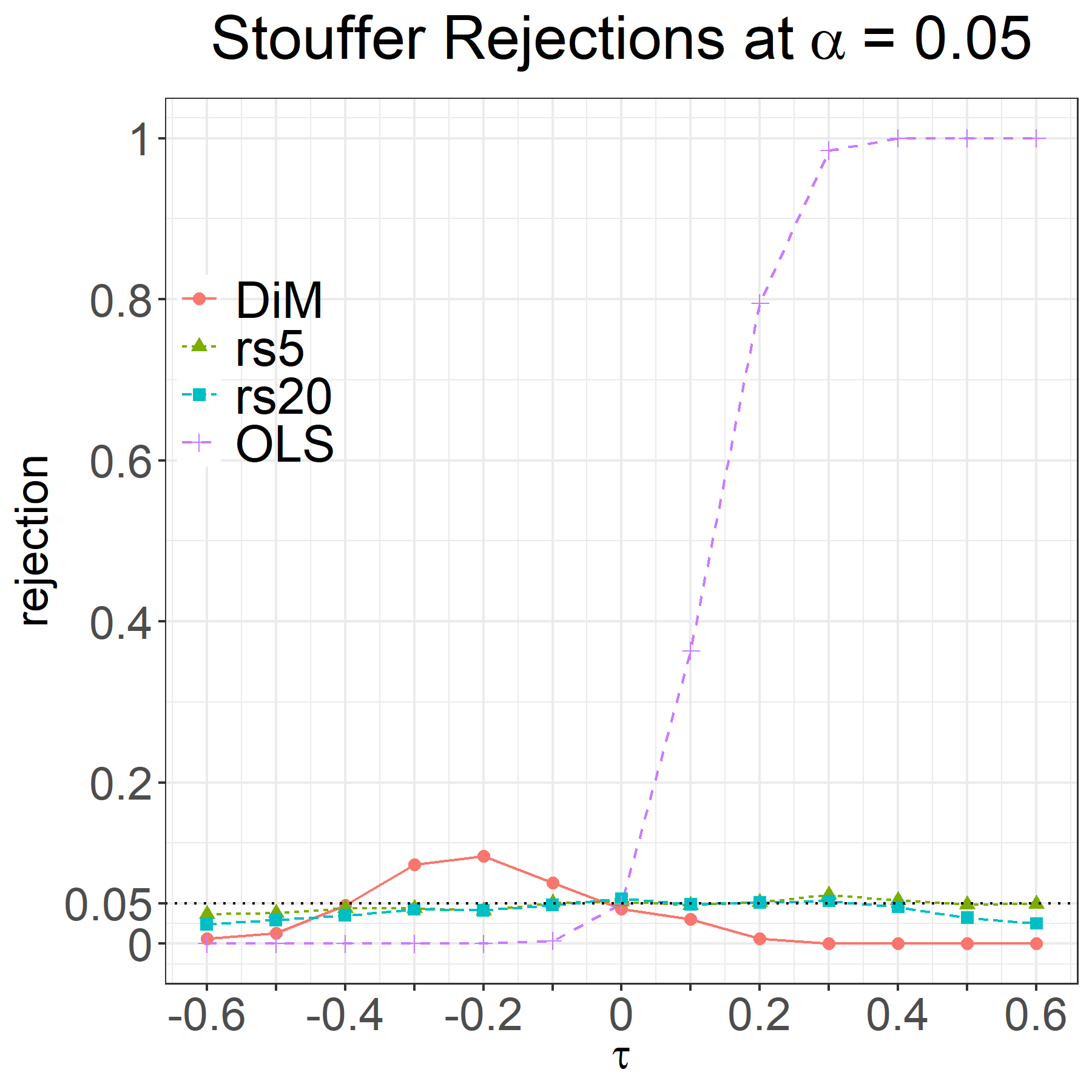}
    \includegraphics[width = 1.0\textwidth]{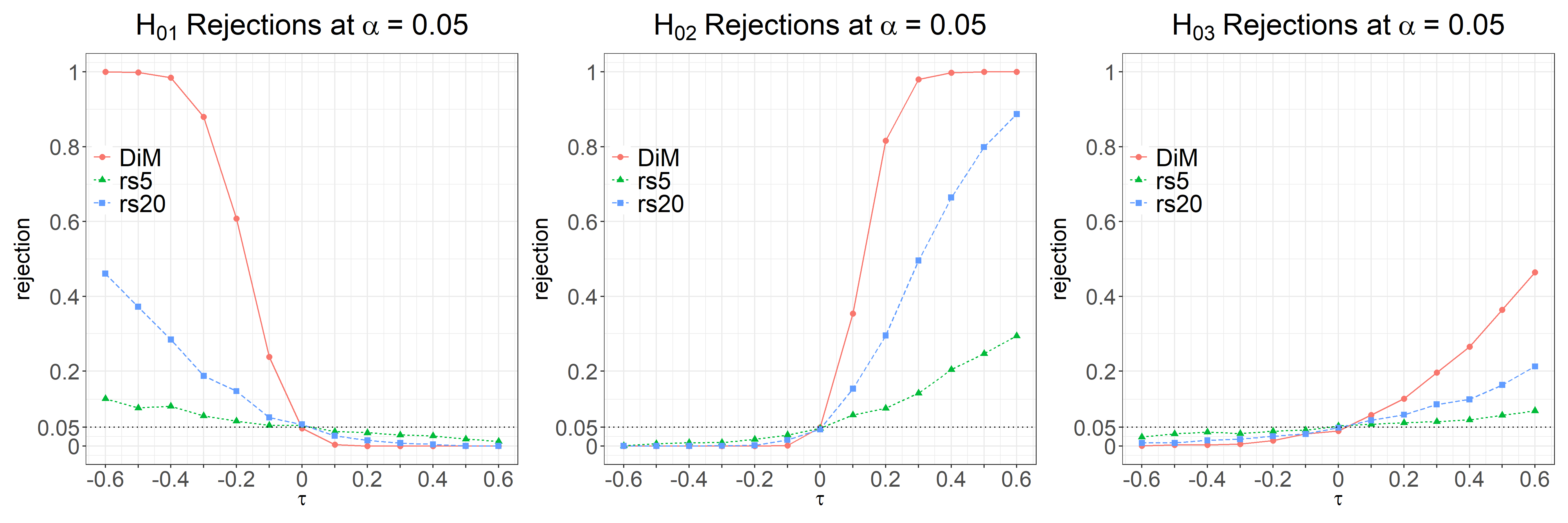}
    \caption{Rejection of overall monotone hypothesis and individual hypotheses under DGP2.}
    \label{fig:DGP2_allrej}
\end{figure}

Another approach would be to not split the network at all, test each individual null hypothesis $H_{0k}$ on the entire network, and combine the resulting $p$-values using Bonferroni's method. Although Bonferroni's rule usually leads to power loss, this approach may be competitive in settings where we suspect that the violation of monotonicity is mainly due to one particular contrast, e.g., having ``1 neighbor treated" vs.\ ``0 neighbors treated", or testing all but a few of the sub-hypotheses is difficult due to, for example, lack of eligible or active focal units after splitting the network.

One final remark is that the default Fisher's combination puts equal weights on each of the $p$-values. This is plausible without any prior knowledge on the spillover effect. Some interesting exceptions may exist when, for example, we suspect ``diminishing returns'' in the spillover effect, or the violation of monotonicity is mainly due to one particular contrast of the exposures. 
In such cases, it may be better to put more weights on some hypotheses over others as that could lead to more power compared to the default Fisher's combination rule. The weighted Fisher's combination does not suffer from the issues observed in Stouffer's combination, where rejection signals are offset or some $p$-values become overly dominant. As long as one of the sub-tests provides a strong rejection signal, the weighted Fisher's combination will tend to reject the overall monotone hypothesis, as illustrated in Table~\ref{tab:comb_extreme}.

\section{More on the Module-based Algorithm}
\subsection{Further extensions of Algorithm~\ref{algo:focalrand_single_general}}\label{appdx:gen_algo_s}
\paragraph{Exposures beyond neighbors}~For $\expof_i(z)$ that depends on the treatments of units beyond $\nei_i$, we can analogously define $\er(\mS_\ell)$ as the union of units whose treatment status determines $\expof_i(z)$ for all $i \in \ef(\mS_\ell)$, i.e., $\er(\mS_\ell)$ is a set of units such that for all $i \in \ef(\mS_\ell)$, $y_i(z) = y_i(z')$ for all $z,z' \in \{0,1\}^N$ such that $z_{\er(\mS_\ell) \cup\{i\}} = z'_{\er(\mS_\ell) \cup\{i\}}$. The remaining definitions of modules, module sets, active focal/randomization units and procedures in Algorithm~\ref{algo:focalrand_single_general} are kept unchanged.

\paragraph{Designs beyond non-uniform Bernoulli}~The assumption of Bernoulli design is used to calculate the conditional randomization distribution over active focal units in Lines 4-5 and Lines 8-12 in Algorithm~\ref{algo:focalrand_single}, or Lines 2-4 in Algorithm~\ref{algo:focalrand_single_general}. In particular, when sampling from the randomization distribution in Lines 8-12 of Algorithm~\ref{algo:focalrand_single} or Lines 2-4 in Algorithm~\ref{algo:focalrand_single_general}, we rely on the assumption that treatments on $\er(\mS_\ell)$ for different $\ell$ are independent, which follows from the Bernoulli design. We can relax the Bernoulli design to a ``clustered" one where the design has a factorization $P(Z) = P_{c}(Z_{\mathbb S^c}) \prod_{\ell} P_\ell(Z_{\mS_\ell})$ by appropriately choosing the modules, so that treatments across $(\mS_\ell)_\ell$ are independent of each other, while not affecting the validity of the randomization test. Under the clustered design, the randomization distribution for $\widetilde Z_\ell$ can be calculated exactly or approximated to any precision using the same procedure as in Section~\ref{sec:extalgo1}. Note, however, that such a calculation is tractable only when the size of each cluster, $|\mS_\ell|$, is moderate. Otherwise, calculating the randomization distribution will be computationally challenging. In such cases, it's better to use the clique-based approach in Section~\ref{sec:method_clique}, in which the cluster naturally provides a partition of the whole network.

\paragraph{Overlaps in modules}~In Definition~\ref{def:modules} we define a module set $\mathbb S$ to be a collection of disjoint modules $\{\mS_1, \ldots, \mS_L\}$ so that $\mS_\ell \cap \mS_{\ell'} = \emptyset$ for all $\ell \neq \ell'$. It turns out that requiring modules to have disjoint eligible randomization units, i.e., $\er(\mS_\ell) \cap \er(\mS_{\ell'}) = \emptyset$ for all $\ell \neq\ell'$, is sufficient for a randomization test by a conditioning argument, leading to a more efficient use of the network information. 
Formally, we call a collection of modules $\mathbb S = \{\mS_1, \ldots, \mS_L\}$ a {\em generalized module set} if (1) $\er(\mS_\ell) \cap \er(\mS_{\ell'}) = \emptyset$ for all $\ell \neq\ell'$, and (2) for all $\ell$, $\er^*(\mS_\ell) := \er(\mS_\ell) \setminus \bigcup_{\ell' \neq \ell} \mS_{\ell'} \neq \emptyset$. 
In a generalized module set, eligible focal units in one module can serve as eligible randomization units for another module, and vice versa.
However, eligible randomization units for different modules remain disjoint, which implies that eligible focal units for different modules are also disjoint. Requirement (2) further states that each module should also have some eligible randomization units, $\er^*(\mS_\ell)$, that do not overlap with other modules. More specifically, $\er^*(\mS_\ell)$ does not overlap with the eligible focal units of other modules. As we will see soon, requirement (2) is imposed only to rule out degenerate randomization distributions on modules but does not otherwise affect the validity of the test.

An example of a generalized module set is shown in Figure~\ref{fig:ovlap_module}. In the figure, node ``a" serves as both an eligible focal unit in module $\mS_1$ and an eligible randomization unit in module $\mS_2$, while node ``b" serves as both an eligible focal unit in module $\mS_2$ and an eligible randomization unit in module $\mS_1$. The remaining eligible randomization units are therefore $\er^*(\mS_1) = \{\mathrm{r11},\mathrm{r12}\}$ and $\er^*(\mS_2)=\{\mathrm{r2}\}$. Nevertheless, $\er(\mS_1) \cap \er(\mS_2) = \emptyset$ still holds.

\begin{figure}[h]
\centering
\resizebox{0.5\textwidth}{!}{
\begin{tikzpicture}

\draw[red, dashed] (-0.5,-0.75) ellipse (2cm and 2.63cm);
\draw[cyan, dashed] (0.6,-1.4) ellipse (2.3cm and 1.5cm);

\node[red] at (-2,1.5) {$\mS_1$};
\node[cyan] at (3.2,-1) {$\mS_2$};

\node[circle, red, draw, minimum size=0.8cm] (a) at (-0.5,-0.8) {};
\node[rectangle, cyan, draw, minimum size=0.9cm] (ra) at (-0.5,-0.8) {};
\node at (-0.5,-0.8) {a};

\node[circle, cyan, draw, minimum size=0.8cm] (b) at (0.4,-2.2) {};
\node[rectangle, red, draw, minimum size=0.9cm] (rb) at (0.4,-2.2) {};
\node at (0.4,-2.2) {b};

\node[rectangle, red, draw, minimum size=0.8cm] (rs11) at (-1.5,0.65) {};
\node at (-1.5,0.65) {r11};
\node[rectangle, red, draw, minimum size=0.8cm] (rs12) at (0.5,0.65) {};
\node at (0.5,0.65) {r12};
\node[rectangle, cyan, draw, minimum size=0.8cm] (rs2) at (2,-1) {};
\node at (2,-1) {r2};

\draw (rs11) -- (ra);
\draw (rs12) -- (ra);
\draw (ra) -- (rb);
\draw (rb) -- (rs2);

\node[right, font=\large] at (3,1) {\tikz\draw[radius=0.2cm] circle; Eligible focal units};
\node[right, font=\large] at (3,0.3) {\tikz\draw (0,0) rectangle (0.4cm,0.4cm); Eligible randomization units};

\end{tikzpicture}
}
\caption{A generalized module set $\mathbb S$ consisting of two modules outlined by {\color{red} red} and {\color{cyan} blue} dashed circles.}
\label{fig:ovlap_module}
\end{figure}
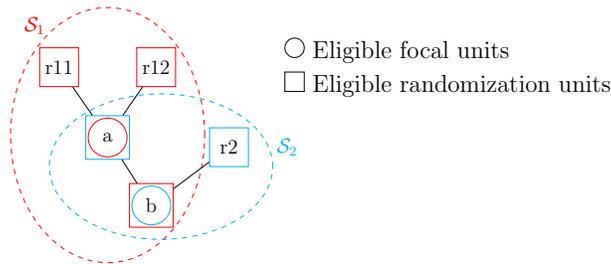

Given the hypothesis to test $H_{0k}$, a generalized module set $\mathbb S = \{\mS_\ell: \ell \in[L]\}$ and the observed treatment $\Zobs$, the active focal and randomization units are defined in the same way as in the main text. The randomization for each active module $\mS_\ell$ on $\ar(\Zobs; \mS_\ell)$, however, should further condition on the treatments in $\ar(\Zobs; \mS_\ell) \setminus \er^*(\mS_\ell)$ at their observed values. Equivalently, we condition on the treatments of all eligible focal units in the generalized module set $\mathbb S$. Hence, the modified test can be easily implemented using Algorithm~\ref{algo:focalrand_single_general} by first augmenting the conditional set $C$ with $\bigcup_{\ell \in [L]} \ef(\mS_\ell)$.

An illustration of the modified test for $H_{0k}: y_i(0,1) \geq y_i(0,2)$, i.e., having 1 versus 2 treated neighbors, is given in Figure~\ref{fig:ovlap_module_rand}. The left panel depicts the active focal and randomization units under the observed $\Zobs$ for the generalized module set in Figure~\ref{fig:ovlap_module}. Only module $\mS_1$ is active with the observed $\Yobs_{\mathrm{a}} = y_{\mathrm{a}}(0, 2)$, since $\Zobs_{\mathrm{b}} = 1$. Because node ``b" serves as an eligible focal unit for module $\mS_2$, its treatment is conditioned at $\Zobs_{\mathrm{b}} = 1$ throughout the randomizations.
Thus, the support of the randomization distribution on $\ar(\Zobs; \mS_1)$ is $(z_{\mathrm{r11}}, z_{\mathrm{r12}}, z_{\mathrm{b}}) \in \{(0,0,1), (0,1,1), (1,0,1)\}$, leading to $\expof_{\mathrm{a}} = 1,1,2$ respectively. The right panel displays $(z_{\mathrm{r11}}, z_{\mathrm{r12}}, z_{\mathrm{b}}) = (0,1,1)$.

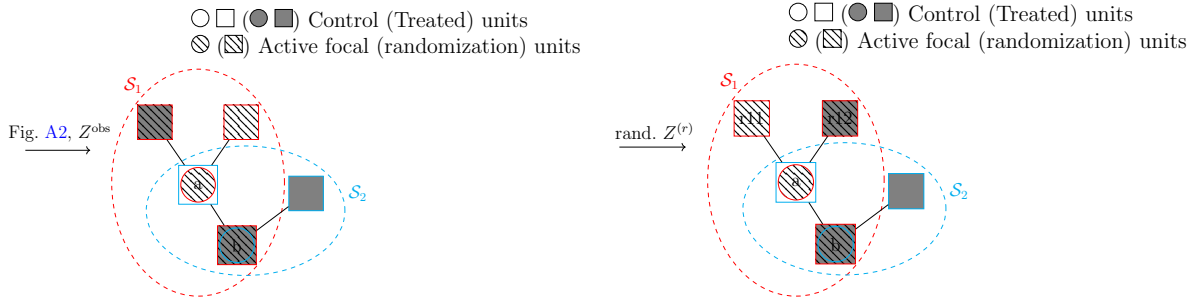
\begin{figure}[h]
\centering
\resizebox{0.48\textwidth}{!}{
\begin{tikzpicture}

\node[above] at (-3.7,0.05) {Fig.~\ref{fig:ovlap_module}, $\Zobs$};
\draw [->] (-4.5,0) -- (-3,0);

\draw[red, dashed] (-0.5,-0.75) ellipse (2cm and 2.63cm);
\draw[cyan, dashed] (0.6,-1.4) ellipse (2.3cm and 1.5cm);

\node[red] at (-2,1.5) {$\mS_1$};
\node[cyan] at (3.2,-1) {$\mS_2$};

\node[circle, draw=red, pattern=north west lines, minimum size=0.8cm] (a) at (-0.5,-0.8) {};
\node[rectangle, cyan, draw, minimum size=0.9cm] (ra) at (-0.5,-0.8) {};
\node at (-0.5,-0.8) {a};

\node[rectangle, draw=red, pattern=north west lines, preaction={fill=gray!100}, minimum size=0.9cm] (rb) at (0.4,-2.2) {};
\node[circle, cyan, draw, minimum size=0.8cm] (b) at (0.4,-2.2) {};
\node at (0.4,-2.2) {b};

\node[rectangle, draw=red, pattern=north west lines, preaction={fill=gray!100}, minimum size=0.8cm] (rs11) at (-1.5,0.65) {};
\node[rectangle, draw=red, pattern=north west lines, minimum size=0.8cm] (rs12) at (0.5,0.65) {};
\node[rectangle, draw=cyan, preaction={fill=gray!100}, minimum size=0.8cm] (rs2) at (2,-1) {};

\draw (rs11) -- (ra);
\draw (rs12) -- (ra);
\draw (ra) -- (rb);
\draw (rb) -- (rs2);

\node[right, font=\large] at (-0.8,3) {\tikz\draw[radius=0.2cm] circle; \tikz\draw (0,0) rectangle (0.4cm,0.4cm); (\tikz\draw[radius=0.2cm, preaction={fill=gray!100}] circle; \tikz\draw[preaction={fill=gray!100}] (0,0) rectangle (0.4cm,0.4cm);) Control (Treated) units};
\node[right, font=\large] at (-0.8,2.4) {\tikz\draw[pattern=north west lines, radius=0.2cm] circle; (\tikz\draw[pattern=north west lines] (0,0) rectangle (0.4cm,0.4cm);) Active focal (randomization) units};

\end{tikzpicture}
}
\resizebox{0.48\textwidth}{!}{
\begin{tikzpicture}

\node[above] at (-3.7,0.05) {rand.\ $Z^{(r)}$};
\draw [->] (-4.5,0) -- (-3,0);

\draw[red, dashed] (-0.5,-0.75) ellipse (2cm and 2.63cm);
\draw[cyan, dashed] (0.6,-1.4) ellipse (2.3cm and 1.5cm);

\node[red] at (-2,1.5) {$\mS_1$};
\node[cyan] at (3.2,-1) {$\mS_2$};

\node[circle, draw=red, pattern=north west lines, minimum size=0.8cm] (a) at (-0.5,-0.8) {};
\node[rectangle, cyan, draw, minimum size=0.9cm] (ra) at (-0.5,-0.8) {};
\node at (-0.5,-0.8) {a};

\node[rectangle, draw=red, pattern=north west lines, preaction={fill=gray!100}, minimum size=0.9cm] (rb) at (0.4,-2.2) {};
\node[circle, cyan, draw, minimum size=0.8cm] (b) at (0.4,-2.2) {};
\node at (0.4,-2.2) {b};

\node[rectangle, draw=red, pattern=north west lines, minimum size=0.8cm] (rs11) at (-1.5,0.65) {};
\node at (-1.5,0.65) {r11};
\node[rectangle, draw=red, pattern=north west lines, preaction={fill=gray!100}, minimum size=0.8cm] (rs12) at (0.5,0.65) {};
\node at (0.5,0.65) {r12};
\node[rectangle, draw=cyan, preaction={fill=gray!100}, minimum size=0.8cm] (rs2) at (2,-1) {};

\draw (rs11) -- (ra);
\draw (rs12) -- (ra);
\draw (ra) -- (rb);
\draw (rb) -- (rs2);

\node[right, font=\large] at (-0.8,3) {\tikz\draw[radius=0.2cm] circle; \tikz\draw (0,0) rectangle (0.4cm,0.4cm); (\tikz\draw[radius=0.2cm, preaction={fill=gray!100}] circle; \tikz\draw[preaction={fill=gray!100}] (0,0) rectangle (0.4cm,0.4cm);) Control (Treated) units};
\node[right, font=\large] at (-0.8,2.4) {\tikz\draw[pattern=north west lines, radius=0.2cm] circle; (\tikz\draw[pattern=north west lines] (0,0) rectangle (0.4cm,0.4cm);) Active focal (randomization) units};

\end{tikzpicture}
}
\caption{Illustration of testing $H_{0k}: y_i(0,1) \ge y_i(0,2)$ using a generalized module set.
{\em Left panel:} Continuation of Figure~\ref{fig:ovlap_module} under $\Zobs$. Shaded color indicates treated units, and hatched nodes represent active focal/randomization units for the null hypothesis $H_{0k}$. Only $\mS_1$ is active under $\Zobs$.
{\em Right panel:} A possible randomized treatment $Z^{(r)}$.}
\label{fig:ovlap_module_rand}
\end{figure}

The validity of the modified test for $H_{0k}$ can be easily proved by extending the proof of Theorem~\ref{thm:focaltest_mono}. The potential benefit of this modification is the flexibility when the network is too dense to construct sufficiently many disjoint modules. In that case, a generalized module set better utilizes the network information to construct a test, which may in turn yield a more powerful test.

\subsection{Implementation details in the Medellín example}\label{appdx:implement_medellin}
In the Medellín example in Section~\ref{sec:simu} and~\ref{sec:realMed}, there are far more units that will always stay in control (the $37,055 - 967 = 36,088$ non-hotspots) compared to those with positive treatment probabilities (the $967$ hotspots). Using this fact, Algorithm~\ref{algo:focalrand_single} and~\ref{algo:focalrand_multiple} can be implemented with great simplification. 
Firstly, the edge set $E$ can be defined as $\{(i,j) \in [N]^2: d(i,j) \leq r,~\text{and at least one of}~i,j~\text{is hotspot}\}$. 
Secondly, in constructing the module sets, we can choose all $\ef(\mS)$ to be non-hotspots and all $\er(\mS)$ to be hotspots only. The requirement $\er(\mS_\ell) \cap \er(\mS_{\ell'}) = \emptyset$ is then equivalent to requiring that eligible focal units in different modules do not share the same hotspot within a distance $r$. 
Constructing uniform modules is also straightforward in regions where hotspots are relatively rare compared to non-hotspots, such as the outskirts of the city as shown in Figure~\ref{fig:Med-histdeg}, in which case all the nearby non-hotspots will be in the same $\ef(\mS)$.

Another practical consideration of the algorithm is how to construct modules to have meaningful randomizations. The randomization in Lines 10-11 of Algorithm~\ref{algo:focalrand_single} essentially draws exposures from $\{\expov_k, \expov_{k+1}\}$ on units in $\af(\Zobs; \mS_\ell)$. Ideally, the induced distribution on exposures of units in $\af(\Zobs; \mS_\ell)$ should not be degenerate, i.e., $p_\ell \in (0,1)$ for all $\ell$. 
We could enforce this condition in the construction of module sets by selecting each $\ef(\mS_\ell)$ from the set $\mathfrak S_k := \{i: \expof_i^{\text{all}} \geq \expov_{k+1},~\expof_i^{\text{none}} \leq \expov_k\}$, where $\expof_i^{\text{all}}$ and $\expof_i^{\text{none}}$ are the exposures of $i$ when all and none of $i$'s neighboring hotspots are treated, with the exception that units $j$ with $p_j = 0$ ($p_j=1$) always stay in control (treated), and conditional on observed treatments of units in $\mathbb S_{<k}$. Then by a continuity argument we know $p_\ell \in (0,1)$. Such a construction can avoid, for example, the inclusion of a unit $i$ whose $\nei_i \subseteq \mathbb S_{<k}$ into active focal units when testing $H_{0k}$, on which the conditional distribution of exposure is always degenerate since we condition on $\Zobs_{\mathbb S_{<k}}$.

The histogram of the number of eligible and active focal units when running the algorithm on multiple independent constructions of module sets is presented in Figure~\ref{fig:hist-fu} in the main text. Many eligible focal units are utilized in the randomization test by being active. Notably, although not displayed, none of the active focal units has a degenerate randomization distribution due to the construction above.

\section{A Review of the Biclique Test}\label{appdx:biclique_test}
As mentioned in the main text, the key idea behind the biclique test of~\cite{puelz2022graph} is to translate the conditioning step in the construction of a conditional randomization test into a biclique decomposition of an appropriate graphical representation of the null hypothesis $\widetilde H_{0k}$. To be specific, we need the following definitions. Denote $\mathbb U = [N]$ and $\mathbb Z = \{z \in \{0,1\}^N: P(z) > 0\}$. 
\vspace{-5px}
\newcommand{\NE}{\mathrm{ne}}
\begin{definition}[Null exposure graph and bicliques.]
The null exposure graph with respect to $\widetilde H_{0k}$ is a bipartite graph 
$\mathcal G_k^\NE =(V_k^\NE,E_k^\NE)$ such that $V_k^\NE = \mathbb U \cup \mathbb Z$, and an edge between $(i,z) \in \mathbb U\times \mathbb Z$ exists in $E_k^\NE$ if and only if $z_i=0$ and $\expof_i(z) \in \{\expov_k, \expov_{k+1}\}$.
A biclique of a null exposure graph $\mathcal G_k^\NE =(V_k^\NE,E_k^\NE)$ is a subgraph $\mathcal C = (\mathcal U, \mathcal Z)$, where $\mathcal U \subseteq \mathbb U$ and $\mathcal Z \subseteq \mathbb Z$, such that each unit $i \in \mathcal U$ is connected to all assignments in $\mathcal Z$, i.e., $\mathcal U \times \mathcal Z \subseteq E_k^\NE$.
\end{definition}
\vspace{-5px}

The null exposure graph encodes the ``imputability pattern" under the null hypothesis. 
That is, if $z$ and $z'$ are both connected to a unit $i$ in the null exposure graph, then 
$Y_i(z) = Y_i(z')$ under the null hypothesis.
As a result, the potential outcomes involved in $\widetilde H_{0k}$ for all units in a biclique can be imputed under the null hypothesis by their observed outcomes for all treatment assignments in the biclique if the observed $Z^{\mathrm{obs}}$ is also in the biclique.

In light of these imputability results, the biclique test of~\cite{puelz2022graph} proceeds in three main steps presented in Algorithm~\ref{algo:cliquetest}.
First, they build the null exposure graph that uniquely encodes the null hypothesis being tested and the particular treatment exposure function $w_i(\cdot)$ under the design $P(\cdot)$.
Next, they compute a biclique decomposition of the null exposure graph, a collection of bicliques $\{\mathcal C_j\}_{j\in J}$ with $\mathcal C_j = (\mathcal U_j, \mathcal Z_j)$, such that 
$\{\mathcal Z_j\}_{j\in J}$ forms a partition\footnote{That is, $\mathcal Z_i \cap \mathcal Z_j = \emptyset$ for $i\neq j$, and $\bigcup_{j \in J} \mathcal Z_j = \mathbb Z$.} of $\mathbb Z$. Such a decomposition can be implemented efficiently using several existing graph algorithms\footnote{For example, the ``binary inclusion-maximal biclustering" (\texttt{Bimax}) algorithm used in~\cite{puelz2022graph}, and the ``\texttt{iMBEA}" algorithm of~\cite{zhang2014imbea}.}.
Finally, a conditional randomization test is executed within the biclique $\mathcal C^\obs = (\mathcal U^\obs, \mathcal Z^\obs)$, one of the $\mathcal C_j$'s such that $\Zobs \in \mathcal Z^\obs$, which reduces to a weighted sampling over the treatment assignments in $\mathcal Z^\obs$ using $P(z)$ as the weights. $\mU^\obs$ is called the ``focal units" and $\mZ^\obs$ is called the ``focal assignments".

Theorem~2 of~\cite{puelz2022graph} establishes the finite-sample validity of the test for $\widetilde H_{0k}$. 
Moreover, using a similar reasoning as in Theorem \ref{thm:focaltest_mono} of our paper, when using an exposure-monotone test statistic in the order $(\expov_k, \expov_{k+1})$, the biclique test is also valid for testing the single monotone hypothesis $H_{0k}:y_i(0, \expov_k) \ge y_i(0, \expov_{k+1})~\forall i$.
\begin{algorithm}[t!]
\caption{\citep{puelz2022graph} Biclique test for $\widetilde H_{0k}: y_i(0, \expov_k) = y_i(0, \expov_{k+1})$}
\label{algo:cliquetest}
\begin{algorithmic}[1]
\REQUIRE Observed treatment $\Zobs$; design $P(\cdot)$ (Input).

\hspace{-24px} {\bf Output:} Finite-sample valid $p$-value for $\widetilde H_{0k}$.

\STATE Construct the null exposure graph $\mathcal G_k^\NE$ using $\mathbb U = [N]$ and $\mathbb Z$. 
\STATE Decompose $\mathcal G_k^\NE$ to obtain $\{\mathcal C_j\}_{j\in J}$, and find the unique biclique $\mathcal C^\obs = (\mathcal U^\obs, \mathcal Z^\obs) \in \{\mathcal C_j\}_{j\in J}$ such that $Z^{\mathrm{obs}} \in \mathcal Z^\obs$.
\STATE Define the randomization distribution $r(Z) \propto \mathbbm 1\{Z \in \mathcal Z^\obs\} \cdot P(Z)$.
\STATE Calculate the $p$-value as follows:
\begin{equation}\label{eq:pval_clique}
    \pval(Z^{\mathrm{obs}}, Y^{\mathrm{obs}}; \mathcal C^\obs) = \mathbbm E_{Z \sim r(\cdot)} \left[\mathbbm 1\{ t(Z, Y^{\mathrm{obs}}; \mathcal C^\obs) > t(Z^{\mathrm{obs}}, Y^{\mathrm{obs}}; \mathcal C^\obs) \} \right].
\end{equation}
\end{algorithmic}
\end{algorithm}

\begin{proposition}\label{prop:bicliquert_mono}
Consider applying Algorithm \ref{algo:cliquetest} to test $\widetilde H_{0k}$.
When an exposure-monotone test statistic in the order $(\expov_k, \expov_{k+1})$ is applied in Line 4, the resulting $p$-value in~\eqref{eq:pval_clique} is also valid under $H_{0k}$ in Equation~\eqref{eq:null_mono_k}. That is, for any $\alpha \in [0,1]$,
\[
\mathbbm P_{Z^{\mathrm{obs}} \sim P(\cdot)} \left(\pval(Z^{\mathrm{obs}}, Y^{\mathrm{obs}}; \mathcal C^\obs) \leq \alpha ~\big|~ \{\mathcal C_j\}_{j\in J}, H_{0k} \right) \leq \alpha.
\]
\end{proposition}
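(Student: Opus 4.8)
The plan is to mimic the second step of the proof of Theorem~\ref{thm:focaltest_mono}, replacing the module-based conditioning mechanism by the biclique conditioning mechanism of~\cite{puelz2022graph}. By Theorem~2 of~\cite{puelz2022graph}, the $p$-value~\eqref{eq:pval_clique} is already valid for the sharp null $\widetilde H_{0k}$; equivalently, conditional on the biclique decomposition $\{\mathcal C_j\}_{j\in J}$ and on the observed biclique $\mathcal C^{\obs}=(\mathcal U^{\obs},\mathcal Z^{\obs})$, the statistic $t(\Zobs,\Yobs;\mathcal C^{\obs})$ is, under $\widetilde H_{0k}$, exchangeable with $t(Z,\Yobs;\mathcal C^{\obs})$ for $Z\sim r(\cdot)$. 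The key device is an \emph{oracle} $p$-value $\pval^{*}$, defined exactly as in~\eqref{eq:pval_clique} but with $\Yobs$ replaced, at every assignment $z$ in the reference set, by the \emph{true} controlled outcome vector of the focal units under $z$, namely $y^{*}(z):=\bigl(y_i(0,\expof_i(z))\bigr)_{i\in\mathcal U^{\obs}}$. We then show: (ii) under $H_{0k}$, the exposure-monotone property forces $\pval\ge\pval^{*}$ pointwise in $\Zobs$; and (i) $\pval^{*}$ is stochastically larger than uniform with no extra assumption. Combining, $\mathbbm P(\pval\le\alpha\mid\{\mathcal C_j\}_j,H_{0k})\le\mathbbm P(\pval^{*}\le\alpha\mid\{\mathcal C_j\}_j)\le\alpha$.

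For step (ii): for any $i\in\mathcal U^{\obs}$ and any $z\in\mathcal Z^{\obs}$ the edge $(i,z)$ of the null exposure graph forces $z_i=0$ and $\expof_i(z)\in\{\expov_k,\expov_{k+1}\}$; in particular $\Yobs_i=Y_i(\Zobs)=y_i(0,\expof_i(\Zobs))=y^{*}_i(\Zobs)$, so $t(\Zobs,\Yobs;\mathcal C^{\obs})=t(\Zobs,y^{*}(\Zobs);\mathcal C^{\obs})$. Set $\tau_{k,i}:=y_i(0,\expov_{k+1})-y_i(0,\expov_k)$ and define fixed vectors $\eta,\xi$ by $\eta_i:=\Yobs_i-y_i(0,\expov_{k+1})$, $\xi_i:=\Yobs_i-y_i(0,\expov_k)$ for $i\in\mathcal U^{\obs}$ (and $0$ otherwise). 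Since $\Yobs_i=y_i(0,\expof_i(\Zobs))$ with $\expof_i(\Zobs)\in\{\expov_k,\expov_{k+1}\}$, under $H_{0k}$ (i.e.\ $\tau_{k,i}\le 0$ for all $i$) one checks directly that $\eta_i\ge 0\ge\xi_i$ on $\mathcal U^{\obs}$ and that, for every $z\in\mathcal Z^{\obs}$, the perturbed vector $y^{*}(z)_{\eta\xi}$ (in the notation of the exposure-monotonicity definition, with $\expov=\expov_k$, $\expov'=\expov_{k+1}$) equals $\Yobs$ on $\mathcal U^{\obs}$. As $t$ depends on outcomes only through $\mathcal U^{\obs}$, the exposure-monotone inequality~\eqref{eq:expo_monotone} gives $t(z,\Yobs;\mathcal C^{\obs})=t(z,y^{*}(z)_{\eta\xi};\mathcal C^{\obs})\ge t(z,y^{*}(z);\mathcal C^{\obs})$ for all $z\in\mathcal Z^{\obs}$. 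Hence $\{t(Z,y^{*}(Z);\mathcal C^{\obs})>c\}\subseteq\{t(Z,\Yobs;\mathcal C^{\obs})>c\}$ with $c=t(\Zobs,\Yobs;\mathcal C^{\obs})=t(\Zobs,y^{*}(\Zobs);\mathcal C^{\obs})$, which is exactly $\pval\ge\pval^{*}$.

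For step (i), note that $\pval^{*}$ is just the biclique $p$-value~\eqref{eq:pval_clique} run on the imputed outcomes $y^{*}(\cdot)$, and the imputation ``$Y_i(z)=y_i(0,\expof_i(z))$ for $i\in\mathcal U^{\obs}$, $z\in\mathcal Z^{\obs}$'' is correct \emph{by definition of potential outcomes}, not merely under $\widetilde H_{0k}$. Thus the validity argument of Theorem~2 of~\cite{puelz2022graph} applies verbatim: the conditioning mechanism induced by the biclique decomposition does not touch outcomes, the reference distribution $r(\cdot)$ is unchanged, and the test statistic is still imputable given $\mathcal C^{\obs}$; hence, conditional on $\{\mathcal C_j\}_j$ and $\mathcal C^{\obs}$, $t(\Zobs,y^{*}(\Zobs);\mathcal C^{\obs})$ is exchangeable with $t(Z,y^{*}(Z);\mathcal C^{\obs})$, $Z\sim r(\cdot)$, so $\mathbbm P(\pval^{*}\le\alpha\mid\{\mathcal C_j\}_j,\mathcal C^{\obs})\le\alpha$; iterating out $\mathcal C^{\obs}$ gives $\mathbbm P(\pval^{*}\le\alpha\mid\{\mathcal C_j\}_j)\le\alpha$. (Alternatively, one may re-run the conditioning-mechanism computation exactly as in step~1 of the proof of Theorem~\ref{thm:focaltest_mono}, imputing missing potential outcomes from $\Yobs$ together with the unknown contrasts $(\tau_{k,i})_i$.) Together with step~(ii) this proves the proposition.

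The main obstacle is step~(i): one must make precise that moving from the sharp null $\widetilde H_{0k}$ to the oracle imputation does not disturb the conditioning-mechanism machinery of~\cite{puelz2022graph}, i.e.\ that a ``$p$-value with a different but still-correct imputation'' remains exact, and, as a minor technical point, that the strict inequality ``$>$'' in~\eqref{eq:pval_clique} (rather than ``$\ge$'') only makes the test more conservative, so the chain $\pval\ge\pval^{*}$ together with validity of $\pval^{*}$ carries through.
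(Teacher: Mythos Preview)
Your proposal is correct and follows the same route the paper indicates (it does not give a separate proof, but points to step~2 of the proof of Theorem~\ref{thm:focaltest_mono}): define the oracle $p$-value $\pval^{*}$ using the true controlled potential outcomes $y_i(0,\expof_i(z))$, use exposure monotonicity together with $\eta_i\ge 0\ge\xi_i$ under $H_{0k}$ to obtain $\pval\ge\pval^{*}$, and then argue validity of $\pval^{*}$ via the conditioning-mechanism machinery of~\cite{puelz2022graph}/\cite{basse2019randomization} with the ``correct-by-definition'' imputation. One small correction to your closing remark: the strict ``$>$'' in~\eqref{eq:pval_clique} makes the $p$-value \emph{smaller}, hence the test \emph{less} conservative, and $\pval^{*}$ with strict inequality need not dominate uniform when ties occur; the argument really requires ``$\geq$'' (as in~\eqref{eq:pval_1}) or a no-ties assumption, so this is a wrinkle in the paper's display rather than something that helps you.
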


\section{Network Splitting and the Clique-based Algorithm}\label{appdx:assignment}
In this section we present a way to partition the network using an algorithm in the community detection literature, and subsequently find matches between hypotheses to test and the partitions leveraging the clique-based test in Section~\ref{sec:method_clique}. We also present power simulation results under the setup in Section~\ref{sec:simu}.

\subsection{Network partitioning}
As discussed in the main text, heuristically we would want the partition of the network to maximize connections between nodes within each sub-network, while minimizing the connections between nodes across different sub-networks. The Leiden algorithm~\citep{traag2019leiden} is considered a computationally efficient algorithm to detect such sub-networks (or ``communities", ``clusters") with some theoretical guarantee of the well-connectedness of the resulting sub-networks. When applying the algorithm iteratively, it is also guaranteed to converge to a partition in which all subsets of all sub-networks are locally optimal. The algorithm is widely adopted in various domains such as genetics~\citep{heumos2023best} and social science~\citep{logan2023social}.

To use the Leiden algorithm there are three tuning parameters to specify: ``resolution", ``beta", and the number of iterations. Higher resolutions lead to more and smaller communities, while lower resolutions lead to fewer and larger communities, and ``beta" affects the randomness in the algorithm.
In the Medellín network, we specify the number of iterations to be $200$, as we observed that it's enough for the algorithm to stabilize. 
For some values of resolution and beta, the average number of communities detected across $5,000$ repetitions of the algorithm are presented in Table \ref{tab:Leiden_number}. 
From the results, in the subsequent analysis we choose ``resolution" in $\{10^{-3}, 10^{-4}\}$ and ``beta" in $\{10^{-1}, 10^{-2}, 10^{-3}\}$ to avoid too many small communities. 
\begin{table}[!ht]
\centering
\begin{tabular}{ccc}
\hline
Resolution & Beta & \# of communities detected \\
\hline
$10^{-2}$ & $10^{-1}$ & $1841.00$ \\
$10^{-3}$ & $10^{-1}$ & $107.50$ \\
$10^{-4}$ & $10^{-1}$ & $54.77$ \\
$10^{-2}$ & $10^{-2}$ & $1826.92$ \\
$10^{-3}$ & $10^{-2}$ & $107.02$ \\
$10^{-4}$ & $10^{-2}$ & $54.79$ \\
$10^{-2}$ & $10^{-3}$ & $1824.32$ \\
$10^{-3}$ & $10^{-3}$ & $105.47$ \\
$10^{-4}$ & $10^{-3}$ & $54.70$ \\
\hline
\end{tabular}
\caption{Leiden algorithm: resolution and beta and number of communities, averaged across $5,000$ repetitions of the algorithm.}
\label{tab:Leiden_number}
\end{table}

\subsection{Assigning communities to hypotheses}\label{appdx:cls_assignment}
Given the detected communities, or sub-networks $\mathcal G_c = (V_c, E_c)$ with $c \in [C]$, we need to decide which sub-networks are used to test which of the $K-1$ hypotheses $H_{0k}$. We will formulate such a decision problem into an optimization problem that can be solved efficiently.

Denote $S_c = |V_c|$ the size of the community $c$, $\mathrm{NE}_{c,k}$ the matrix representation of the null exposure graph built within $\mathcal G_c$ under hypothesis $H_{0k}$, with $\mathrm{dim}(\mathrm{NE}_{c,k}) = S_c \times N_{\text{rand}}$ where $N_{\text{rand}}$ is the number of randomizations drawn from $P_k(\cdot)$, and the $(i,j)$-th entry $\mathrm{NE}_{c,k}[i,j] = 1$ if and only if the $i$-th node is connected to the $j$-th randomization. Let $M_{c,k}$ be a measurement of the ``informativeness" of $\mathrm{NE}_{c,k}$, such as
\begin{itemize}
    \item density of $\mathrm{NE}_{c,k}$: $M_{c,k} = \sum_{i,j} \mathrm{NE}_{c,k}[i,j] / (S_c \cdot N_{\text{rand}})$;
    \item average standard deviations across rows: $M_{c,k} = \sum_{i} \mathrm{sd}(\mathrm{NE}_{c,k}[i,]) / S_c$;
    \item average standard deviations across columns: $M_{c,k} = \sum_{j} \mathrm{sd}(\mathrm{NE}_{c,k}[,j]) / N_{\text{rand}}$.
\end{itemize}
Denote $A_{c, k}$ be the indicator of community $c$ being assigned to test $H_{0k}$. We propose to solve the assignment $A = (A_{c, k})$ through the following integer programming problem:
\begin{equation}\label{eq:assign_ip}
\begin{split}
\max_{A}~ &\min_{k \in [K-1]} \quad \sum_{c\in[C]} A_{c, k} M_{c, k} S_c \\
\text{s.t.} \quad & \sum_{k \in [K-1]} A_{c,k} = 1, ~\forall c \\
            \quad & \sum_{c \in [C]} A_{c,k} \geq 1, ~\forall k \\
            \quad & A_{c, k} \in \{0,1\},~\forall c, k,
\end{split}
\end{equation}
or equivalently the following MILP:
\begin{equation}\label{eq:assign_milp}
\begin{split}
&\max_{A, ~t} ~ t \\
\text{s.t.} \quad & \sum_{c\in[C]} A_{c, k} M_{c, k} S_c \geq t, ~\forall k\\
            \quad & \sum_{k \in [K-1]} A_{c,k} = 1, ~\forall c \\
            \quad & \sum_{c \in [C]} A_{c,k} \geq 1, ~\forall k \\
            \quad & A_{c, k} \in \{0,1\},~\forall c, k.
\end{split}
\end{equation}
Without prior information, we treat each hypothesis equally by maximizing the minimum of a weighted score of the informativeness in testing hypothesis $k$. For example, when $M_{c,k}$ is the density of the null exposure graph, $M_{c,k} S_c$ calculates the sum of the NE graph densities for testing $H_{0k}$ weighted by the size $S_c$, or equivalently the number of connections in the NE graphs that are used to test $H_{0k}$ divided by $N_{\text{rand}}$. The constraint $\sum_{k} A_{c,k} = 1$ imposes that each community $c$ is used to test exactly one $H_{0k}$, and the constraint $\sum_{c} A_{c,k} \geq 1$ imposes that there should be at least one community assigned to test $H_{0k}$.
Both \eqref{eq:assign_ip} and \eqref{eq:assign_milp} can be solved efficiently using optimization packages. An approximate solution is sufficient for our purpose.

Figure~\ref{fig:Leiden_para1_rep1} and~\ref{fig:Leiden_para2} display some realizations of the network splitting and the focal units from solving the above optimization problem. For each of them, the left panel displays the network partitioning output from the Leiden algorithm, the middle panel displays the assignment result from solving the MILP, and the right panel displays the focal units in the biclique that contains $\Zobs$, i.e., $\mU^\obs$.
As a baseline, we also present a naive way to split the network in Figure~\ref{fig:prev_split} where the two break points in the $x$-coordinate are the $60\%$ and $70\%$ quantiles of the $x$-coordinates of the $967$ hotspots, and the break point in the $y$-coordinate is the median of the $y$-coordinates of the $967$ hotspots. Generally for all realizations there are fewer focal units for $H_{02}$ and $H_{03}$. 

\begin{figure}[!h]
    \centering
    \includegraphics[width = 0.32\textwidth]{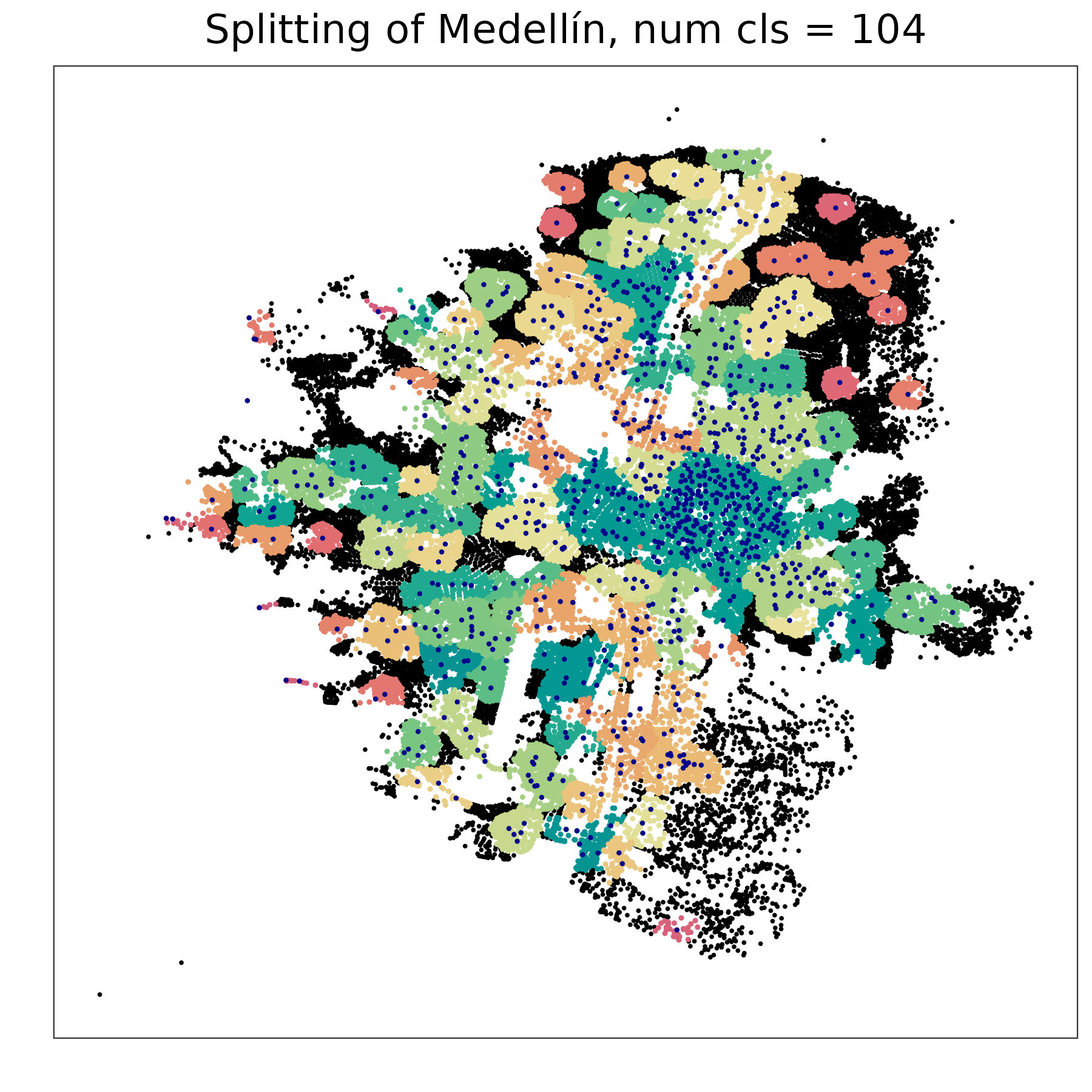}
    \includegraphics[width = 0.32\textwidth]{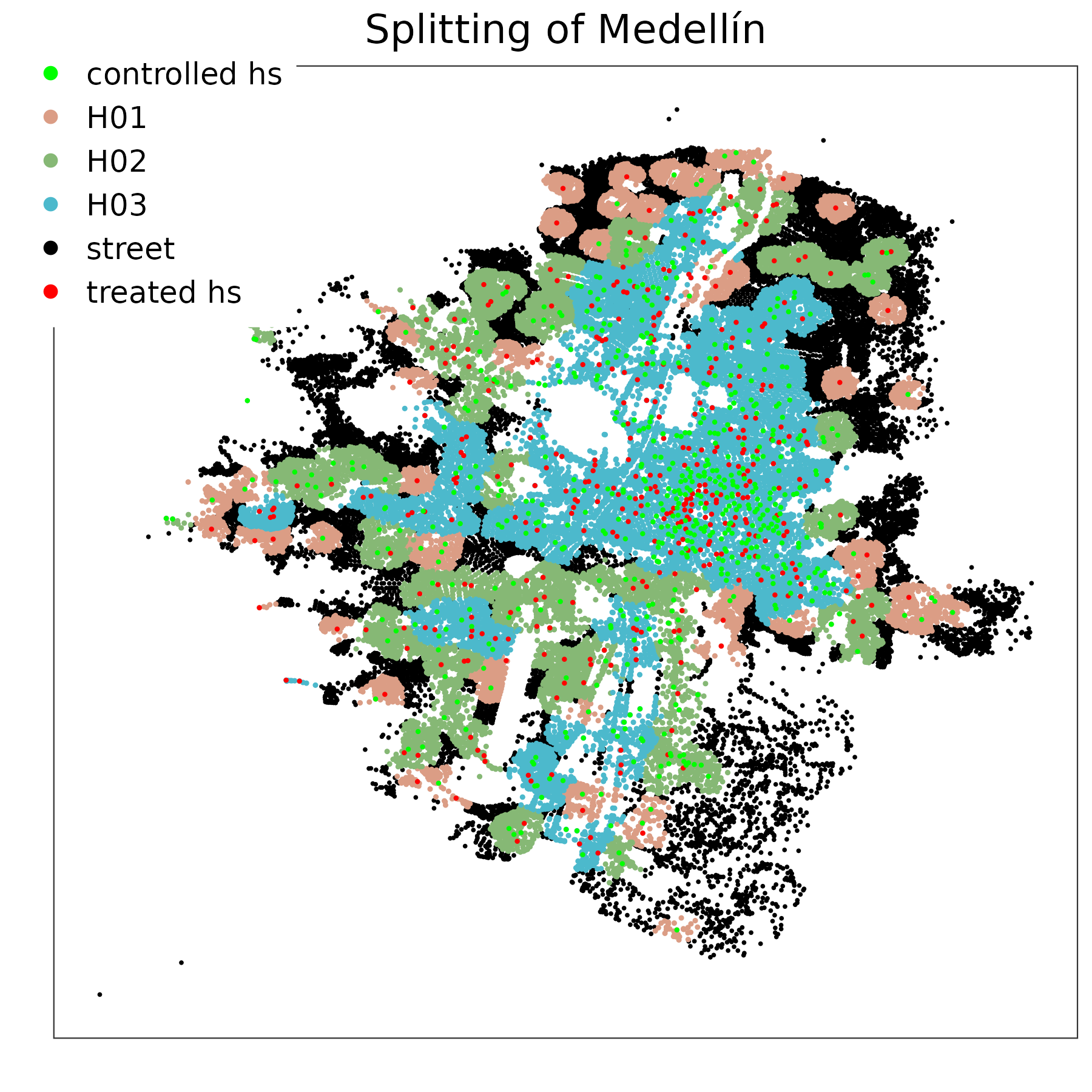}
    \includegraphics[width = 0.32\textwidth]{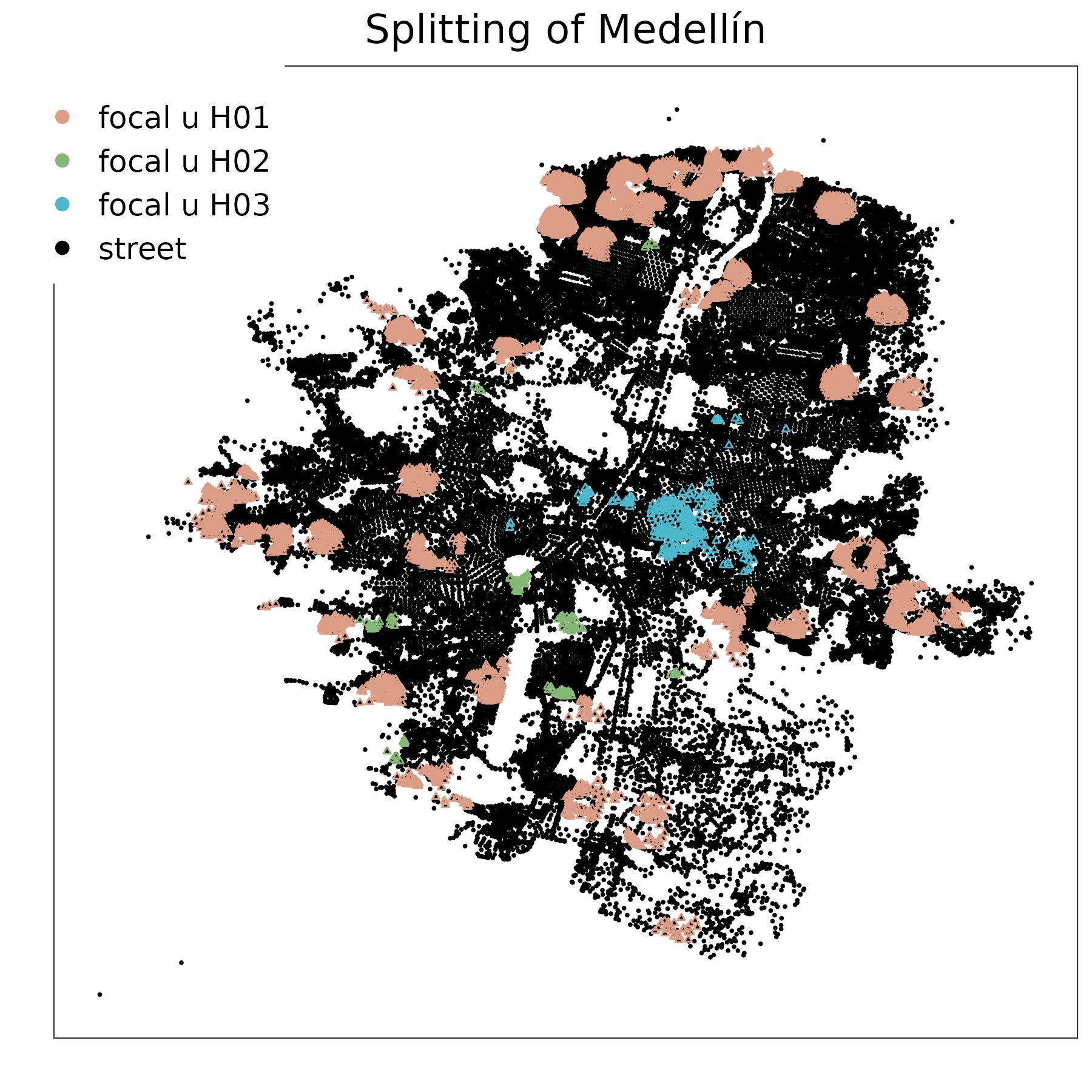}
    \caption{Leiden Splitting (Res, Beta) = $(10^{-3}, 10^{-1})$ results from one realization. Left: the network partitioning output from the Leiden algorithm. Middle: the assignment result from solving the MILP. ``H01", ``H02" and ``H03" denote the sub-networks for testing $H_{01}$, $H_{02}$ and $H_{03}$ respectively. Right: the focal units in the biclique that contains $\Zobs$, for testing $H_{01}$, $H_{02}$ and $H_{03}$ respectively.}
    \label{fig:Leiden_para1_rep1}
\end{figure}

\begin{figure}[!h]
    \centering
    \includegraphics[width = 0.32\textwidth]{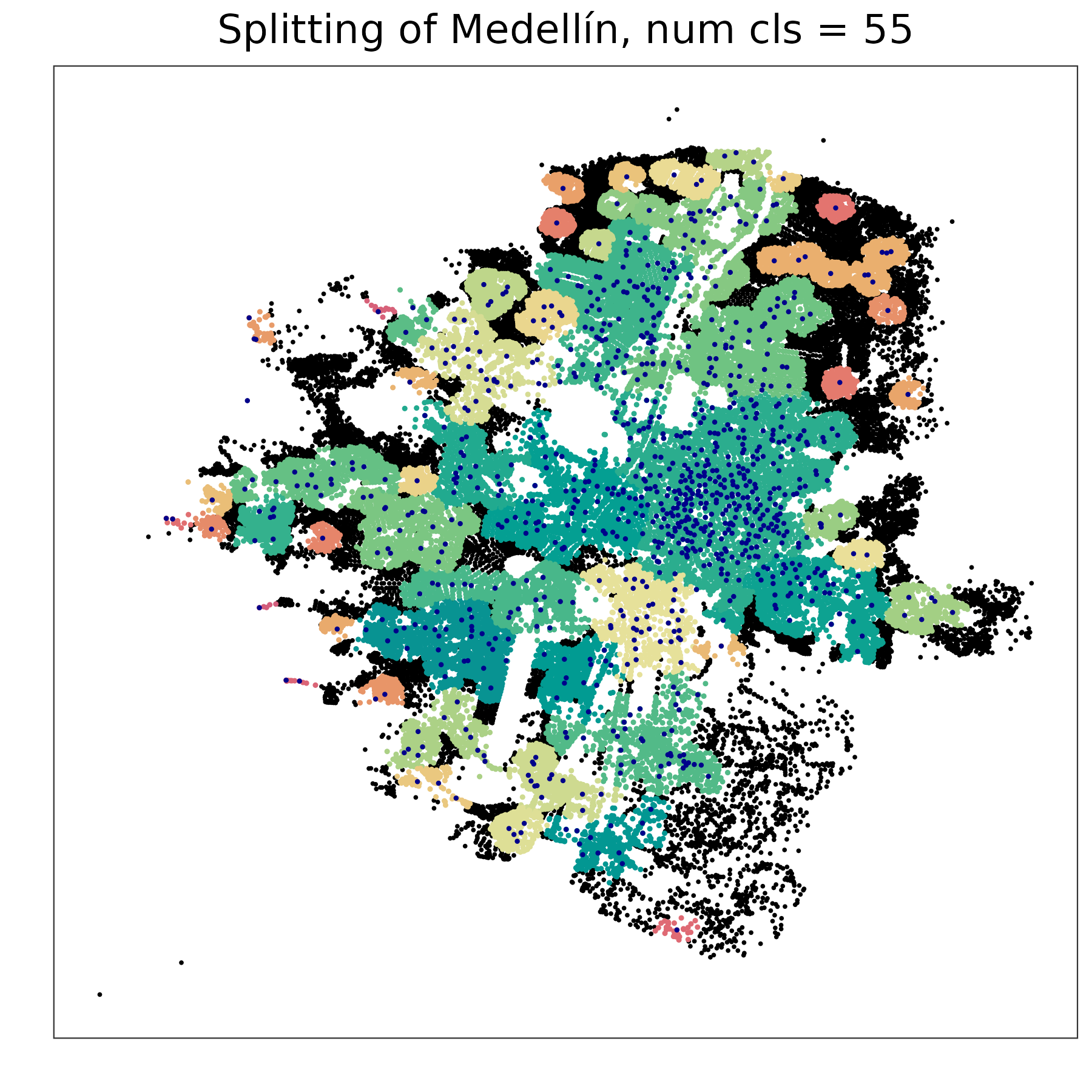}
    \includegraphics[width = 0.32\textwidth]{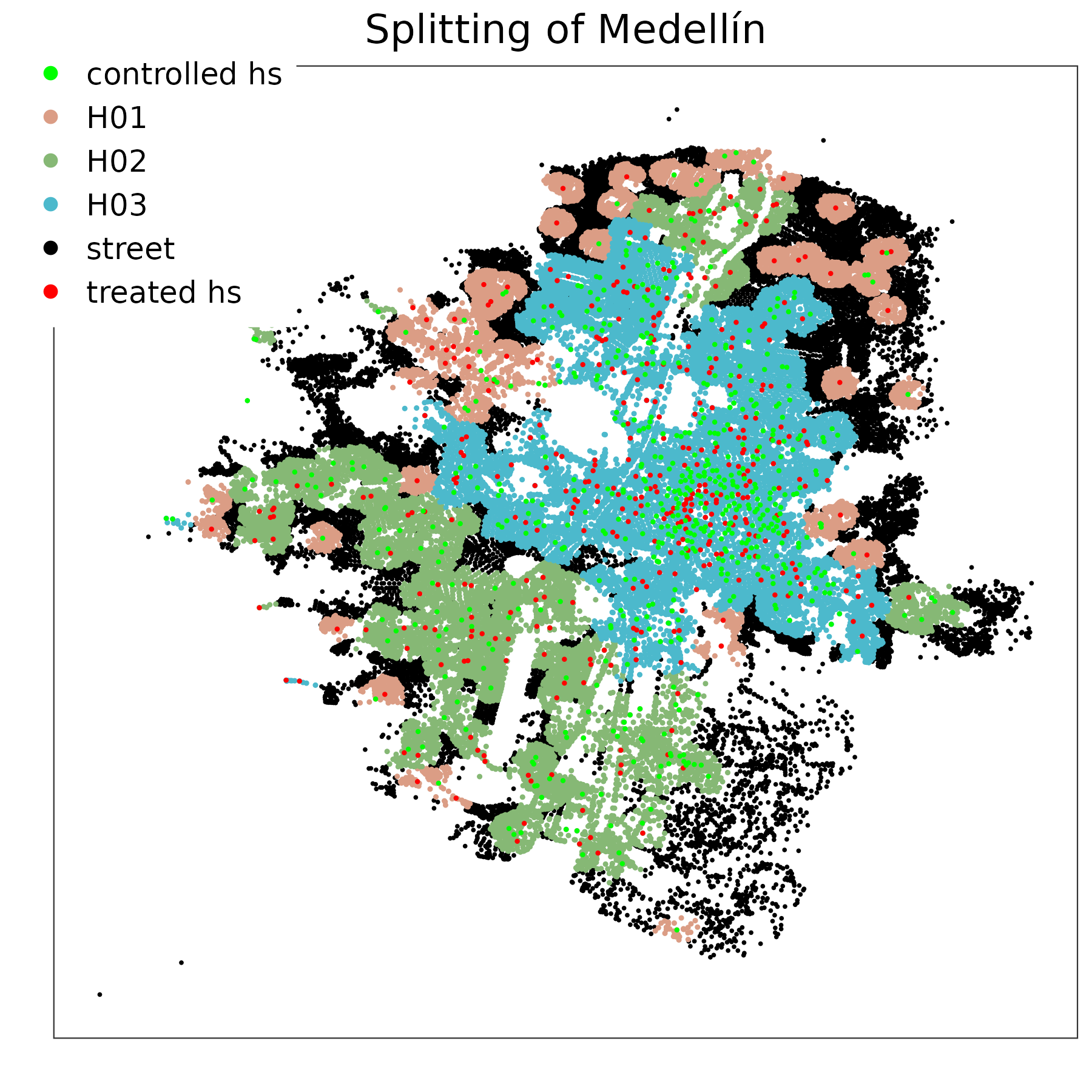}
    \includegraphics[width = 0.32\textwidth]{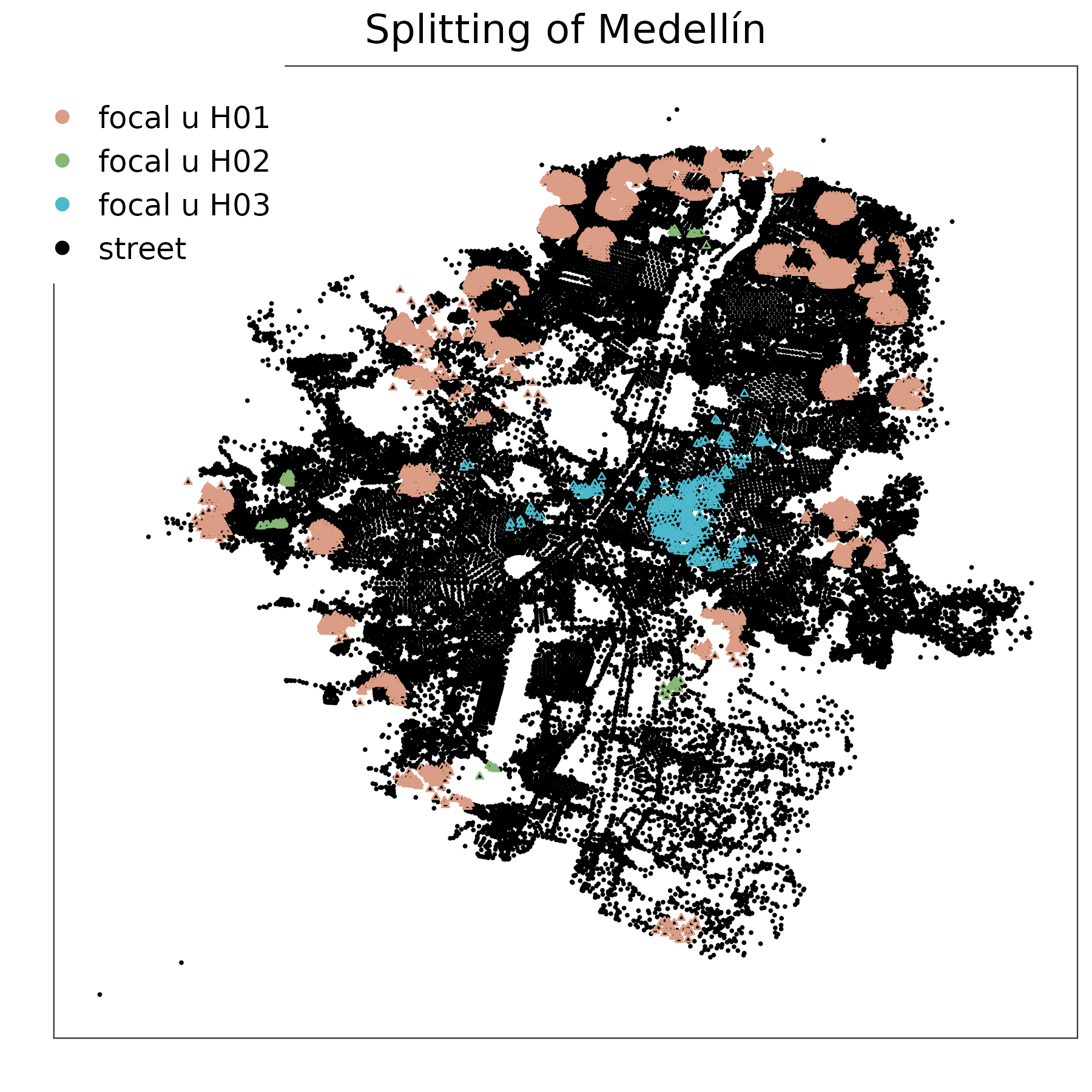}
    \caption{Leiden Splitting (Res, Beta) = $(10^{-4}, 10^{-1})$ results from one realization. Left: the network partitioning output from the Leiden algorithm. Middle: the assignment result from solving the MILP. ``H01", ``H02" and ``H03" denote the sub-networks for testing $H_{01}$, $H_{02}$ and $H_{03}$ respectively. Right: the focal units in the biclique that contains $\Zobs$, for testing $H_{01}$, $H_{02}$ and $H_{03}$ respectively.}
    \label{fig:Leiden_para2}
\end{figure}

\begin{figure}[!h]
    \centering
    \includegraphics[width = 0.4\textwidth]{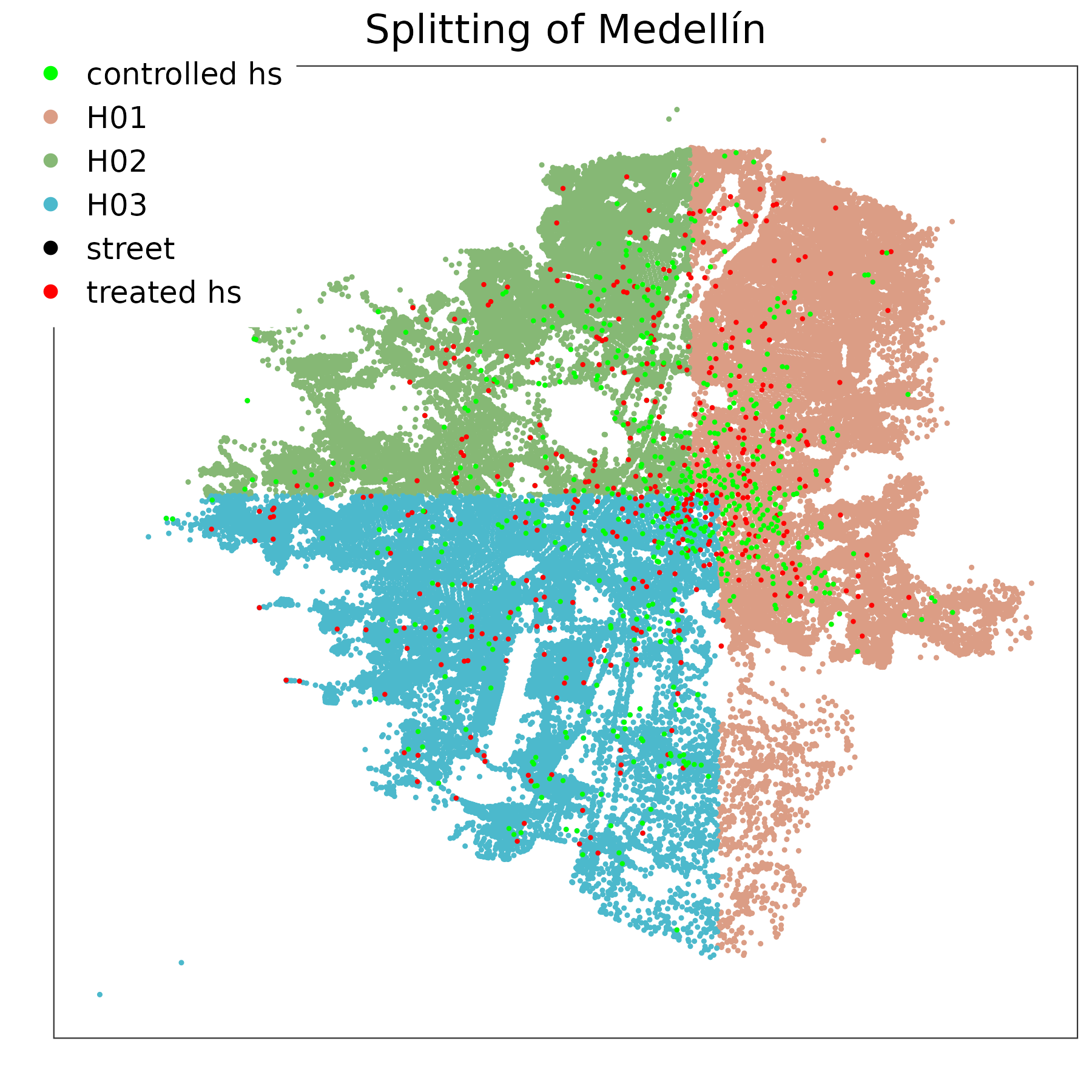}
    \includegraphics[width = 0.4\textwidth]{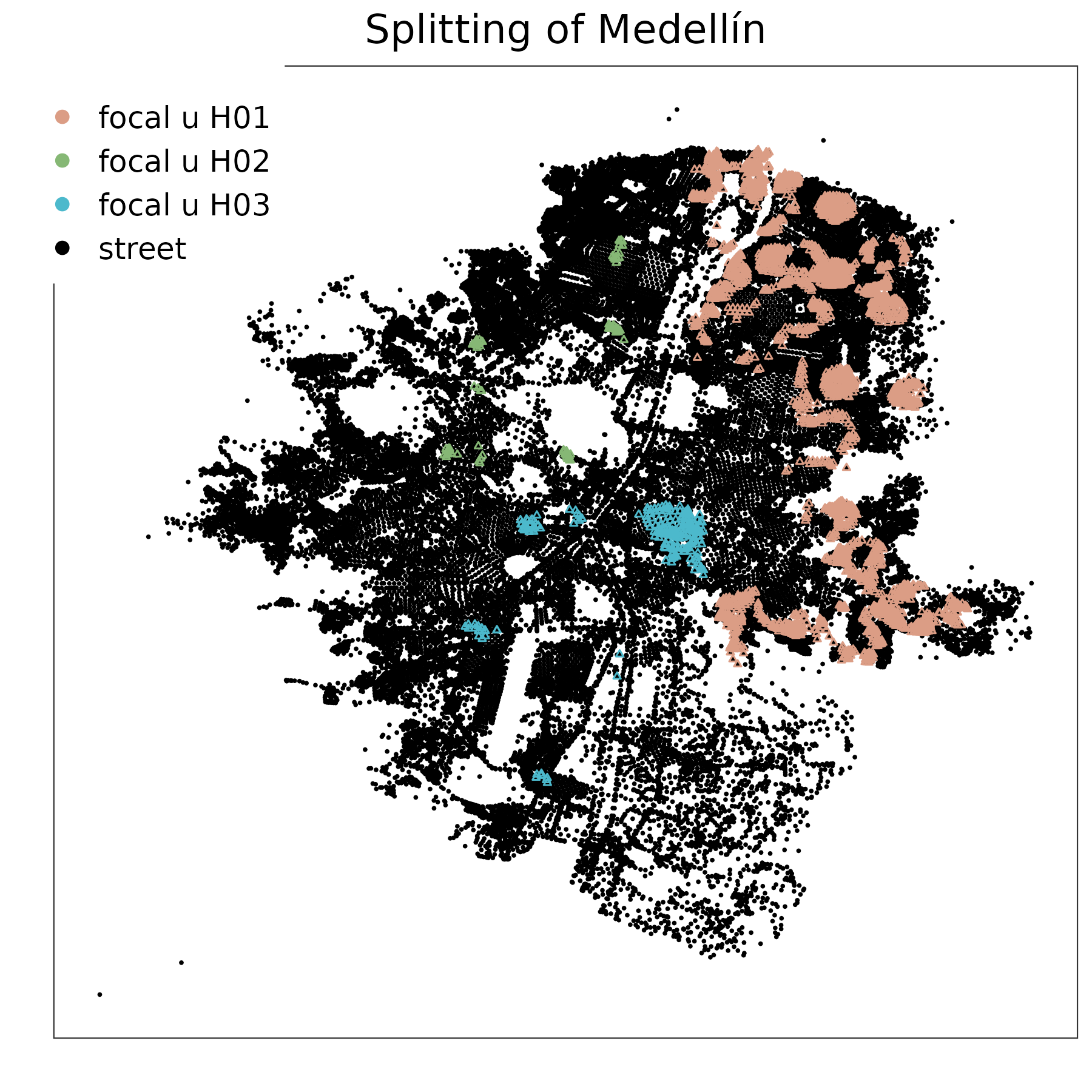}
    \caption{Naive splitting results. Left: the sub-networks for testing $H_{01}$, $H_{02}$ and $H_{03}$, denoted as ``H01", ``H02" and ``H03" respectively. Right: the focal units in the biclique that contains $\Zobs$, for testing $H_{01}$, $H_{02}$ and $H_{03}$ respectively.}
    \label{fig:prev_split}
\end{figure}

\subsection{Power simulation}\label{appdx:assign_power}
We use the DGP1 outcome model in Section \ref{sec:simu_dgp}, and $10,000$ randomizations in constructing the bicliques. The results are in Figure \ref{fig:power_DGP1_Leiden_para1} for Leiden algorithm (Res, Beta) = $(10^{-3}, 10^{-1})$ and Table \ref{tab:Leiden_power} for other parameters where we only present the power at $\tau \in \{-0.5, 0, 0.2, 0.5, 1\}$ with difference-in-means as the test statistic and $p$-values combined by Fisher's rule.

\begin{figure}[!h]
    \centering
    \includegraphics[width = 0.4\textwidth]{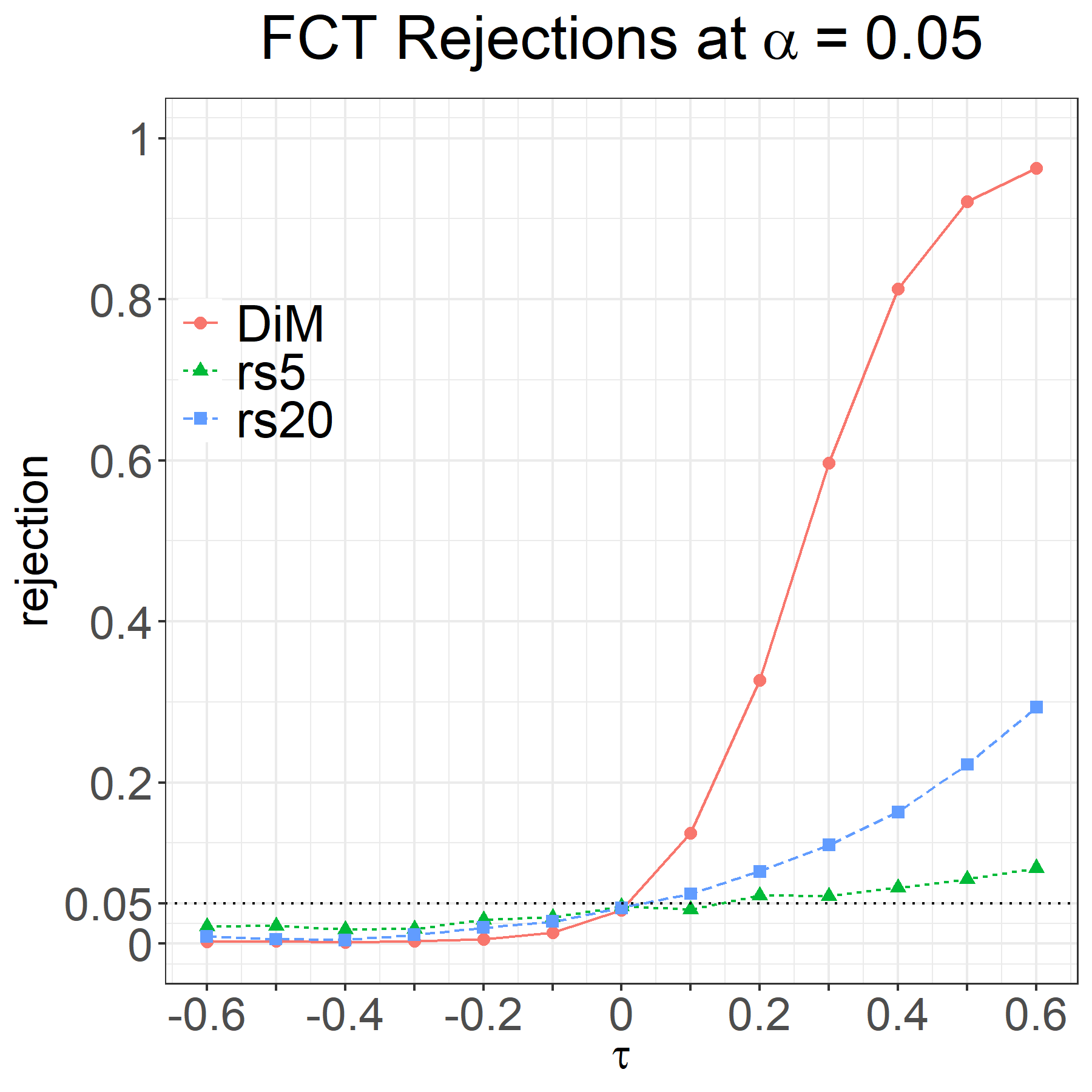}
    \includegraphics[width = 0.4\textwidth]{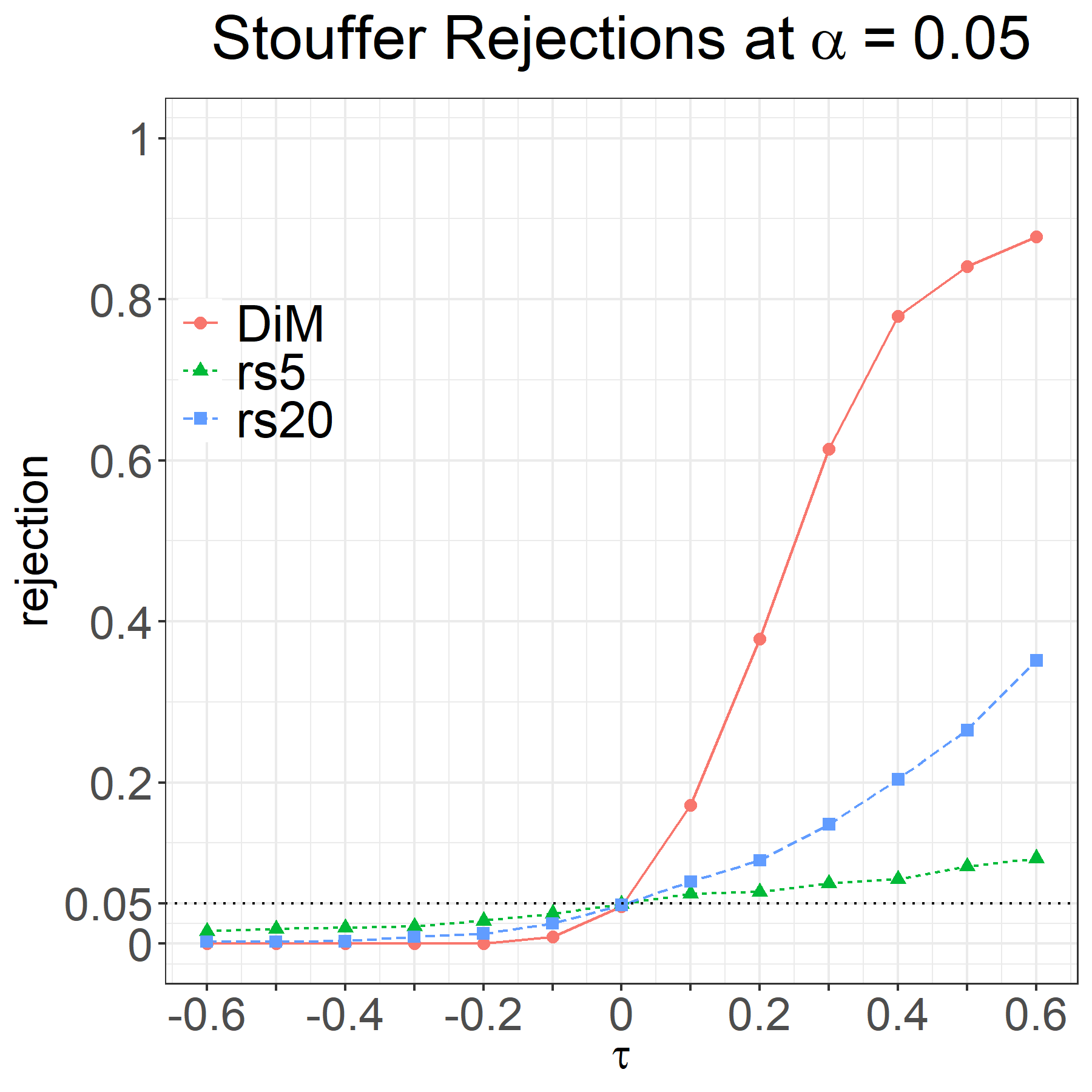}
    \includegraphics[width = 1.0\textwidth]{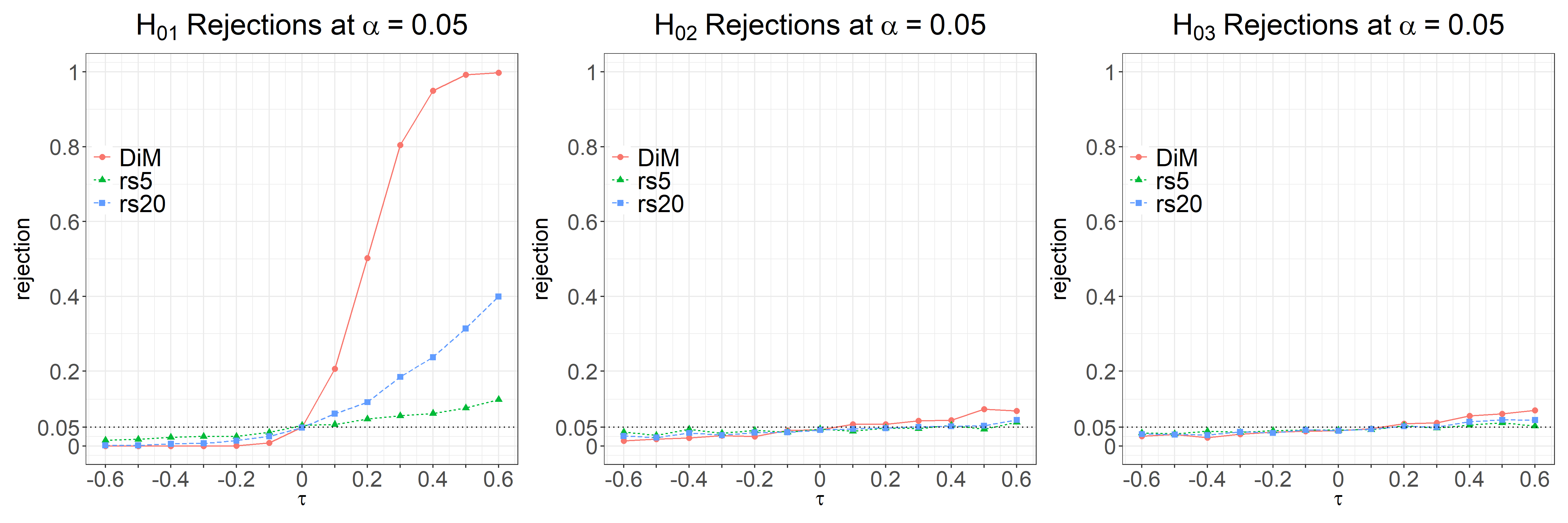}
    \caption{Leiden Splitting (Res, Beta) = $(10^{-3}, 10^{-1})$. Rejection of overall monotone hypothesis and individual hypotheses under DGP1 in Section~\ref{sec:simu_dgp}.}
    \label{fig:power_DGP1_Leiden_para1}
\end{figure}

\begin{table}[!h]
\centering
\begin{tabular}{lccccc}
  \hline
  Method & $\tau = -0.5$ & $\tau = 0$ & $\tau = 0.2$ & $\tau = 0.5$ & $\tau = 1$ \\ 
  \hline
Leiden $(10^{-3}, 10^{-1})$ & 0.31 & 4.19 & 32.74 & 92.16 & 99.12 \\ 
  Leiden $(10^{-4}, 10^{-2})$ & 0.10 & 4.14 & 30.56 & 88.45 & 97.95 \\ 
  Leiden $(10^{-3}, 10^{-3})$ & 0.21 & 3.82 & 31.21 & 91.87 & 98.72 \\ 
  Leiden $(10^{-4}, 10^{-1})$ & 0.42 & 3.38 & 29.96 & 89.22 & 97.65 \\ 
  Leiden $(10^{-3}, 10^{-2})$ & 0.16 & 3.98 & 30.54 & 91.39 & 98.53 \\ 
  Leiden $(10^{-4}, 10^{-3})$ & 0.43 & 3.91 & 29.89 & 90.06 & 98.00 \\ 
  Naive splitting & 0.30 & 3.81 & 22.88 & 77.49 & 96.71 \\ 
  Module-based test & 0.00 & 5.15 & 62.25 & 100.00 & 100.00 \\
   \hline
\end{tabular}
\caption{Simulation results for different splitting configurations under DGP1 in Section~\ref{sec:simu_dgp}. All tests are conducted at the 5\% level and 
the reported values are rejection probabilities (in \%).}
\label{tab:Leiden_power}
\end{table}

In general, there is little difference in power for different Leiden algorithm parameters. All of them are better than the naive way of splitting. Due to the lack of focal units, tests for $H_{02}$ and $H_{03}$ are of low power, making the clique-based test less powerful compared to the module-based test (Algorithm~\ref{algo:focalrand_multiple} of the main text).

\section{More Empirical Results}~\label{appdx:more_empirial}
\subsection{A check of grouping exposure levels}\label{appdx:grouping3}
In this section, we provide an empirical check for the assumption that the (controlled) potential outcomes of a street are the same for all exposure levels of at least three. Specifically, consider the following null hypothesis
\begin{equation}\label{eq:grouping3}
    H_0^{[\geq3]}:\quad y_i(0, \expov) ~=~ y_i(0, \expov'),\quad \forall \expov, \expov' \geq 3,~\forall i,
\end{equation}
where the exposure function is the number of treated neighbors within 225 meters as in the main text.
We consider the following randomization $p$-value to assess~\eqref{eq:grouping3}:
\begin{equation}\label{eq:AIC_grouping}
    \begin{split}
        &\pval(Z^\obs) = \mathbbm P_{Z \sim P(\cdot)} \big( t(Z, \Yobs) \geq t(\Zobs, \Yobs) \big),\\
        &t(Z, Y) = \mathrm{AIC}(\text{model 1}(Z,Y)) - \mathrm{AIC}(\text{model 2}(Z,Y)),
    \end{split}
\end{equation}
where $P(\cdot)$ is the design, and the test statistic is the difference in Akaike Information Criteria (AIC) between the OLS fits of the following two saturated linear models with grouping threshold $3$ (model 1) and $10$ (model 2)\footnote{From Figure~\ref{fig:Med-histdeg} of the main text, a threshold of $10$ is sufficient as there are few units with degree at least $10$.}, both fitted on non-hotspot streets only so that $Z_i=0$ for all $Z \sim P(\cdot)$:
\begin{equation}\label{eq:sat_exps_linear}
    \begin{split}
        &\text{model 1}(Z,Y):\quad Y_i = \sum_{k=0}^2 \theta_k \mathbbm 1\{\expof_i(Z) = k\} + \theta_3 \mathbbm 1\{\expof_i(Z) \geq 3\} + \varepsilon_i,\\
        &\text{model 2}(Z,Y):\quad Y_i = \sum_{k=0}^9 \theta_k \mathbbm 1\{\expof_i(Z) = k\} + \theta_{10} \mathbbm 1\{\expof_i(Z) \geq 10\} + \varepsilon_i.
\end{split}
\end{equation}
Intuitively, under the null $H_0^{[\geq3]}$ the two models are indistinguishable, so $t(\Zobs, \Yobs)$ should not lie at the tails of the randomization distribution --- the distribution of $t(Z, \Yobs)$ induced by $Z\sim P(\cdot)$ --- so that the resulting $p$-value should not be small. When $H_0^{[\geq3]}$ does not hold, however, model 2 is expected to fit the outcome better, so $t(\Zobs, \Yobs)$ is expected to lie in the right tail of the randomization distribution, giving a small $p$-value. The randomization distributions and the $p$-values for the six types of outcomes considered in the main text is presented in Figure~\ref{fig:group3_pval_dist}. We observe that the $p$-values are all moderate for all outcomes, supporting our decision to group exposure levels beyond $3$ into a single category.
\begin{figure}[!h]
    \centering
    \includegraphics[width=0.8\linewidth]{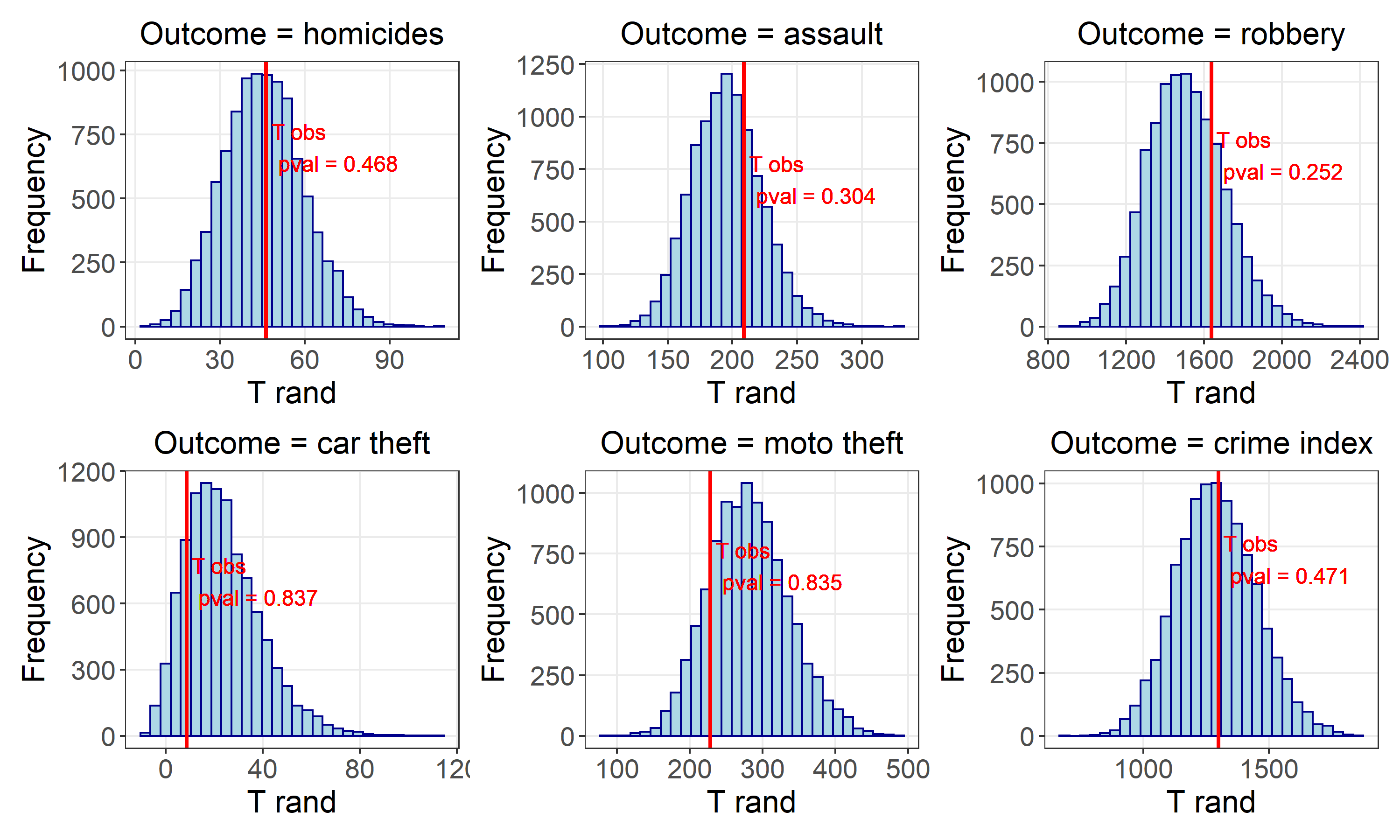}
    \caption{Randomization distributions and $p$-values from~\eqref{eq:AIC_grouping}.}
    \label{fig:group3_pval_dist}
\end{figure}
One final remark is that, instead of the AIC, one can use other (penalized) goodness-of-fit statistics, such as the difference in mean-squared errors or Bayesian Information Criteria, potentially applied to models other than the linear models considered here. 

\subsection{Confounding factors}\label{appdx:degree_medellin}
In the Medellín setting, when a street has many hotspots around, it is expected that the crime level in that street is high because of the proximity to the disturbing streets. Indeed, the two boxplots in Figure~\ref{fig:degree_boxplots} shows that under $\Zobs$, degrees differ systematically across treatment and exposure levels. This suggests that $d_i$ may act as a confounder when regressing the outcome on treatment and exposure, even if one fully saturates at exposure as in~\eqref{eq:sat_exps_linear}, and motivates the DGP3 in Section~\ref{sec:simu_dgp} of the main text.
We also note that many such confounders may exist in real-life settings, making standard inference methods unreliable.
\begin{figure}[!h]
    \centering
    \includegraphics[width = 0.3\textwidth]{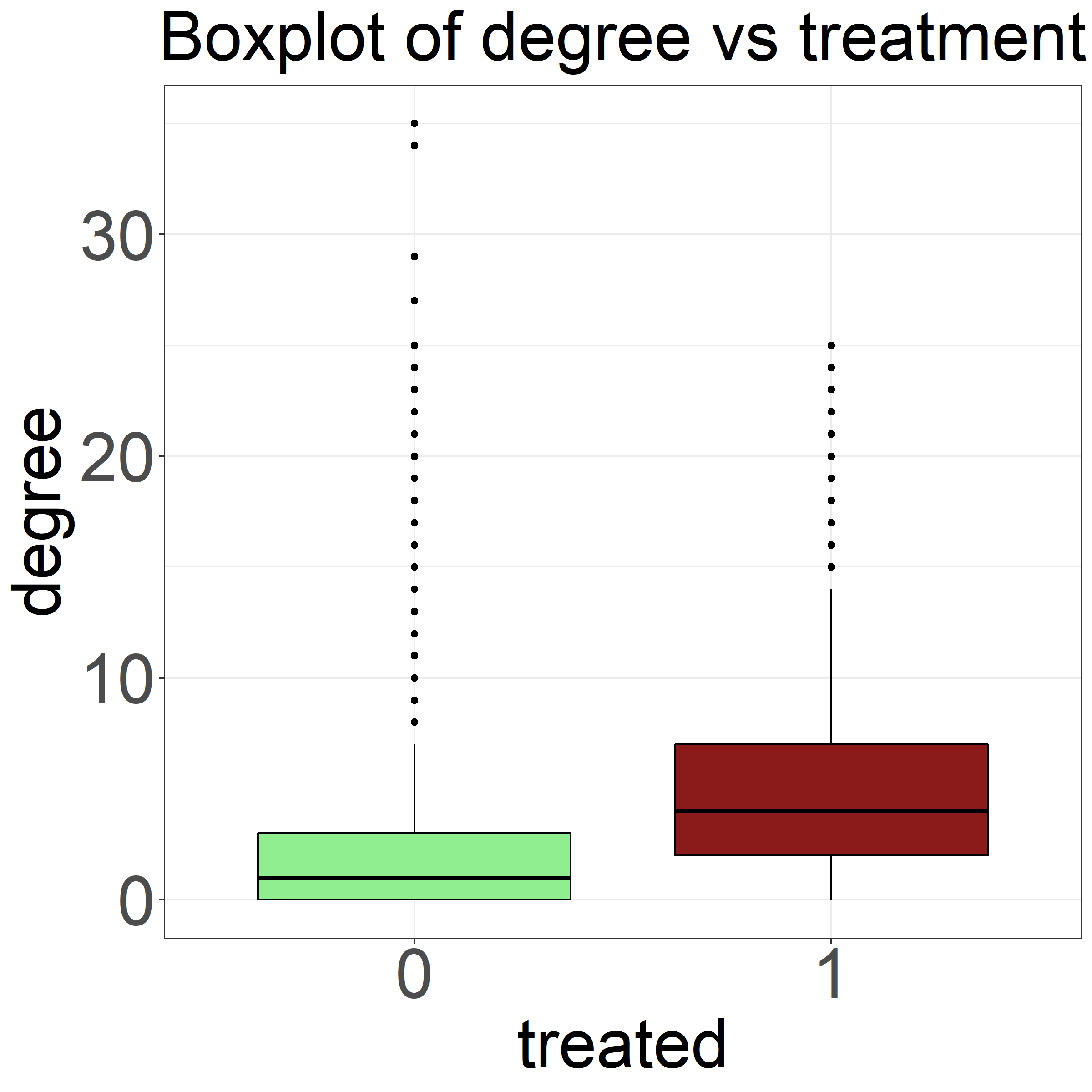}
    \includegraphics[width = 0.3\textwidth]{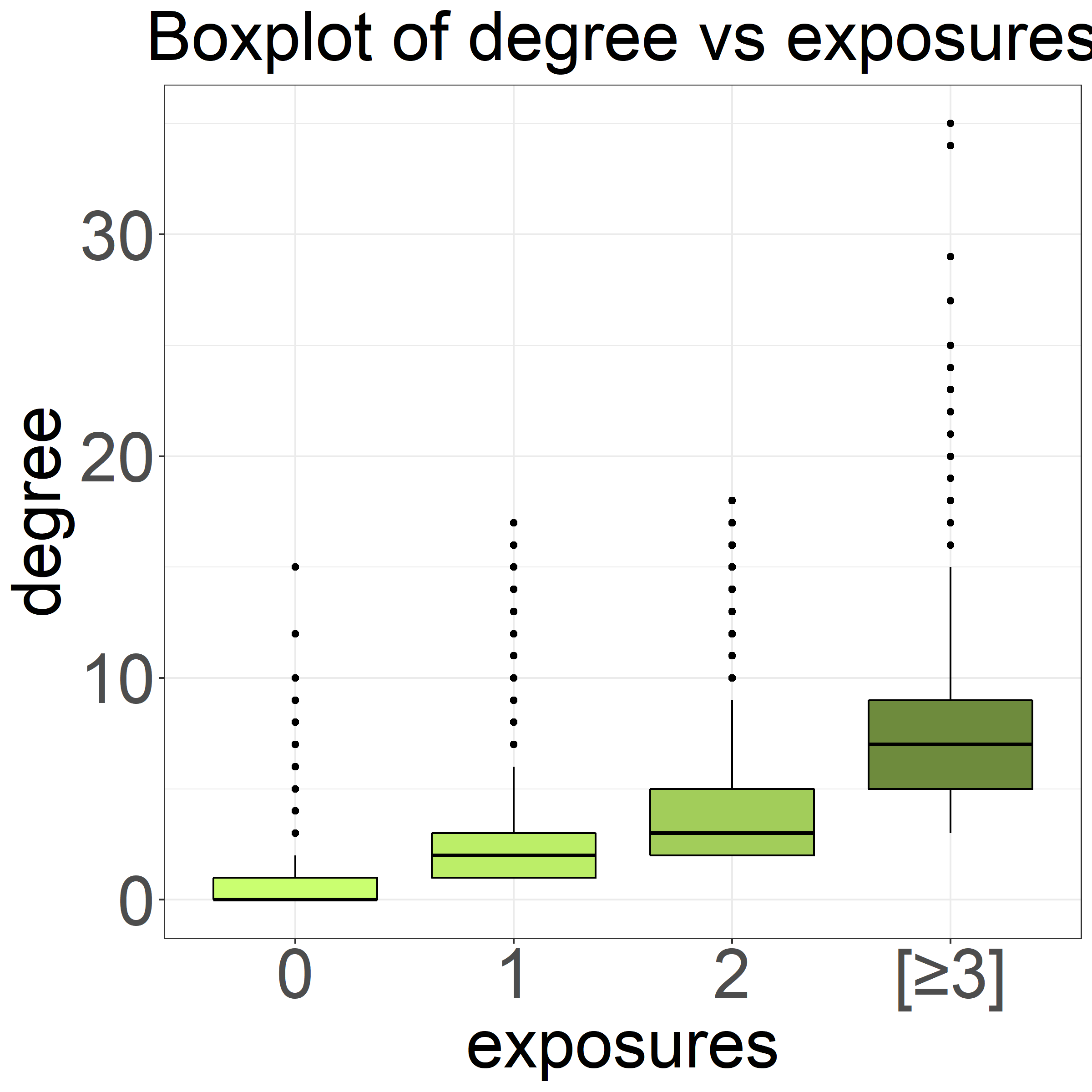}
    \caption{Boxplots of degree vs.\ treatment and exposure levels.}
    \label{fig:degree_boxplots}
\end{figure}

\subsection{More results on the monotone hypothesis}\label{appdx:more_emp_3H}
\paragraph{Results from different constructions of module sets.} 
Figure~\ref{fig:hist-dcrimeindex} shows the histograms of the $p$-values resulting from $2,000$ independent constructions of module sets for the crime index outcome. Table~\ref{tab:2medpval_deltaY} presents twice the median of the $2,000$ $p$-values for all outcomes. Both are for the beneficial spillover hypothesis~\eqref{eq:mononull_med}. We reject~\eqref{eq:mononull_med} at 5\% level for all outcomes using the rank-sum test statistic.
\begin{figure}[!h]
    \centering
    \includegraphics[width = 0.7\textwidth]{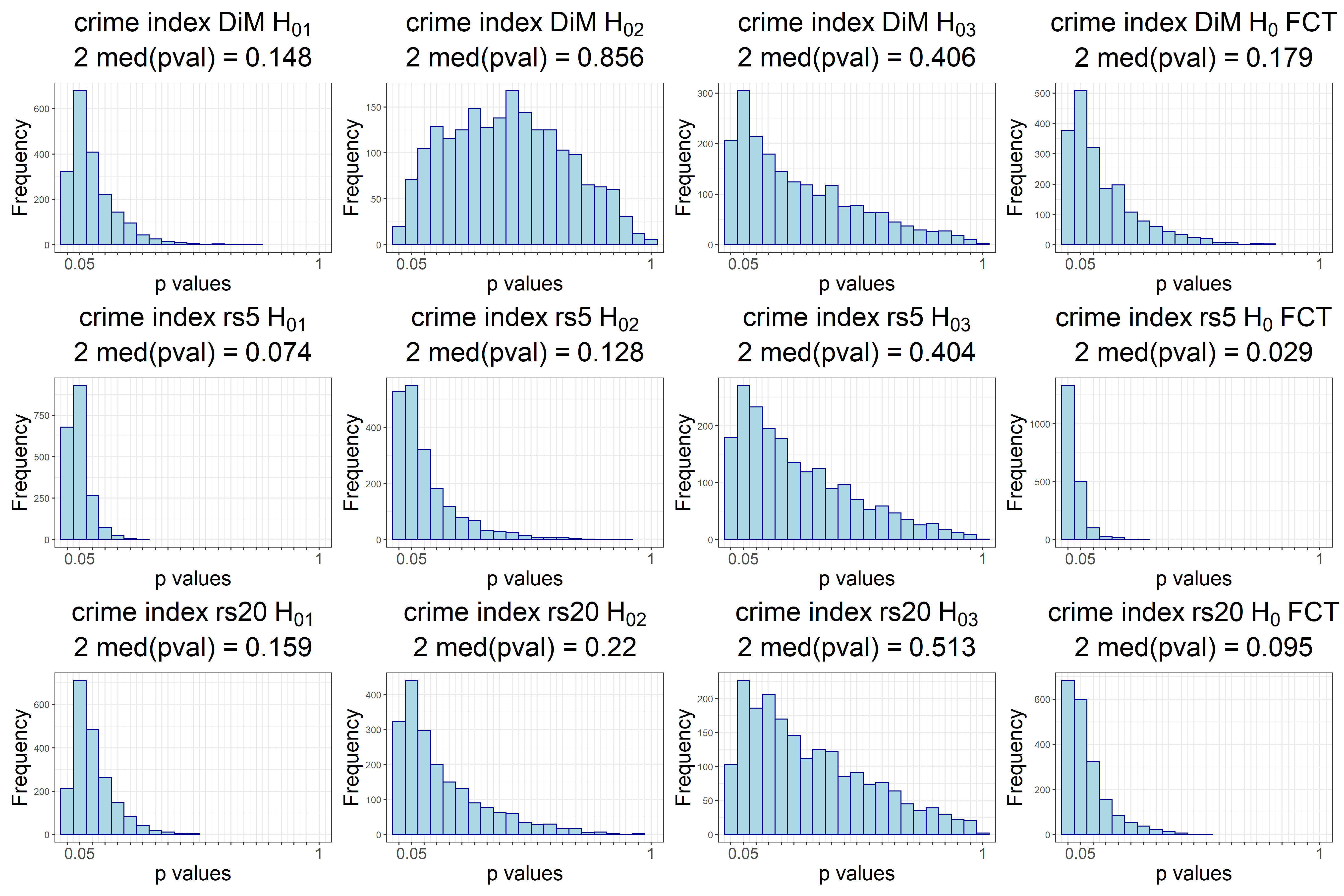}
    \caption{Histograms of $p$-values for crime index across the $2,000$ constructions.}
    \label{fig:hist-dcrimeindex}
\end{figure}

\begin{table}[!h]
\centering
\begin{tabular}{llllllll}
  \hline
Hypothesis & test stat & homicides & assault & robbery & car theft & moto theft & crime index \\ 
  \hline
$H_{01}$ & \multirow{4}{*}{DiM} & 0.1880 & 0.3741 & 1.4453 & 0.1712 & 0.6491 & 0.1480 \\ 
  $H_{02}$ &  & 1.7592 & 0.2204 & 0.4907 & 0.4739 & 0.6805 & 0.8564 \\ 
  $H_{03}$ &  & 1.0492 & 0.7910 & 0.6401 & 0.5173 & 0.5121 & 0.4063 \\ 
  $H_0$ by FCT &  & 0.6871 & 0.2251 & 0.7674 & 0.1456 & 0.4888 & 0.1790 \\ 
  \hline
  $H_{01}$ & \multirow{4}{*}{rs5} & 0.0608 & 0.0602 & 0.0844 & 0.0684 & 0.0752 & 0.0742 \\ 
  $H_{02}$ &  & 0.1418 & 0.1056 & 0.1388 & 0.1386 & 0.1210 & 0.1282 \\ 
  $H_{03}$ &  & 0.4283 & 0.4059 & 0.3675 & 0.3885 & 0.3509 & 0.4035 \\ 
  $H_0$ by FCT &  & \textbf{0.0281} & \textbf{0.0212} & \textbf{0.0308} & \textbf{0.0283} & \textbf{0.0256} & \textbf{0.0291} \\ 
  \hline
  $H_{01}$ & \multirow{4}{*}{rs20} & \textbf{0.0500} & 0.0530 & 0.2120 & 0.1016 & 0.1780 & 0.1588 \\ 
  $H_{02}$ &  & 0.1902 & 0.0868 & 0.2661 & 0.2695 & 0.1668 & 0.2200 \\ 
  $H_{03}$ &  & 0.5715 & 0.4875 & 0.5373 & 0.4827 & 0.4343 & 0.5131 \\ 
  $H_0$ by FCT &  & \textbf{0.0369} & \textbf{0.0183} & 0.1298 & 0.0651 & 0.0700 & 0.0951 \\ 
   \hline
\end{tabular}
\caption{Twice the median of $p$-values across repetitions. \textbf{Bold} denotes value below $0.05$.} 
\label{tab:2medpval_deltaY}
\end{table}

\paragraph{Results from the module sets that maximize the expected number of active focal units.}
Table~\ref{tab:crime_displacement} shows the test results for the beneficial spillover hypothesis~\eqref{eq:mononull_med} conducted on the module sets that maximize the expected number of active focal units among the $2,000$ ones for all outcomes. Additionally, Table~\ref{tab:crime_displacement2} presents the results for the crime displacement hypothesis~\eqref{eq:H0_crime_displacement} for all outcomes, conducted on the same module sets. We observe that conclusions similar to those in the main text for the crime index apply to other outcomes as well.
\begin{table}[!h]
\centering
\begin{tabular}{llllllll}
  \hline
Hypothesis & test stat & homicides & assault & robbery & car theft & moto theft & crime index \\ 
  \hline
$H_{01}$ & \multirow{4}{*}{DiM} & 0.1166 & 0.0990 & 0.5243 & \textbf{0.0358} & 0.0926 & \textbf{0.0108} \\ 
  $H_{02}$ &  & 0.8854 & 0.0620 & 0.1690 & 0.1798 & 0.6947 & 0.5379 \\ 
  $H_{03}$ &  & 0.7137 & 0.3837 & 0.1948 & 0.0746 & 0.2356 & 0.2002 \\ 
  $H_0$ by FCT &  & 0.5164 & 0.0597 & 0.2295 & \textbf{0.0182} & 0.2116 & \textbf{0.0356} \\ 
  \hline
  $H_{01}$ & \multirow{4}{*}{rs5} & \textbf{0.0102} & \textbf{0.0092} & \textbf{0.0130} & \textbf{0.0114} & \textbf{0.0106} & \textbf{0.0080} \\ 
  $H_{02}$ &  & \textbf{0.0418} & \textbf{0.0294} & 0.0558 & \textbf{0.0398} & \textbf{0.0448} & 0.0554 \\ 
  $H_{03}$ &  & 0.3059 & 0.2294 & 0.1938 & 0.2284 & 0.2144 & 0.1816 \\ 
  $H_0$ by FCT &  & \textbf{0.0065} & \textbf{0.0036} & \textbf{0.0069} & \textbf{0.0054} & \textbf{0.0053} & \textbf{0.0044} \\ 
  \hline
  $H_{01}$ & \multirow{4}{*}{rs20} & \textbf{0.0072} & \textbf{0.0054} & \textbf{0.0270} & \textbf{0.0112} & \textbf{0.0224} & \textbf{0.0108} \\ 
  $H_{02}$ &  & 0.0556 & \textbf{0.0226} & 0.1702 & 0.0834 & 0.0766 & 0.1640 \\ 
  $H_{03}$ &  & 0.4417 & 0.1516 & 0.2507 & 0.1672 & 0.2016 & 0.2797 \\ 
  $H_0$ by FCT &  & \textbf{0.0083} & \textbf{0.0013} & \textbf{0.0353} & \textbf{0.0075} & \textbf{0.0141} & \textbf{0.0186} \\ 
   \hline
\end{tabular}
\caption{Test $p$-values for the beneficial spillover hypothesis~\eqref{eq:mononull_med} using the module sets that maximize the expected number of active focal units. \textbf{Bold} denotes value below $0.05$.} 
\label{tab:crime_displacement}
\end{table}

\begin{table}[!h]
\centering
\begin{tabular}{llllllll}
  \hline
Hypothesis & test stat & homicides & assault & robbery & car theft & moto theft & crime index \\ 
  \hline
$H_{01}$ & \multirow{4}{*}{DiM} & 0.8834 & 0.9010 & 0.4757 & 0.9642 & 0.9074 & 0.9892 \\ 
  $H_{02}$ &  & 0.1146 & 0.9380 & 0.8310 & 0.8202 & 0.3053 & 0.4621 \\ 
  $H_{03}$ &  & 0.2863 & 0.6163 & 0.8052 & 0.9254 & 0.7644 & 0.7998 \\ 
  $H_0$ by FCT &  & 0.3133 & 0.9714 & 0.8913 & 0.9960 & 0.7957 & 0.9186 \\ 
  \hline
  $H_{01}$ & \multirow{4}{*}{rs5} & 0.9898 & 0.9908 & 0.9870 & 0.9886 & 0.9894 & 0.9920 \\ 
  $H_{02}$ &  & 0.9582 & 0.9706 & 0.9442 & 0.9602 & 0.9552 & 0.9446 \\ 
  $H_{03}$ &  & 0.6941 & 0.7706 & 0.8062 & 0.7716 & 0.7856 & 0.8184 \\ 
  $H_0$ by FCT &  & 0.9911 & 0.9964 & 0.9969 & 0.9960 & 0.9965 & 0.9974 \\ 
  \hline
  $H_{01}$ & \multirow{4}{*}{rs20} & 0.9928 & 0.9946 & 0.9730 & 0.9888 & 0.9776 & 0.9892 \\ 
  $H_{02}$ &  & 0.9444 & 0.9774 & 0.8298 & 0.9166 & 0.9234 & 0.8360 \\ 
  $H_{03}$ &  & 0.5583 & 0.8484 & 0.7493 & 0.8328 & 0.7984 & 0.7203 \\ 
  $H_0$ by FCT &  & 0.9720 & 0.9990 & 0.9854 & 0.9970 & 0.9954 & 0.9842 \\ 
   \hline
\end{tabular}
\caption{Test $p$-values for the crime displacement hypothesis~\eqref{eq:H0_crime_displacement} using the module sets that maximize the expected number of active focal units. \textbf{Bold} denotes value below $0.05$.} 
\label{tab:crime_displacement2}
\end{table}

\subsection{Monotone null hypothesis with six exposure levels}\label{sec:Med_5H}
Instead of grouping exposures beyond $3$ and test the monotone hypothesis in~\eqref{eq:mononull_med}, here we group exposures at a higher threshold $5$, so that $\exposet = \{0,1,2,3,4,[\geq 5]\}$, and test the following monotone hypothesis:
\begin{equation}\label{eq:mononull_med_5H}
\begin{gathered}
    H_0 = \bigcap_{k \in [5]} H_{0k}, ~\text{where}\\
    H_{01}: y_i(0, 0) \geq y_i(0, 1),\quad H_{02}: y_i(0, 1) \geq y_i(0, 2), \quad H_{03}: y_i(0, 2)\geq y_i(0, 3),\\
    \quad H_{04}: y_i(0, 3)\geq y_i(0, 4), \quad H_{05}: y_i(0, 4)\geq y_i(0, [\geq 5]), \quad \forall i.
\end{gathered}
\end{equation}
We are still testing the beneficial (harmless) hypothesis that when more nearby units are treated the crime in the current street is lower. We again adjust $\Delta Y_i = Y_i^{\text{post}} - Y_i^{\text{pre}}$ for the five types of crime and the crime index, and consider the difference-in-means and the Stephenson rank sum test statistics with $s=5, 20$, with $p$-values combined by Fisher's rule.

Figure~\ref{fig:hist-fu-5H} shows the histograms of the number of eligible and active focal units across the $2,000$ random constructions of module sets, and Table~\ref{tab:2medpval_deltaY_5H} shows twice the median of $p$-values resulting from applying Algorithm~\ref{algo:focalrand_multiple} on these $2,000$ module sets. Two of the $2,000$ realizations are presented in Figure~\ref{fig:5H_realization}. 
We observe similar patterns as in testing four exposure levels in Section~\ref{sec:realMed}, that focal units used to test hypotheses involving lower exposure contrast tend to spread at the outskirts, while focal units for hypotheses involving higher exposure contrast tend to be in the center of the city.
Table~\ref{tab:maxExAFUs_deltaY_5H} reports the result when we apply the test on the module sets with the largest expected number of active focal units. The results from Table~\ref{tab:maxExAFUs_deltaY_5H} still reject the beneficial null~\eqref{eq:mononull_med_5H}, and the rejection signals again mainly come from individual contrast hypotheses involving lower exposure levels. In this case, however, twice the median of $p$-values across repetitions become less powerful. 

\begin{figure}[!ht]
    \centering
    \includegraphics[width=0.7\linewidth]{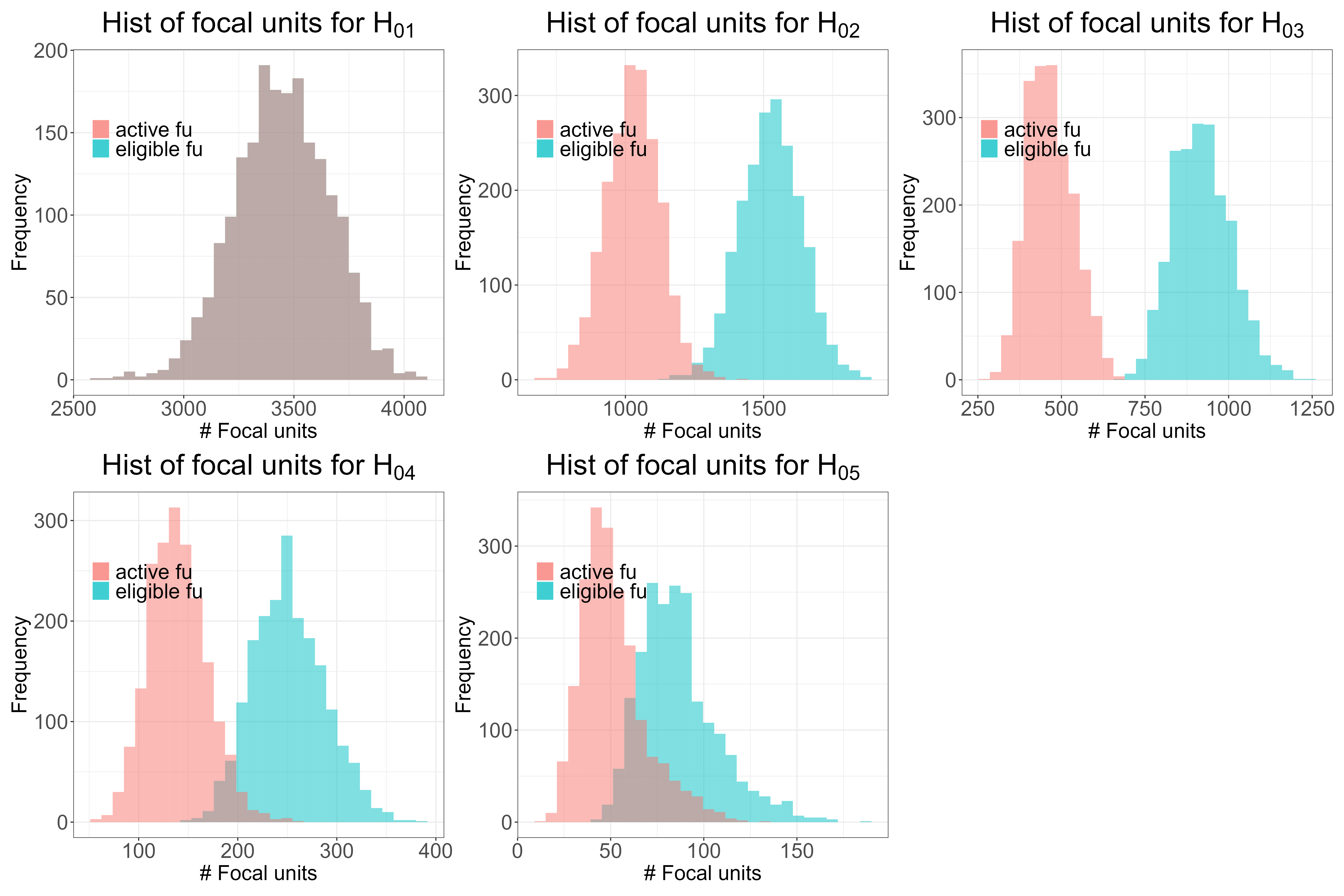}
    \caption{Histograms of number of focal units across the $2,000$ repetitions, for the null~\eqref{eq:mononull_med_5H}.}
    \label{fig:hist-fu-5H}
\end{figure}

\begin{figure}[!ht]
    \centering
    \includegraphics[width = 0.4\textwidth]{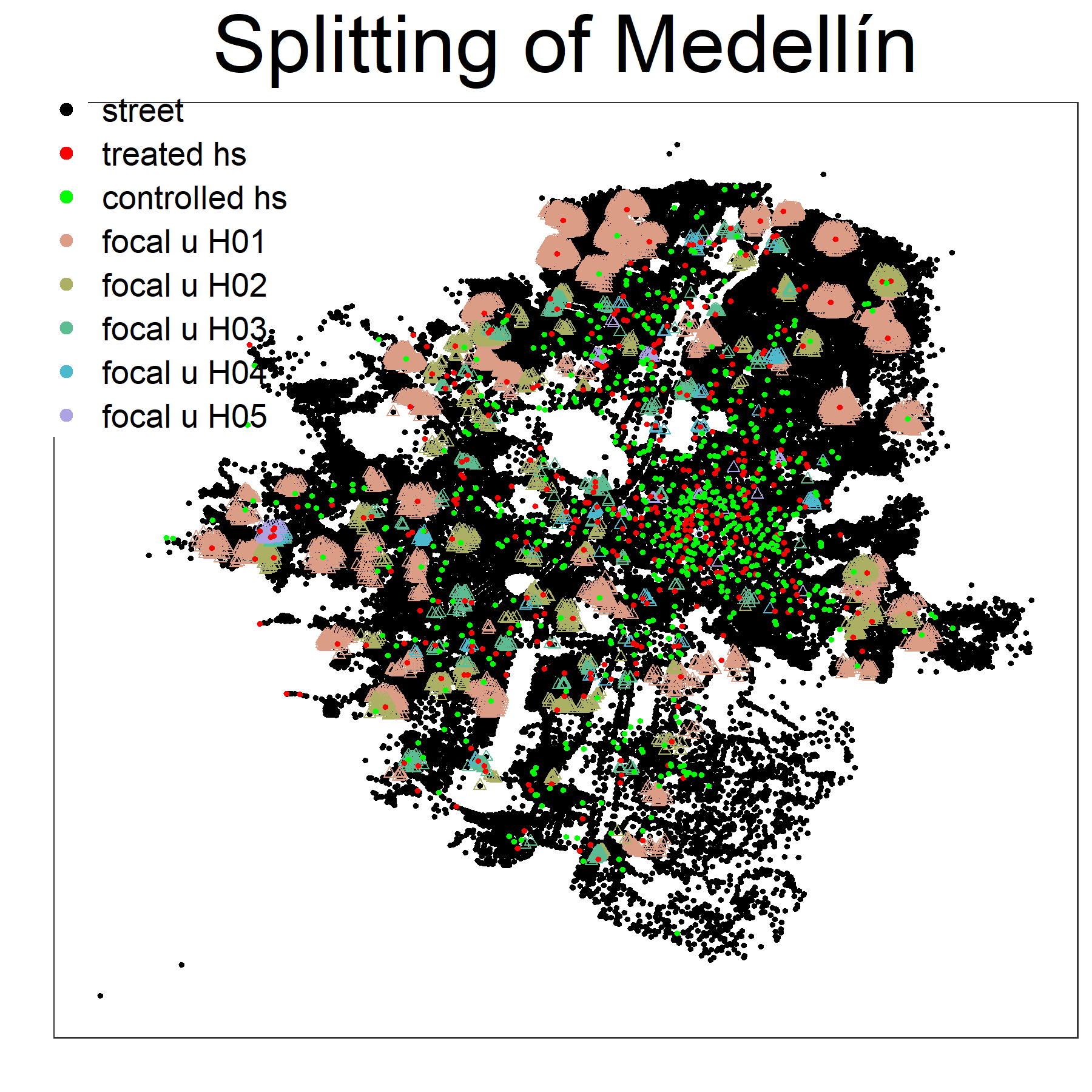}
    \includegraphics[width = 0.4\textwidth]{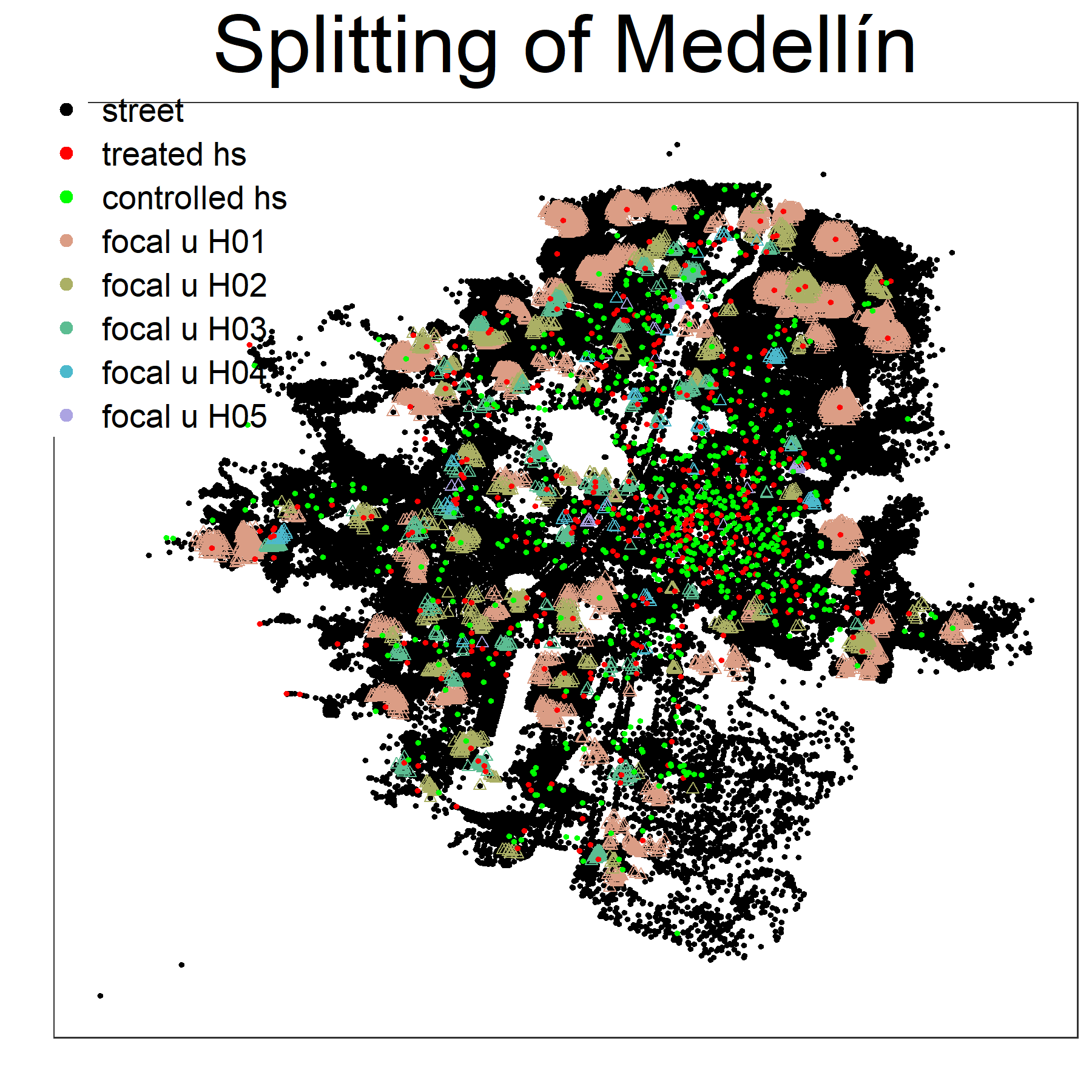}
    \caption{Two realizations of focal units for testing under $\exposet = \{0, 1, 2, 3, 4, [\geq5]\}$.}
    \label{fig:5H_realization}
\end{figure}

\begin{table}[ht]
\centering
\begin{tabular}{llllllll}
  \hline
Hypothesis & test stat & homicides & assault & robbery & car theft & moto theft & crime index \\ 
  \hline
$H_{01}$ & \multirow{6}{*}{DiM} & 0.2559 & 0.3885 & 1.4425 & 0.2190 & 0.5819 & 0.1750 \\ 
  $H_{02}$ &  & 1.7962 & 0.1744 & 0.4449 & 0.4027 & 0.7315 & 0.8566 \\ 
  $H_{03}$ &  & 1.2304 & 0.7407 & 0.3919 & 0.3747 & 0.8376 & 0.5235 \\ 
  $H_{04}$ &  & 1.1424 & 1.1740 & 0.4782 & 0.9213 & 0.5259 & 0.5455 \\ 
  $H_{05}$ &  & 1.1852 & 1.1622 & 1.3965 & 1.2170 & 1.2034 & 1.3709 \\ 
  $H_0$ by FCT &  & 1.0546 & 0.3616 & 0.5379 & 0.2351 & 0.6006 & 0.3128 \\ 
  \hline
  $H_{01}$ & \multirow{6}{*}{rs5} & 0.0860 & 0.0856 & 0.1118 & 0.0928 & 0.0996 & 0.0988 \\ 
  $H_{02}$ &  & 0.1438 & 0.1046 & 0.1392 & 0.1372 & 0.1252 & 0.1308 \\ 
  $H_{03}$ &  & 0.3763 & 0.3609 & 0.2875 & 0.3297 & 0.3349 & 0.3693 \\ 
  $H_{04}$ &  & 1.0216 & 1.2274 & 1.0192 & 1.0564 & 0.9492 & 0.9618 \\ 
  $H_{05}$ &  & 0.8530 & 0.8204 & 1.0296 & 0.8250 & 0.8656 & 0.9980 \\ 
  $H_0$ by FCT &  & 0.0580 & 0.0523 & 0.0590 & 0.0526 & 0.0517 & 0.0610 \\ 
  \hline
  $H_{01}$ & \multirow{6}{*}{rs20} & 0.0748 & 0.0776 & 0.2272 & 0.1224 & 0.1902 & 0.1784 \\ 
  $H_{02}$ &  & 0.1936 & 0.0872 & 0.2613 & 0.2426 & 0.1754 & 0.2509 \\ 
  $H_{03}$ &  & 0.4795 & 0.4223 & 0.3131 & 0.3913 & 0.4269 & 0.4421 \\ 
  $H_{04}$ &  & 0.9496 & 1.4525 & 1.1354 & 1.1928 & 0.9202 & 0.9762 \\ 
  $H_{05}$ &  & 0.8760 & 0.8258 & 1.1994 & 0.8004 & 0.8858 & 1.1252 \\ 
  $H_0$ by FCT &  & 0.0764 & 0.0618 & 0.1845 & 0.1069 & 0.1261 & 0.1660 \\ 
   \hline
\end{tabular}
\caption{Twice the median of $p$-values across repetitions for the null~\eqref{eq:mononull_med_5H}.} 
\label{tab:2medpval_deltaY_5H}
\end{table}

\begin{table}[ht]
\centering
\begin{tabular}{llllllll}
  \hline
Hypothesis & test stat & homicides & assault & robbery & car theft & moto theft & crime index \\ 
  \hline
$H_{01}$ & \multirow{6}{*}{DiM} & 0.1566 & 0.2352 & 0.7564 & 0.0980 & 0.0614 & \textbf{0.0346} \\ 
  $H_{02}$ &  & 0.7097 & \textbf{0.0466} & 0.0816 & 0.0548 & 0.6769 & 0.2649 \\ 
  $H_{03}$ &  & 0.9242 & 0.4089 & 0.3915 & 0.2613 & 0.1602 & 0.3707 \\ 
  $H_{04}$ &  & 0.8126 & 0.6397 & 0.1552 & 0.1718 & 0.5839 & 0.4265 \\ 
  $H_{05}$ &  & 0.3877 & 0.4073 & 0.8802 & 0.3885 & 0.1872 & 0.3825 \\ 
  $H_0$ by FCT &  & 0.7384 & 0.1967 & 0.3252 & \textbf{0.0463} & 0.1533 & 0.1322 \\ 
  \hline
  $H_{01}$ & \multirow{6}{*}{rs5} & 0.0732 & 0.0754 & 0.0936 & 0.0804 & 0.0872 & 0.0804 \\ 
  $H_{02}$ &  & \textbf{0.0016} & \textbf{0.0012} & \textbf{0.0008} & \textbf{0.0014} & \textbf{0.0016} & \textbf{0.0008} \\ 
  $H_{03}$ &  & 0.1846 & 0.1510 & 0.1212 & 0.1530 & 0.1152 & 0.1246 \\ 
  $H_{04}$ &  & 0.4359 & 0.5433 & 0.3847 & 0.3973 & 0.3999 & 0.3909 \\ 
  $H_{05}$ &  & 0.3077 & 0.2308 & 0.4267 & 0.3349 & 0.2314 & 0.2705 \\ 
  $H_0$ by FCT &  & \textbf{0.0045} & \textbf{0.0031} & \textbf{0.0028} & \textbf{0.0038} & \textbf{0.0028} & \textbf{0.0018} \\ 
  \hline
  $H_{01}$ & \multirow{6}{*}{rs20} & 0.0566 & 0.0712 & 0.1790 & 0.1050 & 0.1830 & 0.1246 \\ 
  $H_{02}$ &  & \textbf{0.0020} & \textbf{0.0006} & \textbf{0.0030} & \textbf{0.0006} & \textbf{0.0070} & \textbf{0.0022} \\ 
  $H_{03}$ &  & 0.2478 & 0.1332 & 0.1574 & 0.1732 & 0.0866 & 0.1398 \\ 
  $H_{04}$ &  & 0.4359 & 0.7129 & 0.3567 & 0.4397 & 0.4213 & 0.3321 \\ 
  $H_{05}$ &  & 0.3163 & 0.0646 & 0.7343 & 0.4503 & 0.1484 & 0.2671 \\ 
  $H_0$ by FCT &  & \textbf{0.0055} & \textbf{0.0008} & \textbf{0.0182} & \textbf{0.0036} & \textbf{0.0083} & \textbf{0.0050} \\ 
   \hline
\end{tabular}
\caption{Test $p$-values for the null~\eqref{eq:mononull_med_5H} using the module sets that maximize the expected number of active focal units. \textbf{Bold} denotes value below $0.05$.}
\label{tab:maxExAFUs_deltaY_5H}
\end{table}

\end{document}